\definecolor{light-gray}{gray}{0.9}
\definecolor{darkgreen}{rgb}{0.0,0.6,0.0}
\newcommand\secref[1]{Sect.~\ref{#1}}
\newcommand\appref[1]{App.~\ref{#1}}
\newcommand\defref[1]{Def.~\ref{#1}}
\newcommand\figref[1]{Fig.~\ref{#1}}
\newcommand\thmref[1]{Thm.~\ref{#1}}
\newcommand\lemref[1]{Lemma~\ref{#1}}
\newcommand\propref[1]{Prop.~\ref{#1}}
\newcommand{\wrt}{{{w.r.t.\@}}}
\newcommand{\eg}{{{e.g.,~}}}
\newcommand{\ie}{{{i.e.,~}}}
\def\extended{ }
\newcommand{\ifextended}[2]{\ifdefined\extended{#1}\else{#2}\fi}
\def\nocolour{ }
\newcommand{\depsOnFibonacci}[1]{#1}
\newcommand{\proofsketch}[1]{\begin{proof}#1\end{proof}}
\newcommand{\peter}[1]{\ifdefined\nocolour{#1}\else{\color{blue}{#1}}\fi}
\newcommand{\thibault}[1]{\ifdefined\nocolour{{#1}}\else{\color{darkgreen}{#1}}\fi}
\newcommand{\isabelle}{Isabelle/HOL}
\newcommand{\proven}{proved}
\definecolor{darkred}{rgb}{0.55, 0.0, 0.0}
\newskip \point
\def \premisespacing{\quad}
\def \RulePremisesNewlineMore[#1]#2.#3#4{\@ifnextchar\bgroup{\RulePremisesNewlineMore[#1]{#2}.{#3\premisespacing#4}}{\@ifnextchar.{\RulePremisesNewline[#1]{{\begin{array}{c}#2\\#3\premisespacing#4\end{array}}}}{\RuleMultiPremise[#1]{{\begin{array}{c}#2\\#3\end{array}}}{#4}}}}
\def \RulePremisesNewline[#1]#2.#3{\@ifnextchar\bgroup{\RulePremisesNewlineMore[#1]{#2}.{#3}}{\@ifnextchar.{\RulePremisesNewline[#1]{{\begin{array}{c}#2\\#3\end{array}}}}{\RuleMultiPremise[#1]{#2}{#3}}}}
\def \RuleMultiPremise[#1]#2#3{\@ifnextchar\bgroup{\RuleMultiPremise[#1]{#2\premisespacing#3}}{\@ifnextchar.{\RulePremisesNewline[#1]{#2\premisespacing#3}}{\prooftree #2\justifies#3 \using{#1}\endprooftree}}}
\def \RuleWithName[#1]#2{\@ifnextchar\bgroup {\RuleMultiPremise[#1]{#2}}{\@ifnextchar.{\RulePremisesNewline[#1]{#2}}{\prooftree \justifies #2 \using{#1} \endprooftree}}}
\def \RuleWithInfo[#1]{\@ifnextchar[{\RuleWithNameAndCondition[#1]}{\RuleWithName[(#1)]}}
\def \RuleWithNameAndCondition[#1][#2]{\RuleWithName[(#1)^{#2}]}
\def \Inf{\proofrulebaseline=2ex \abovedisplayskip12\point\belowdisplayskip12\point \abovedisplayshortskip8\point\belowdisplayshortskip8\point \@ifnextchar[{\RuleWithInfo}{\RuleWithName[ ]}}
\newcommand{\outline }[1]{\ensuremath{\blue{ \left\{ #1 \right\} }}}
\newcommand{\outlineStart }[1]{\ensuremath{\blue{ \{ #1}}}
\newcommand{\outlineEnd }[1]{\ensuremath{\blue{ #1 \} }}}
\newcommand{\simpleHoare}[3]{\ensuremath{ \blue{ \{ #1 \} } \; #2 \; \blue{ \{ #3 \} } }}
\newcommand{\hoare}[3]{\ensuremath{{\models} \simpleHoare{#1}{#2}{#3} }}
\newcommand{\shoare}[3]{\ensuremath{{\vdash} \simpleHoare{#1}{#2}{#3} }}
\newcommand{\simpleHoareSmall}[3]{\simpleHoare{#1}{#2}{#3}}
\newcommand{\hoareSmall}[3]{\ensuremath{{\models} \simpleHoareSmall{#1}{#2}{#3} }}
\newcommand{\normalHoare}[3]{\ensuremath{\brown{\{ #1 \}} \; #2 \; \brown{\{ #3 \}} }}
\newcommand{\nhoare}[4]{\ensuremath{{\models_{\mathit{#1}}} \normalHoare{#2}{#3}{#4}}}
\newcommand{\nhoareSmall}[4]{\ensuremath{{\models_{\mathit{#1}}} \normalHoare{#2}{#3}{#4}}}
\newtheorem{definition}{Definition}
\newtheorem{notation}{Notation}
\newtheorem{example}{Example}
\newtheorem{lemma}{Lemma}
\newtheorem{theorem}{Theorem}
\newtheorem{proposition}{Proposition}
\newcommand{\cskip}{ \mathbf{skip} }
\newcommand{\cifonly}[2]{ \mathbf{if} \; (#1) \; \{#2\} }
\newcommand{\cif}[3]{ \mathbf{if} \; (#1) \; \{#2\} \; \mathbf{else} \; \{#3\} }
\newcommand{\cnif}[2]{#1 + #2}
\newcommand{\cseq}{ \mathbf{;} \; }
\newcommand{\chavoc}[1]{ \mathit{#1} \coloneqq{} \mathit{nonDet()}}
\newcommand{\cwhile}[2]{ \mathbf{while} \; (#1) \; \{#2\} }
\newcommand{\cnwhile}[1]{#1^*}
\newcommand{\cassume}[1]{\mathbf{assume} \; \mathit{#1}}
\newcommand{\cassign}[2]{\mathit{#1} \coloneqq \mathit{#2}}
\newcommand{\vars}{\mathit{PVars}}
\newcommand{\vals}{\mathit{PVals}}
\newcommand{\lvars}{\mathit{LVars}}
\newcommand{\lvals}{\mathit{LVals}}
\newcommand{\states}{\mathit{PStates}}
\newcommand{\extstates}{\mathit{ExtStates}}
\newcommand{\vecto}[1]{
    \vv{\vphantom{d}#1}
}
\newcommand{\bigstep}[3]{ \langle #1, #2 \rangle \rightarrow #3 }
\newcommand{\bigstepK}[4]{ \langle #1, \vecto{#2} \rangle \overset{#4}{\rightarrow} \vecto{#3} }
\newcommand{\logic}{Hyper Hoare Logic}
\newcommand{\sem}{\mathit{sem}}
\newcommand{\powerset}[1]{\mathbb{P}(#1)}
\newcommand{\hyperassertion}{hyper-assertion}
\newcommand{\hyperassertions}{hyper-assertions}
\newcommand{\lproj}[1]{#1^L}
\newcommand{\pproj}[1]{#1^P}
\newcommand{\sat}[2]{#2(#1)}
\newcommand{\namerule}[1]{\textit{#1}}
\newcommand{\phyperpropertyfont}[1]{\mathcal{#1}}
\newcommand{\hypersat}[2]{#1 \in \phyperpropertyfont{#2}}
\newcommand{\freevars}[1]{\mathit{fv}(#1)}
\newcommand{\checked}{\textcolor{green!75!black}{\ding{51}\xspace} \hfill}
\newcommand{\range}[2]{[#1, #2]}
\definecolor{royalblue}{rgb}{0.25, 0.41, 0.88}
\newcommand{\blue}[1]{{\color{royalblue}{#1}}}
\definecolor{somebrown}{rgb}{0.8, 0.35, 0.1}
\newcommand{\brown}[1]{{\color{somebrown}{#1}}}
\newcommand{\none}{\ensuremath{\varnothing}}
\newcommand{\mypar}[1]{\paragraph{#1}}
\newcommand{\inSet}[1]{\ensuremath{ \langle #1 \rangle }}
\newcommand{\always}{\ensuremath{ \square }}
\newcommand{\otimesn}{\ensuremath{ \textstyle \bigotimes\nolimits_{n \in \mathbb{N}}}}
\newcommand{\transformAssume}[2]{\ensuremath{ \mathit{\Pi}_{#1}[#2] }}
\newcommand{\transformAssumeEmpty}[1]{\ensuremath{ \mathit{\Pi}_{#1} }}
\newcommand{\transformAssignEmpty}[2]{\ensuremath{ \mathcal{A}^{#1}_{#2} }}
\newcommand{\transformAssign}[3]{\ensuremath{ \mathcal{A}^{#1}_{#2}[ #3 ] }}
\newcommand{\transformHavoc}[2]{\ensuremath{ \mathcal{H}_{#1}[#2] }}
\newcommand{\transformHavocEmpty}[1]{\ensuremath{ \mathcal{H}_{#1} }}
\newcommand{\expApplied}[2]{\ensuremath{ #1(#2) }}
\newcommand{\hasMin}[1]{\ensuremath{ \mathit{hasMin}_{#1} }}
\newcommand{\mono}[2]{\ensuremath{ \mathit{mono}_{#2}^{#1} }}
\newcommand{\gni}[2]{\ensuremath{ \mathit{GNI}_{#1}^{#2} }}
\newcommand{\emp}{\ensuremath{ \mathit{emp} }}
\newcommand{\low}[1]{\ensuremath{ \mathit{low}(#1) }}
\newcommand{\logEntails}[1]{\ensuremath{ \overset{#1}{\Rightarrow} }}
\newcommand{\singleton}{\ensuremath{ \mathit{isSingleton} }}
\newcommand{\invariant}[2]{\ensuremath{ \mathit{inv}^{#1}(#2) }}
\newcommand{\hoareTot}[3]{\ensuremath{\models_{\Downarrow} \simpleHoare{#1}{#2}{#3} }}
\newcommand{\shoareTot}[3]{\ensuremath{\vdash_{\Downarrow} \simpleHoare{#1}{#2}{#3} }}
\newcommand{\getvar}[2]{\ensuremath{#1(#2)}}
\newcommand{\concat}{\ensuremath{\; {++} \;}}
\newcommand{\simp}{\ensuremath{{\Rightarrow}}}
\newcommand{\sand}{\ensuremath{{\land}}}
\newcommand{\sminus}{\ensuremath{{-}}}
\newcommand{\seq}{\ensuremath{{=}}}
\newcommand{\sgeq}{\ensuremath{{\geq}}}
\newcommand{\sline}{\\[-4pt]}
\newcommand{\satSyntactic}[4]{ \ensuremath{#1, #2, #3 \models #4} }
\newcommand{\evalExp}[3]{ \ensuremath{ \llbracket #1 \rrbracket^{#2}_{#3} } }
\begin{document}

\title{Hyper Hoare Logic: (Dis-)Proving Program Hyperproperties}
\subtitle{(Extended version)}

\author{Thibault Dardinier}
\orcid{0000-0003-2719-4856}             %
\affiliation{
  \department{Department of Computer Science}              %
  \institution{ETH Zurich}            %
  \city{Zurich}
  \country{Switzerland}                    %
}
\email{thibault.dardinier@inf.ethz.ch}          %

\author{Peter Müller}
\orcid{0000-0001-7001-2566}             %
\affiliation{
  \department{Department of Computer Science}             %
  \institution{ETH Zurich}           %
  \city{Zurich}
  \country{Switzerland}                   %
}
\email{peter.mueller@inf.ethz.ch}         %

\begin{abstract}
Hoare logics are proof systems that allow one to formally establish properties of computer programs.
Traditional Hoare logics prove properties of individual program executions
(such as functional correctness).
Hoare logic has been generalized to prove also properties of multiple executions of a program (so-called hyperproperties, such as determinism or non-interference).
These program logics prove the \emph{absence} of (bad combinations of) executions.
On the other hand, program logics similar to Hoare logic have been proposed to \emph{disprove} program properties (e.g., Incorrectness Logic), by proving the \emph{existence} of (bad combinations of) executions.
All of these logics have in common that they specify program properties using assertions over a fixed number of states, for instance, a single pre- and post-state for functional properties or pairs of pre- and post-states for non-interference.

In this paper, we present \logic, a generalization of Hoare logic that lifts assertions to properties of arbitrary \emph{sets} of states.
The resulting logic is simple yet expressive: its judgments can express arbitrary
\thibault{\emph{program hyperproperties},
a particular class of hyperproperties over the set of terminating executions of a program
(including properties of individual program executions)}.
By allowing assertions to reason about sets of states, \logic{} can reason about both the \emph{absence} and the \emph{existence} of (combinations of) executions, and, thereby, supports both proving and disproving program (hyper-)properties within the same logic, including (hyper-)properties that no existing Hoare logic can express. We prove that \logic{} is sound and complete, 
and demonstrate that it captures important proof principles naturally.
All our technical results have been proved in Isabelle/HOL.
\end{abstract}

\begin{CCSXML}
<ccs2012>
<concept>
<concept_id>10003752.10003790.10002990</concept_id>
<concept_desc>Theory of computation~Logic and verification</concept_desc>
<concept_significance>500</concept_significance>
</concept>
<concept>
<concept_id>10003752.10003790.10011741</concept_id>
<concept_desc>Theory of computation~Hoare logic</concept_desc>
<concept_significance>500</concept_significance>
</concept>
</ccs2012>
\end{CCSXML}

\ccsdesc[500]{Theory of computation~Logic and verification}
\ccsdesc[500]{Theory of computation~Hoare logic}

\keywords{Hyperproperties, Program Logic, Incorrectness Logic, Non-Interference}  %

\maketitle

\section{Introduction}
\label{sec:introduction}
Hoare Logic~\cite{FloydLogic, HoareLogic}
is a logic designed to formally prove functional correctness of computer programs.
It enables proving judgments (so-called \emph{Hoare triples})
of the form $\normalHoare{P}{C}{Q}$, where $C$ is a program command,
and $P$ (the \emph{precondition}) and $Q$ (the \emph{postcondition}) are assertions over execution states.
The Hoare triple $\normalHoare{P}{C}{Q}$ is valid if and only if executing $C$ in an initial state that satisfies $P$
can only lead to  final states that satisfy $Q$.

Hoare Logic is widely used to prove the absence of runtime errors, functional correctness, resource bounds, etc.\ All of these properties have in common that they are properties of \emph{individual} program executions.
However, classical Hoare Logic cannot reason about  properties of \emph{multiple} program executions (so-called \emph{hyperproperties}~\cite{hyperproperties}), such as determinism (executing the program twice in the same initial state results in the same final state) or information flow security, which is often phrased as non-interference~\cite{volpano1996nonInterference}
(executing the program twice with the same low-sensitivity inputs results in the same low-sensitivity outputs). To overcome such limitations and to reason about more types of properties, Hoare Logic has been extended and adapted in various ways. We refer to those extensions and adaptations collectively as \emph{Hoare logics}.

Among them are several logics that can establish properties of two~\cite{Benton04,francez1983product,SurveyRHL,RelationalSL,aguirre2017relational,next700RHL,AmtoftSIF,Costanzo2014,SecCSL,CommCSL} or even $k$~\cite{CHL16,HypersafetyCompositionally} executions of the same program, where $k > 2$ is useful for properties such as transitivity and associativity. \emph{Relational Hoare logics} are able to prove \emph{relational properties}, \ie properties relating  executions of two (potentially different) programs, for instance, to prove program equivalence.

All of these logics have in common that they can prove only properties that hold \emph{for all} (combinations of) executions, that is, they prove the \emph{absence} of bad (combinations of) executions; to achieve that, their judgments \emph{overapproximate} the possible executions of a program. Overapproximate logics cannot prove the \emph{existence} of (combinations of) executions, and thus cannot establish certain interesting  program properties, such as the presence of a bug or non-determinism.

To overcome this limitation, recent work~\cite{ReverseHL,IncorrectnessLogic,ISL,CISL,InsecurityLogic} proposed Hoare logics that can prove the \emph{existence} of (individual) executions, for instance, to \emph{disprove} functional correctness.
We call such Hoare logics  \emph{underapproximate}.
Tools based on underapproximate Hoare logics have proven useful for finding bugs on an industrial scale~\cite{Blackshear2018,Gorogiannis2019,Distefano2019,Le2022}.
More recent work has proposed Hoare logics that combine underapproximate and overapproximate reasoning
\thibault{for single executions,
such as Outcome Logic~\cite{OutcomeLogic} and Exact Separation Logic~\cite{maksimovic2022exact},
and for $\forall^* \exists^*$-hyperproperties,
such as RHLE~\cite{RHLE}.}
\peter{Another recent work, BiKAT~\cite{BiKat},
can be used to prove
$\forall \exists$-properties between two programs $C_1$ and $C_2$,
by providing an \emph{alignment witness} $W$ that overapproximates the behavior of $C_1$ and underapproximates the behavior of $C_2$,
and proving that $W$ satisfies a relevant $\forall \forall$-property,
which the authors demonstrate by providing inference rules
for foward and backward simulations ($\forall \exists$-properties).
BiKAT can also in principle be used to prove $\exists \forall$-properties,
by essentially proving the negation of a $\forall \exists$-property,
\ie by proving that no such alignment witness $W$ exists,
but the authors provide no inference rules for $\exists \forall$-properties.}

\mypar{The problem}

\begin{figure*}
\begin{center}
    \resizebox{\columnwidth}{!}{%
    \begin{tabular}{|c|c|c|c|p{0.08\textwidth}|} 
     \hline
      &
     \multicolumn{4}{c|}{\textbf{Number of executions}} \\
     \hline
     \textbf{Type} & $\mathbf{1}$ & $\mathbf{2}$ & $\boldsymbol{k}$ &  \multicolumn{1}{c|}{$\boldsymbol{\infty}$} \\ %
     \hline
     Overapproximate (hypersafety) & \checked{} HL, OL, RHL, CHL, RHLE, MHRM\peter{, BiKAT} & \checked{} RHL, CHL, RHLE, MHRM\peter{, BiKAT} & \checked{} CHL, RHLE & \checked{} \none{} \\ \hline
     Backward underapproximate & \checked{} IL, InSec\peter{, BiKAT} & \checked{} InSec\peter{, BiKAT} & \checked{} \none{} & \checked{} \none{} \\ \hline
     Forward underapproximate & \checked{} OL, RHLE, MHRM\peter{, BiKAT} & \checked{} RHLE, MHRM\peter{, BiKAT} & \checked{} RHLE & \checked{} \none{} \\ \hline
     $\forall^* \exists^*$ & not applicable & \checked{} RHLE, MHRM\peter{, BiKAT} & \checked{} RHLE & \checked{} \none{} \\ \hline
     $\exists^* \forall^*$ & not applicable & \checked{} \peter{BiKAT} & \checked{} \none{} & \checked{} \none{} \\ \hline
     Set properties & not applicable & not applicable & not applicable & \checked{} \none{} \\ \hline
    \end{tabular}}
\end{center}

\caption{(Non-exhaustive) overview of Hoare logics, classified in two dimensions:
The type of properties a logic can establish, and the number of program executions these properties can relate (column ``$\mathbf{\infty}$'' subsumes an unbounded and an infinite number of executions).
\ifextended{%
We explain the distinction between backward and forward underapproximate properties in \appref{subsec:liveness}.
\appref{app:unbounded} gives examples of (hypersafety and set) properties for an unbounded number of executions.%
}{%
In our extended version~\cite{hhlExtended},
we explain the distinction between backward and forward underapproximate properties,
and we give examples of (hypersafety and set) properties for an unbounded number of executions.%
}
$\forall^* \exists^*$- and $\exists^* \forall^*$-hyperproperties are discussed in \secref{sec:hyper-triples}.
A green checkmark indicates that a property is handled by our \logic{} for the programming language defined in \secref{subsec:language},
and $\varnothing$ indicates that no other Hoare logic supports it.
The acronyms refer to the following.
CHL: Cartesian Hoare Logic~\cite{CHL16},
HL: Hoare Logic~\cite{HoareLogic,FloydLogic},
IL: Incorrectness Logic~\cite{IncorrectnessLogic} or Reverse Hoare Logic~\cite{ReverseHL},
InSec: Insecurity Logic~\cite{InsecurityLogic},
OL: Outcome Logic~\cite{OutcomeLogic},
RHL: Relational Hoare Logic~\cite{Benton04},
RHLE~\cite{RHLE},
MHRM~\cite{next700RHL},
\peter{BiKAT~\cite{BiKat}}.
}
\label{fig:overview}
\end{figure*}

\figref{fig:overview} presents a (non-exhaustive) overview of the landscape of Hoare logics, where logics are classified in two dimensions:
the type of properties they can establish, 
and the number of program executions those properties can relate.
The table reveals two open problems. First, some types of hyperproperties cannot be expressed by any existing Hoare logic (represented by $\varnothing$).
For example, to prove that a program \peter{violates generalized non-interference,
one needs to show that there \emph{exist} two executions $\tau_1$ and $\tau_2$ such that
\emph{all} executions with the same high-sensitivity input as $\tau_1$ have a different low-sensitivity output than $\tau_2$.\footnote{\peter{Assuming
no public (low-sensitivity) input}.}
}
Such $\exists^*\forall^*$-hyperproperties cannot be proved by any existing Hoare logic.
Second, the existing logics cover different, often disjoint program properties, which may hinder practical applications:
reasoning about a wide spectrum of properties of a given program requires the application of several logics, each with its own judgments; properties expressed in different, incompatible logics cannot be composed within the same proof system.

\mypar{This work}
We present \emph{\logic{}}, a novel Hoare logic that enables
\emph{proving or disproving} any \thibault{\emph{program hyperproperty},
a particular class of hyperproperties over the set of terminating executions of a program
(formally defined in \secref{subsec:expressivity}),
which includes properties of individual program executions.
In the rest of this paper, when the context is clear, we use \emph{hyperproperties} to refer to program hyperproperties.
}
As indicated by the green checkmarks in \figref{fig:overview}, these include many different types of properties, relating \emph{any} (potentially unbounded or even infinite) number of program executions,
and many hyperproperties that no existing Hoare logic can handle.
Among them are $\exists^* \forall^*$-hyperproperties
such as violations of generalized non-interference (\secref{subsec:assume-rule}),
and hyperproperties relating an unbounded or infinite number of executions
such as quantifying information flow
\thibault{based on Shannon entropy}
or
min-capacity~\cite{Shannon48,assaf2017hypercollecting,Yasuoka2010,Smith2009}
(\thibault{we give an example in} \ifextended{\appref{app:unbounded}}{our extended version~\cite{hhlExtended}}).
\thibault{Moreover, \logic{} offers rules to compose hyper-triples with different types of hyperproperties, such as $\exists \forall$ with $\forall \forall$, or $\forall \forall \exists$ with $\forall \forall$}.

\logic{} is based on a simple yet powerful idea: We lift 
pre- and postconditions from assertions over a \emph{fixed} number of execution states
to \emph{hyper-assertions} over \emph{sets} of execution states.
\logic{} then establishes \emph{hyper-triples} of the form $\simpleHoare{P}{C}{Q}$,
where $P$ and $Q$ are hyper-assertions.
Such a hyper-triple is valid iff
for any set of initial states $S$ that satisfies $P$,
the set of all final states that can be reached by executing $C$ in some state from $S$ satisfies $Q$.
By allowing assertions to quantify \emph{universally} over states, \logic{} can express overapproximate properties, whereas \emph{existential} quantification expresses underapproximate properties. Combinations of universal and existential quantification in the same assertion, as well as assertions over infinite sets of states, allow \logic{}
to prove or disprove properties beyond existing logics.

\mypar{Contributions}
Our main contributions are:

\begin{itemize}
\item We present \logic, a novel Hoare logic that can prove or disprove arbitrary hyperproperties over terminating executions. 

\item We formalize our logic and prove soundness and completeness in \isabelle{}~\cite{Isabelle}.

\item We derive easy-to-use syntactic rules for a restricted class of \emph{syntactic} hyper-assertions, as well as additional loop rules that capture different reasoning principles.

\item We prove compositionality rules for hyper-triples, which enable the flexible composition of hyper-triples of different forms and, thus, facilitate modular proofs.

\item We demonstrate the expressiveness of \logic, both 
on judgments of existing Hoare logics and on hyperproperties that no existing Hoare logic supports.
\thibault{We also prove in \isabelle{} that hyper-triples capture precisely program hyperproperties,
and that any invalid hyper-triple can be disproved by proving another hyper-triple.
}
\end{itemize}

\mypar{Outline}
\secref{sec:hyper-triples} informally presents hyper-triples, and shows how they can be used to specify hyperproperties.
\secref{sec:logic} introduces the rules of \logic{}, proves that these rules are sound and complete for establishing valid hyper-triples,
\thibault{defines program hyperproperties, proves that hyper-triples capture precisely those,
and proves that invalid hyper-triples can be disproved by proving other hyper-triples.}
Secs.~\ref{sec:syntactic} and~\ref{sec:loops} derive additional rules that enable concise proofs in common cases.
We discuss related work in \secref{sec:related_work} and conclude in \secref{sec:conclusion}.
\ifextended{%
The appendix contains further details and extensions.
In particular, \appref{app:expressivity} shows how to express judgments of existing logics in \logic, and \appref{app:compositionality} presents compositionality rules.}{%
Our extended version~\cite{hhlExtended} contains further details and extensions.
In particular, we show how to express judgments of existing logics in \logic, and present compositionality rules.}
All our technical results (Secs.\ \ref{sec:logic}, \ref{sec:syntactic}, \ref{sec:loops}, and the \ifextended{appendix}{extended version}) have been proved in \isabelle{} \cite{Isabelle},
and our mechanization is publicly available~\cite{artifact,HyperHoareLogic-AFP}.

\section{Hyper-Triples, Informally}
\label{sec:hyper-triples}
In this section, we illustrate how hyper-triples can be used to express different types of hyperproperties,
including over- and underapproximate hyperproperties for single (\secref{subsec:randomNumber}) and multiple (\secref{subsec:k-safety} and \secref{subsec:beyond-k-safety}) executions.

\subsection{Overapproximation and Underapproximation}
\label{subsec:randomNumber}

Consider the command $C_0 \triangleq (\cassign{x}{\mathit{randIntBounded}(0, 9)})$, which generates a random integer between $0$ and $9$ (both included), and assigns it to the variable $x$. Its functional correctness properties include: (P1)~The final value of $x$ is in the interval $[0,9]$,
and (P2)~every value in $[0,9]$ can occur for every initial state (\ie the output is not determined by the initial state).

Property P1 expresses the \emph{absence} of bad executions,
in which the output $x$ is outside the interval $[0,9]$.
This property can be expressed in classical Hoare logic, with the triple $\normalHoare{\top}{C_0}{0 \leq x \leq 9}$.
In \logic{}, where assertions are properties of sets of states, we  express it using a postcondition that \emph{universally} quantifies over all possible final states: In all final states, the value of $x$ should be in $[0,9]$.
The hyper-triple $\simpleHoare{\top}{C_0}{ \forall \inSet{\varphi'} \ldotp 0 \leq \varphi'(x) \leq 9}$
expresses this property.
The postcondition, written in the syntax that will be introduced in \secref{sec:syntactic}, is semantically equivalent
to $\outline{\lambda S' \ldotp \forall \varphi' \in S' \ldotp 0 \leq \varphi'(x) \leq 9}$.
This hyper-triple means that, for any set $S$ of initial states $\varphi$ (satisfying the trivial precondition $\top$),
the set $S'$ of all final states $\varphi'$ that can be reached by executing $C_0$ in some initial state $\varphi \in S$ satisfies the postcondition,
\ie all final states $\varphi' \in S'$ have a value for $x$ between $0$ and $9$.
This hyper-triple illustrates a systematic way of expressing classical Hoare triples as hyper-triples~(%
\ifextended{see \appref{subsec:safety}}{as shown in our extended version~\cite{hhlExtended}}).
Property P2 expresses the \emph{existence} of desirable executions and can be expressed using an underapproximate Hoare logic.
In \logic{}, we use a postcondition that \emph{existentially} quantifies over all possible final states: For each  $n\in [0,9]$, there exists a final state where $x=n$.
The hyper-triple $\simpleHoare{ \exists \inSet{\varphi} \ldotp \top }{C_0}{ \forall n \ldotp
0 \leq n \leq 9 \Rightarrow \exists \inSet{\varphi'} \ldotp \varphi'(x) = n}$ expresses P2.
The precondition is semantically equivalent to $\outline{\lambda S \ldotp \exists \varphi \in S}$.
It requires the set $S$ of initial states to be non-empty (otherwise the set of states reachable from states in $S$ by executing $C_0$ would also be empty, and the postcondition would not hold).
The postcondition ensures that, for any $n\in [0,9]$, it is possible to reach at least one state $\varphi'$ with $\varphi'(x) = n$.

This example shows that hyper-triples can express both under- and overapproximate properties, which allows \logic{} to reason about both the \emph{absence} of bad executions and the \emph{existence} of good executions.
Moreover,
hyper-triples can also be used to prove the existence of \emph{incorrect} executions,
which has proven useful in practice to find bugs without false positives~\cite{IncorrectnessLogic, Le2022}.
To the best of our knowledge, the only other Hoare logics that can express both properties P1 and P2 are Outcome Logic~\cite{OutcomeLogic}, 
Exact Separation Logic~\cite{maksimovic2022exact},
\peter{and BiKAT~\cite{BiKat}}.\footnote{While RHLE~\cite{RHLE} can in principle reason about the existence of executions, it is unclear how to express the existence \emph{for all} numbers $n$.}
However, \peter{the two first} are limited to properties of single executions,
\peter{and the latter to properties relating two executions.}
Thus, \peter{these logics} cannot handle \peter{the general class of} $k$-safety hyperproperties,
\peter{which we discuss next}.

\subsection{(Dis-)Proving $k$-Safety Hyperproperties}
\label{subsec:k-safety}

A $k$-safety hyperproperty~\cite{hyperproperties} is a property that characterizes \emph{all combinations of}
$k$ executions of the same program.
An important example is information flow security, which requires that programs that manipulate secret data (such as passwords) do not expose secret information to their users. In other words, the content of high-sensitivity (secret) variables must not leak into low-sensitivity (public) variables.
For deterministic programs, information flow security is often formalized as \emph{non-interference} (NI)~\cite{volpano1996nonInterference}, a 2-safety hyperproperty:
Any two executions of the program with the same low-sensitivity (\emph{low} for short) inputs (but potentially different high-sensitivity inputs) must have the same low outputs.
That is, for all pairs of executions $\tau_1, \tau_2$,
if $\tau_1$ and $\tau_2$ agree on the initial values of all low variables, they must also agree on the final values of all low variables.
This ensures that the final values of low variables are  not influenced by the values of high variables.
Assuming for simplicity that we have only one low variable $l$,
the hyper-triple $\simpleHoare{\low{l}}{C_1}{\low{l}}$,
where $\low{l} \triangleq (\forall \inSet{\varphi_1}, \inSet{\varphi_2} \ldotp \varphi_1(l) = \varphi_2(l))$,
expresses that $C_1$ satisfies NI:
If all states in $S$ have the same value for $l$,
then all final states reachable by executing $C_1$ in any initial state $\varphi \in S$ will have the same value for $l$. 
Note that this set-based definition is equivalent to the standard definition based on pairs of executions;
Instantiating $S$ with a set of two states directly yields the standard definition.

Non-interference requires that all final states have the same value for $l$, irrespective of the initial state that leads to any given final state. Other $k$-safety hyperproperties need to relate initial and final states. For example, the program $\cassign{y}{f(x)}$ is \emph{monotonic} iff for any two executions with $\varphi_1(x) \geq \varphi_2(x)$, we have $\varphi'_1(y) \geq \varphi'_2(y)$, where $\varphi_1$ and $\varphi_2$ are the initial states $\varphi'_1$ and $\varphi'_2$ are the \emph{corresponding} final states.
To relate initial and final states, \logic{} uses \emph{logical variables} (also called \emph{auxiliary variables}~\cite{auxVariables}). These variables cannot appear in a program, and thus are guaranteed to have the same values in the initial and final states of an execution. We use this property to tag corresponding states, as illustrated by the hyper-triple for monotonicity:
$\simpleHoare{\mono{t}{x}}{\cassign{y}{f(x)}}{\mono{t}{y}}$,
where $\mono{t}{x} \triangleq (\forall \inSet{\varphi_1}, \inSet{\varphi_2} \ldotp \varphi_1(t) = 1 \land \varphi_2(t) = 2 \Rightarrow \varphi_1(x) \geq \varphi_2(x))$. Here, $t$ is a logical variable used to distinguish the two executions of the program.

\paragraph{Disproving $k$-safety hyperproperties}
As explained in the introduction, being able to prove that a property does \emph{not} hold is valuable in practice, because it allows building tools that can find bugs without false positives.
\logic{} is able to \emph{disprove} hyperproperties by proving a hyperproperty that is essentially its negation.
For example, we can prove that the insecure program
$C_2 \triangleq (\cif{h > 0}{ \cassign{l}{1} }{ \cassign{l}{0} })$, where $h$ is a high variable,
\emph{violates} non-interference (NI),
using the following hyper-triple:
$\simpleHoare{\low{l}
\land (\exists \inSet{\varphi_1}, \inSet{\varphi_2} \ldotp \varphi_1(h) > 0  \land \varphi_2(h) \leq 0)
}{C_2}{
    \exists \inSet{\varphi'_1}, \inSet{\varphi'_2} \ldotp
    \varphi'_1(l) \neq \varphi'_2(l)
}$.
The postcondition is the negation of the postcondition for $C_1$ above, hence expressing that $C_2$ \emph{violates} NI\@.
Note that the precondition needs to be stronger than for $C_1$. Since the postcondition has to hold for \emph{all} sets that satisfy the precondition, we have to require that the set of initial states includes two states that will definitely lead to different final values of $l$.

\peter{The general class of $k$-safety hyperproperties includes properties that relate more than $2$ executions,
such as transitivity ($k = 3$) and associativity ($k = 4$)~\cite{CHL16}.}
The only other Hoare logic that can be used to both prove and disprove \peter{arbitrary} $k$-safety hyperproperties is RHLE\thibault{~\cite{RHLE}},
since it supports $\forall^* \exists^*$-hyperproperties,
which includes both hypersafety (that is, $\forall^*$) properties and their negation (that is, $\exists^*$-hyperproperties).
However, RHLE does not support $\exists^* \forall^*$-hyperproperties, and thus cannot disprove $\forall^* \exists^*$-hyperproperties such as generalized non-interference, as we discuss next.

\subsection{Beyond $k$-Safety}
\label{subsec:beyond-k-safety}

NI is widely used to express information flow security for deterministic programs,
but is overly restrictive for non-deterministic programs.
For example, the command 
$C_3 \triangleq (\chavoc{y} \cseq \cassign{l}{h + y})$
is information flow secure. Since the secret $h$ is added to
an unbounded non-deterministically chosen integer $y$, any secret $h$ can result in any\footnote{This property holds  for both unbounded and bounded arithmetic.}
value for the public variable $l$ and, thus, we cannot learn anything certain about $h$ from observing the value of $l$.
However, because of non-determinism, $C_3$ does not satisfy NI:
Two executions with the same initial values for $l$ could get different values for $y$,
and thus have different final values for $l$.

Information flow security for non-deterministic programs
(such as $C_3$)
is often formalized as \emph{generalized non-interference} (GNI)~\cite{mccullough1987GNI,mclean1996formalGNI}, a security notion weaker than NI\@. GNI allows two executions  $\tau_1$ and $\tau_2$ with the same low inputs to have \emph{different} low outputs, provided that there is a third execution $\tau$ with the same low inputs that has the same high inputs as $\tau_1$ and the same low outputs as $\tau_2$. That is, the difference in the low outputs between $\tau_1$ and $\tau_2$ cannot be attributed to their secret inputs.\footnote{GNI is often formulated without the requirement that $\tau_1$ and $\tau_2$ have the same low inputs, \eg in~\citet{hyperproperties}.
This alternative formulation can also be expressed in \logic{}, with the hyper-triple
$\simpleHoare{ \forall \inSet{\varphi} \ldotp \varphi(l_{\mathit{in}}) = \varphi(l) }{C_3}{
\forall \inSet{\varphi'_1}, \inSet{\varphi'_2} \ldotp \exists \inSet{\varphi'} \ldotp
\varphi'(h) = \varphi'_1(h) \land \varphi'(l_{\mathit{in}}) = \varphi'_2(l_{\mathit{in}}) \land \varphi'(l) = \varphi'_2(l)
}$.
The precondition binds, in each state, the initial value of $l$ to the logical variable $l_{\mathit{in}}$,
which enables the postcondition to refer to the initial value of $l$.
}
The non-deterministic program $C_3$ satisfies GNI, which can be expressed via the hyper-triple\footnote{We assume here for simplicity that $h$ is not modified by $C_3$.}
$\simpleHoare{\low{l}
}{C_3}{
    \forall \inSet{\varphi'_1}, \inSet{\varphi'_2} \ldotp
    \exists \inSet{\varphi'} \ldotp
    \varphi'(h) = \varphi'_1(h) \land
    \varphi'(l) = \varphi'_2(l)}$.
The final states $\varphi'_1$ and $\varphi'_2$ correspond to the executions $\tau_1$ and $\tau_2$, respectively,
and $\varphi'$ corresponds to execution $\tau$.

As before,
the expressivity of hyper-triples enables us not only to express that a program \emph{satisfies} complex hyperproperties such as GNI, but also that a program \emph{violates} them.
For example, the program $C_4 \triangleq (\chavoc{y} \cseq \cassume{y \leq 9} \cseq \cassign{l}{h + y})$,
where the first two statements model a non-deterministic choice of $y$ smaller or equal to $9$,
leaks information:
Observing for example $l = 20$ at the end of an execution, one can deduce that $h \geq 11$ (because $y \leq 9$).
We can formally express that $C_4$ violates GNI using the following hyper-triple:\footnote{Still assuming that $h$ is not modified.}
\\
$\simpleHoare{   
    \low{l}
    \land
    (\exists \inSet{\varphi_1}, \inSet{\varphi_2} \ldotp 
\varphi_1(h) \neq \varphi_2(h))
}{C_4}{
    \exists \inSet{\varphi'_1}, \inSet{\varphi'_2} \ldotp
    \forall \inSet{\varphi'} \ldotp
    \varphi'(h) = \varphi'_1(h)
    \Rightarrow
    \varphi'(l) \neq \varphi'_2(l)
}$
\\
The postcondition implies the negation of the postcondition we used previously to express GNI\@.
As before,
we had to strengthen the precondition to prove this violation.

GNI is a $\forall \forall \exists$-hyperproperty, whereas its negation is an $\exists \exists \forall$-hyperproperty.
To the best of our knowledge, \logic{} is the only Hoare logic that can \thibault{both} prove and disprove GNI\@. In fact, we will see in \secref{subsec:expressivity} that all hyperproperties over terminating program executions can be proven or disproven with \logic{}.

\section{Hyper Hoare Logic}
\label{sec:logic}
In this section, we present the programming language used in this paper~(\secref{subsec:language}),
formalize hyper-triples~(\secref{subsec:hyper-triples}),
present the core rules of \logic{}~(\secref{subsec:rules}),
prove soundness and completeness of the logic \wrt{} hyper-triples~(\secref{subsec:sound-complete}),
formally characterize the expressivity of hyper-triples~(\secref{subsec:expressivity}),
and discuss additional rules for composing proofs~(\secref{subsec:compositionality}).
All technical results presented in this section have been formalized in Isabelle/HOL.

\subsection{Language and Semantics}\label{subsec:language}

We present \logic{} for the following imperative programming language:

\begin{definition}\textbf{Program states and programming language.}
    A \emph{program state} (ranged over by $\sigma$) is a mapping from local variables (in the set $\vars$) to values (in the set $\vals$):
    The set of program states $\states$ is defined as the set of total functions from $\vars$ to $\vals$:
    $\states \triangleq \vars \rightarrow \vals$.

Program commands $C$ are defined by the following syntax, where $x$ ranges
    over variables in the set $\vars$, $e$ over expressions (modeled as total functions from $\states$ to $\vals$), and $b$ over predicates over states (total functions from $\states$ to Booleans):
$$
C \Coloneqq
\cskip \mid
\cassign{x}{e} \mid
\chavoc{x} \mid
\cassume{b} \mid
C \cseq C \mid
\cnif{C}{C} \mid
\cnwhile{C}
$$
\end{definition}

The $\cskip{}$, assignment, and sequential composition commands are standard. The command $\cassume{b}$ acts like $\cskip{}$ if $b$ holds
and otherwise stops the execution.
Instead of including \emph{deterministic} if-statements and while loops, we consider a \emph{non-deterministic} choice $\cnif{C_1}{C_2}$ and a \emph{non-deterministic} iteration $\cnwhile{C}$, which are more expressive.
Combined with the $\mathbf{assume}$ command, they can express deterministic if-statements and while loops as follows:
\begin{align*}
\cif{b}{C_1}{C_2} &\triangleq \cnif{ (\cassume{b} \cseq C_1)}{ (\cassume{\lnot b} \cseq C_2) } \\
\cifonly{b}{C} &\triangleq \cnif{ (\cassume{b} \cseq C)}{ (\cassume{\lnot b}) } \\
\cwhile{b}{C} &\triangleq \cnwhile{ (\cassume{b} \cseq C) } \cseq \cassume{\lnot b}
\end{align*}

Our language also includes a non-deterministic assignment $\chavoc{y}$ (also called \emph{havoc}),
which allows us to model unbounded non-determinism.
Together with $\mathbf{assume}$, it can for instance model the generation of random numbers between bounds:
$\cassign{y}{\mathit{randIntBounded}(a, b)}$ can be modeled as
$\chavoc{y} \cseq \cassume{a \leq y \leq b}$.

The big-step semantics of our language is standard, and formally defined in \ifextended{\appref{app:definitions} (\figref{fig:semantics})}{our extended version~\cite{hhlExtended}}.
The rule for $\chavoc{x}$ allows $x$ to be updated with any value $v$. $\cassume{b}$ leaves the state unchanged if $b$ holds; otherwise, the semantics gets stuck to indicate that there is no execution in which $b$ does \emph{not} hold. The command $\cnif{C_1}{C_2}$ non-deterministically executes either $C_1$ or $C_2$. $\cnwhile{C}$ non-deterministically either performs another loop iteration or terminates.

Note that our language does not contain any command that could fail (in particular, expression evaluation is total, such that division-by-zero and other errors cannot occur). Runtime failures could easily be modeled by instrumenting the program with a special Boolean variable $\mathit{err}$ that tracks whether a runtime error has occurred and skips the rest of the execution if this is the case.

\subsection{Hyper-Triples, Formally}\label{subsec:hyper-triples}

As explained in \secref{sec:hyper-triples}, the key idea behind Hyper Hoare Logic is to use \emph{properties of sets of states} as pre- and postconditions, whereas traditional Hoare logics use properties of individual states (or of a given number $k$ of states in logics for hyperproperties). Considering arbitrary sets of states increases the expressivity of triples substantially; for instance, universal and existential quantification over these sets corresponds to over- and underapproximate reasoning, respectively.
Moreover, combining both forms of quantification allows one to express advanced hyperproperties, such as generalized non-interference (see \secref{subsec:beyond-k-safety}).

To allow the assertions of Hyper Hoare Logic to refer to logical variables (motivated in \secref{subsec:k-safety}), we include them in our notion of state.

\begin{definition}\label{def:extended_states}
    \textbf{Extended states.}
    An \emph{extended state} (ranged over by $\varphi$) is a pair of a \emph{logical state} (a total mapping from logical variables to logical values)
    and a program state:
    $$
    \extstates \triangleq (\lvars \rightarrow \lvals) \times \states
    $$

\noindent
Given an extended state $\varphi$, we write $\lproj{\varphi}$ to refer to the logical state and $\pproj{\varphi}$ to refer to the program state, that is,
    $\varphi = (\lproj{\varphi}, \pproj{\varphi})$.
\end{definition}

We use the same meta variables ($x$, $y$, $z$) for program and logical variables. When it is clear from the context that $x \in \vars$ (resp.\ $x \in \lvars$), we often write $\varphi(x)$ to denote $\pproj{\varphi}(x)$ (resp.\ $\lproj{\varphi}(x)$).

The assertions of Hyper Hoare Logic are predicates over sets of extended states: 

\begin{definition}\label{def:hyperassertions}\textbf{Hyper-assertions.}
    A \emph{\hyperassertion{}} (ranged over by $P$, $Q$, $R$) is a total function from $\powerset{ \extstates{} }$ to Booleans.

    \noindent A \hyperassertion{} $P$ \emph{entails} a \hyperassertion{} $Q$, written $P \models Q$, iff all sets that satisfy $P$ also satisfy $Q$:
    $$(P \models Q) \triangleq (\forall S \ldotp \sat{S}{P} \Rightarrow \sat{S}{Q})$$
\end{definition}

We formalize hyper-assertions as semantic properties, which allows us to focus on the key ideas of the logic.
In \secref{sec:syntactic}, we will define a syntax for hyper-assertions, which will allow us to derive simpler rules than the ones presented in this section.

To formalize the meaning of hyper-triples, we need to relate them formally to the semantics of our programming language. Since hyper-triples are defined over extended states, we first define a semantic function $\mathit{sem}$ that lifts the operational semantics to extended states; it yields the set of extended states that can be reached by executing a command $C$ from a set of extended states~$S$:

\begin{definition}\textbf{Extended semantics.}\label{def:extended_semantics}
$$
\mathit{sem}(C, S) \triangleq \{ \varphi \mid \exists \sigma \ldotp (\lproj{\varphi}, \sigma) \in S \land
\bigstep{C}{\sigma}{\pproj{\varphi}} \}
$$
\end{definition}

The following lemma states several useful properties of the extended semantics.%

\begin{lemma}\textbf{Properties of the extended semantics.}\label{lem:sem}
    \begin{enumerate}
        \item $\sem(C, S_1 \cup S_2) = \sem(C, S_1) \cup \sem(C, S_2)$
        \item $S \subseteq S' \Rightarrow \sem(C, S) \subseteq \sem(C, S')$
        \item $\sem(C, \bigcup_x f(x)) = \bigcup_x \sem(C, f(x))$
        \item $\sem(\cskip, S) = S$
        \item $\sem(C_1 \cseq C_2, S) = \sem(C_2, \sem(C_1, S))$
        \item $\sem(\cnif{C_1}{C_2}, S) = \sem(C_1, S) \cup \sem(C_2, S)$
        \item $\sem(\cnwhile{C}, S) = \bigcup_{n \in \mathbb{N}} \sem(C^n, S)$
        where $C^n \triangleq \underbrace{C \cseq \ldots \cseq C}_{\text{n times}}$
    \end{enumerate}
\end{lemma}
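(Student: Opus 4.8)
The plan is to prove each of the seven items of \lemref{lem:sem} by unfolding \defref{def:extended_semantics} and, where the command structure matters, performing induction on the big-step derivation or on the structure of $C$. All of items (1)--(6) are essentially direct set-theoretic manipulations: one fixes an arbitrary extended state $\varphi$, unfolds membership in $\sem(C, \cdot)$ to the existence of a witnessing program state $\sigma$ with $(\lproj{\varphi}, \sigma)$ in the relevant set and $\bigstep{C}{\sigma}{\pproj{\varphi}}$, and then shuffles quantifiers. I would carry them out in the order listed, since later items do not depend on earlier ones except cosmetically.

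Concretely: For (1), membership of $\varphi$ in $\sem(C, S_1 \cup S_2)$ gives a witness $\sigma$ with $(\lproj{\varphi},\sigma) \in S_1 \cup S_2$; case-splitting on which disjunct holds puts $\varphi$ in $\sem(C,S_1)$ or $\sem(C,S_2)$, and the converse is symmetric. Item (3) is the same argument with an indexed union instead of a binary one, and (2) follows immediately from the same witness extraction (or, alternatively, as a corollary of (1) since $S' = S \cup S'$). For (4), $\sem(\cskip, S)$ uses the fact that the only big-step rule for $\cskip$ forces $\pproj{\varphi} = \sigma$, so the witness condition collapses to $(\lproj{\varphi}, \pproj{\varphi}) \in S$, i.e.\ $\varphi \in S$. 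For (5), the key is the composition rule of \figref{fig:semantics}: $\bigstep{C_1 \cseq C_2}{\sigma}{\sigma''}$ holds iff there is an intermediate $\sigma'$ with $\bigstep{C_1}{\sigma}{\sigma'}$ and $\bigstep{C_2}{\sigma'}{\sigma''}$ (this biconditional is the standard inversion/introduction fact for the sequencing rule); introducing the intermediate extended state $(\lproj{\varphi}, \sigma')$ shows $\varphi \in \sem(C_2, \sem(C_1, S))$, and conversely. Item (6) uses that the two rules for $\cnif{C_1}{C_2}$ are exactly ``$\bigstep{C_1}{\sigma}{\sigma'}$'' and ``$\bigstep{C_2}{\sigma}{\sigma'}$'', so the derivation splits as a disjunction, matching the union on the right.

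The only item that needs real work is (7), the characterization of $\sem(\cnwhile{C}, S)$ as $\bigcup_{n \in \mathbb{N}} \sem(C^n, S)$. The plan is to prove both inclusions by induction. For $\supseteq$: I would show by induction on $n$ that $\sem(C^n, S) \subseteq \sem(\cnwhile{C}, S)$; the base case $n = 0$ is $\sem(\cskip, S) = S \subseteq \sem(\cnwhile{C}, S)$ using item (4) and the loop-exit rule $\bigstep{\cnwhile{C}}{\sigma}{\sigma}$, and the inductive step uses item (5) to write $\sem(C^{n+1}, S) = \sem(\cnwhile{C}$-compatible $\ldots)$, then the loop-unfold rule $\bigstep{C}{\sigma}{\sigma'} \land \bigstep{\cnwhile{C}}{\sigma'}{\sigma''} \Rightarrow \bigstep{\cnwhile{C}}{\sigma}{\sigma''}$ to absorb one iteration. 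For $\subseteq$: given $\varphi \in \sem(\cnwhile{C}, S)$ with witness $\sigma$ and a derivation of $\bigstep{\cnwhile{C}}{\sigma}{\pproj{\varphi}}$, I would induct on that big-step derivation; the exit rule gives $\pproj{\varphi} = \sigma$, so $\varphi \in \sem(C^0, S) = S$, and the unfold rule provides $\sigma'$ with $\bigstep{C}{\sigma}{\sigma'}$ and a strictly smaller derivation of $\bigstep{\cnwhile{C}}{\sigma'}{\pproj{\varphi}}$, to which the induction hypothesis applies, yielding some $n$ with $\varphi \in \sem(C^n, \{(\lproj{\varphi}, \sigma')\})$; prefixing the one $C$-step (again via item (5)) gives $\varphi \in \sem(C^{n+1}, S)$.

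I expect item (7), and in particular the $\subseteq$ direction, to be the main obstacle — not because the argument is deep, but because it requires rule induction on the big-step relation rather than structural induction on $C$ (the subterm in the recursive premise of the loop-unfold rule is $\cnwhile{C}$ itself, not a smaller command), and one has to be careful that the induction is on the derivation. Everything else is routine quantifier bookkeeping over \defref{def:extended_semantics}. In the paper I would simply state that each property follows by unfolding the definition of $\sem$ and, for (4)--(7), by (rule) induction on the big-step semantics, citing \figref{fig:semantics}, and note that the full details are in the Isabelle/HOL formalization.
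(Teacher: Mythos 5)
Your proposal is correct and follows the standard argument: items (1)--(6) are direct unfoldings of \defref{def:extended_semantics} together with inversion on the big-step rules, and item (7) is proved by the two inductions you describe (on $n$ for $\supseteq$, and rule induction on the derivation of $\bigstep{\cnwhile{C}}{\sigma}{\pproj{\varphi}}$ for $\subseteq$, taking $C^0 = \cskip$). The paper gives no textual proof of this lemma and defers entirely to the Isabelle/HOL formalization, which proceeds in essentially the same way, so there is nothing to reconcile.
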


Using the extended semantics, we can now define the meaning of hyper-triples.

\begin{definition}\label{def:hoare_triples}\textbf{Hyper-triples.}
    Given two \hyperassertions{} $P$ and $Q$, and a command $C$,
    the \emph{hyper-triple} $\simpleHoare{P}{C}{Q}$ is \emph{valid},
    written $\hoare{P}{C}{Q}$, iff for any set $S$ of initial extended states that satisfies $P$,
    the set $\mathit{sem}(C, S)$ of extended states reachable by executing $C$ in some state from $S$
    satisfies $Q$:
    $$\hoare{P}{C}{Q} \triangleq \left( \forall S \ldotp \sat{S}{P} \Rightarrow \sat{\mathit{sem}(C, S)}{Q} \right)$$
\end{definition}

This definition is similar to classical Hoare logic, where the initial and final states have been replaced by \emph{sets} of extended states.
As we have seen in \secref{sec:hyper-triples}, hyper-assertions over sets of states allow our hyper-triples to express properties of single executions
and of multiple executions (hyperproperties), as well as to perform overapproximate reasoning (like \eg Hoare Logic)
and underapproximate reasoning (like \eg Incorrectness Logic).

\subsection{Core Rules}\label{subsec:rules}

\begin{figure*}[t!]
\begin{center}
    \footnotesize
    \[
    \begin{array}{c}

    \Inf[\mathit{Skip}]{\shoare{P}{\cskip}{P}}
    \hspace{4mm}

    \Inf[\mathit{Seq}]{\shoare{P}{C_1}{R}}{\shoare{R}{C_2}{Q}}{\shoare{P}{C_1 \cseq C_2}{Q}}
    \hspace{4mm}
    
    \Inf[\mathit{Choice}]{\shoare{P}{C_1}{Q_1}}{\shoare{P}{C_2}{Q_2}}{\shoare{P}{\cnif{C_1}{C_2}}{Q_1 \otimes Q_2}}

    \\[2em]

    \Inf[\mathit{Cons}]{P \models P'}{Q' \models Q}{\shoare{P'}{C}{Q'}}{\shoare{P}{C}{Q}}
    \hspace{4mm}

    \Inf[\mathit{Assume}]
    {\shoare{\lambda S \ldotp \sat{\{ \varphi \mid \varphi \in S \land b(\pproj{\varphi}) \} }{P} }{\cassume{b}}{P}}

    \\[2em]

    \Inf[\mathit{Exist}]{\forall x  \ldotp (\shoare{P_x}{C}{Q_x})}{\shoare{\exists x \ldotp P_x}{C}{\exists x \ldotp Q_x}}

    \hspace{4mm}

    \Inf[\mathit{Assign}]{\shoare{\lambda S \ldotp \sat{
        \{ \varphi \mid \exists \alpha \in S \ldotp \lproj{\varphi} = \lproj{\alpha} \land \pproj{\varphi} = \pproj{\alpha}[x \mapsto e(\pproj{\varphi})] \}
    }{P}}{\cassign{x}{e}}{P}}

    \\[2em]

    \Inf[\mathit{Iter}]{\shoare{I_n}{C}{I_{n+1}}}{\shoare{I_0}{\cnwhile{C}}{\otimesn I_n}}

    \hspace{4mm}
    \Inf[\mathit{Havoc}]{\shoare{\lambda S \ldotp \sat{\{ \varphi \mid
    \exists \alpha \in S \ldotp \exists v \ldotp \lproj{\varphi} = \lproj{\alpha} \land \pproj{\varphi} = \pproj{\alpha}[x \mapsto v] \}}{P} }{\chavoc{x}}{P}}

    \end{array}
    \]
\end{center}

\caption{Core rules of \logic{}. The meaning of the operators $\otimes$ and $\otimesn$ are defined in \defref{def:otimes} and \defref{def:bigotimes}, respectively.}
\label{fig:rules}
\end{figure*}

\figref{fig:rules} shows the core rules of Hyper Hoare Logic. 
\namerule{Skip}, \namerule{Seq}, \namerule{Cons}, and \namerule{Exist} are analogous to traditional Hoare logic.
\namerule{Assume}, \namerule{Assign}, and \namerule{Havoc} are straightforward given the semantics of these commands.
All three rules work backward.
In particular, the precondition of  \namerule{Assume} applies the postcondition $P$ only to those states that satisfy the assumption $b$.
By leaving the value $v$ unconstrained, \namerule{Havoc} considers as precondition the postcondition $P$ for all possible values for $x$.
The three rules \namerule{Assume}, \namerule{Assign}, and \namerule{Havoc} are optimized for expressivity;
we will derive in \secref{sec:syntactic} syntactic versions of these rules, which are less expressive, but easier to apply.

The rule \namerule{Choice} (for non-deterministic choice) is more involved. Most standard Hoare logics use the same assertion $Q$ as postcondition of all three triples. However, such a rule would not be sound in \logic. Consider for instance an application of this hypothetical \namerule{Choice} rule where both $P$ and $Q$ are defined as 
$\lambda S \ldotp \lvert S \rvert = 1$, expressing that there is a single pre- and post-state. If commands $C_1$ and $C_2$ are deterministic, the antecedents of the rule can be \proven{} because a single pre-state leads to a single post-state. However, the non-deterministic choice will in general produce \emph{two} post-states, such that the  postcondition is violated. 

To account for the effects of non-determinism on the sets of states described by hyper-assertions, we obtain the postcondition of the non-deterministic choice by combining the postconditions of its branches. As shown by \lemref{lem:sem} (point 6), executing the non-deterministic choice $\cnif{C_1}{C_2}$ in the set of states $S$ amounts to executing $C_1$ in $S$ and $C_2$ in $S$, and taking the union of the two resulting sets of states. Thus, if $\sat{\sem(C_1, S)}{Q_1}$ and $\sat{\sem(C_2, S)}{Q_2}$ hold
then the postcondition of $\cnif{C_1}{C_2}$ must characterize the union $\sem(C_1, S) \cup \sem(C_2, S)$
The postcondition of the rule \namerule{Choice}, $Q_1 \otimes Q_2$, achieves that:

\begin{definition}\label{def:otimes}
A set $S$ satisfies $Q_1 \otimes Q_2$ iff it can be split into two (potentially overlapping) sets $S_1$ and $S_2$ (the sets of post-states of the branches), such that $S_1$ satisfies $Q_1$ and $S_2$ satisfies ${Q_2}$:
$$(Q_1 \otimes Q_2)(S) \triangleq \left( \exists S_1, S_2 \ldotp S = S_1 \cup S_2 \land Q_1(S_1) \land Q_2(S_2) \right)$$
\end{definition}

The rule \namerule{Iter} for non-deterministic iterations generalizes our treatment of non-deterministic choice. It employs an indexed loop invariant $I$, which maps a natural number $n$ to a \hyperassertion{} $I_n$.
$I_n$ characterizes the set of states reached after executing $n$ times the command $C$ in a set of initial states that satisfies $I_0$. Analogously to the rule \namerule{Choice}, the indexed invariant avoids using the same hyper-assertion for all non-deterministic choices. The precondition of the rule's conclusion and its premise prove (inductively) that the triple $\simpleHoare{I_0}{C^n}{I_n}$ holds for all $n$.
$I_n$ thus characterizes the set of reachable states after exactly $n$ iterations of the loop.
Since our loop is non-deterministic (\ie has no loop condition), the set
of reachable states after the loop is the union of the sets of reachable states after each iteration.
The postcondition of the conclusion captures this intuition,
by using the generalized version of the $\otimes$ operator to an indexed family of hyper-assertions:

\begin{definition}\label{def:bigotimes}
A set $S$ satisfies $\bigotimes_{n \in \mathbb{N}} I_n$ iff it can be split into
$\bigcup_i f(i) = f(0) \cup \ldots \cup f(i) \cup \ldots$, where $f(i)$ (the set of reachable states after exactly $i$ iterations) satisfies $I_i$ (for each $i\in \mathbb{N}$):
$$(\otimesn I_n)(S) \triangleq \left(\exists f \ldotp (S = \bigcup_{n \in \mathbb{N}} f(n)) \land (\forall n \in \mathbb{N} \ldotp I_n(f(n))) \right)$$
\end{definition}

Note that this rule makes \logic{} a partial correctness logic:
it only considers an unbounded, but finite number $n$ of loop iterations.
\thibault{%
In \ifextended{\appref{app:termination}}{our extended version~\cite{hhlExtended}}, we present an alternative, stronger definition of hyper-triples,
which express the existence of a terminating execution from any initial state,
and corresponding inference rules (proven sound in our \isabelle{} mechanization)
such as a rule for while loops based on a loop variant.
}
We also discuss a possible extension of \logic{} to prove non-termination, \ie the existence of non-terminating executions.

\subsection{Soundness and Completeness}\label{subsec:sound-complete}

We have \proven{} in Isabelle/HOL that \logic{} is sound and complete. That is, every hyper-triple that can be derived in the logic is valid, and vice versa. 
Note that \figref{fig:rules} contains only the \emph{core rules} of \logic{}. These are sufficient to prove completeness; all rules presented later in this paper are only useful to make proofs more succinct and natural.

\begin{theorem}\textbf{Soundness.}\label{thm:sound}
Hyper Hoare Logic is sound:
    \begin{center}
        If $\shoare{P}{C}{Q}$ then $\hoare{P}{C}{Q}$.
    \end{center}
\end{theorem}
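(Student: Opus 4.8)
The plan is to prove soundness by rule induction on the derivation of $\shoare{P}{C}{Q}$: for each core rule in \figref{fig:rules}, assume the validity (in the sense of \defref{def:hoare_triples}) of the hyper-triples in the premises and establish the validity of the hyper-triple in the conclusion. Throughout, I would lean heavily on the algebraic properties of $\sem$ collected in \lemref{lem:sem}, which are precisely tailored to match the structure of the rules.

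The straightforward cases come first. For \namerule{Skip}, validity is immediate from $\sem(\cskip, S) = S$ (\lemref{lem:sem}.4). For \namerule{Seq}, given $\sat{S}{P}$, apply the first premise to get $\sat{\sem(C_1,S)}{R}$, then the second to get $\sat{\sem(C_2, \sem(C_1,S))}{Q}$, and conclude by $\sem(C_1 \cseq C_2, S) = \sem(C_2,\sem(C_1,S))$ (\lemref{lem:sem}.5). For \namerule{Cons}, chain the entailments $P \models P'$ and $Q' \models Q$ around the premise. For \namerule{Exist}, given $\sat{S}{\exists x \ldotp P(x)}$, pick a witness $x$ with $\sat{S}{P(x)}$, apply the premise for that $x$, and use the same witness for the existential in $Q$. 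The three backward rules \namerule{Assume}, \namerule{Assign}, and \namerule{Havoc} all follow the same pattern: unfold \defref{def:extended_semantics} and the relevant big-step rule from \figref{fig:semantics} to check that $\sem(\cassume{b}, S)$ (resp.\ $\sem(\cassign{x}{e},S)$, $\sem(\chavoc{x},S)$) equals exactly the set to which the premise's precondition applies $P$; then the conclusion is a direct rewrite. For instance, $\sem(\cassume{b}, S) = \{\varphi \mid \varphi \in S \land b(\pproj{\varphi})\}$, which is precisely the set in the \namerule{Assume} premise.

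The interesting cases are \namerule{Choice} and \namerule{Iter}. For \namerule{Choice}: given $\sat{S}{P}$, the premises give $\sat{\sem(C_1,S)}{Q_1}$ and $\sat{\sem(C_2,S)}{Q_2}$; by \lemref{lem:sem}.6, $\sem(\cnif{C_1}{C_2},S) = \sem(C_1,S) \cup \sem(C_2,S)$, so taking $S_1 := \sem(C_1,S)$ and $S_2 := \sem(C_2,S)$ in \defref{def:otimes} witnesses $(Q_1 \otimes Q_2)(\sem(\cnif{C_1}{C_2},S))$. For \namerule{Iter}, I expect this to be the main obstacle, though it is more bookkeeping than genuine difficulty: from the premise $\shoare{I_n}{C}{I_{n+1}}$ (for all $n$), an easy induction on $n$ gives $\shoare{I_0}{C^n}{I_n}$, using that $C^{n+1} = C^n \cseq C$ and \lemref{lem:sem}.5 (or $C \cseq C^n$, whichever matches the definition of $C^n$). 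Hence, assuming $\sat{S}{I_0}$, for every $n$ we have $\sat{\sem(C^n,S)}{I_n}$. Now set $f(n) := \sem(C^n, S)$; by \lemref{lem:sem}.7, $\sem(\cnwhile{C},S) = \bigcup_{n\in\mathbb{N}} f(n)$, and since $I_n(f(n))$ holds for each $n$, the witness $f$ establishes $(\otimesn I_n)(\sem(\cnwhile{C},S))$ per \defref{def:bigotimes}. The only subtlety is getting the index arithmetic in $\shoare{I_0}{C^n}{I_n}$ aligned with the definition of $C^n$, which is routine.
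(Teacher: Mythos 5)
Your proposal is correct and follows essentially the same route as the paper: a rule induction on the derivation, establishing the semantic validity of each core rule, with the algebraic properties of $\sem$ from \lemref{lem:sem} doing the work for \namerule{Seq}, \namerule{Choice}, and \namerule{Iter} (where the auxiliary induction giving $\hoare{I_0}{C^n}{I_n}$ and the witness $f(n) = \sem(C^n, S)$ is exactly the intended argument). The only cosmetic slip is writing $\vdash$ for the derived triples $\simpleHoare{I_0}{C^n}{I_n}$ where you mean the semantic judgment $\models$, but your subsequent reasoning makes clear you are using validity, so nothing is missing.
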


\thibault{
\begin{proof}
    We prove $\forall P, Q \ldotp \shoare{P}{C}{Q} \Rightarrow \hoare{P}{C}{Q}$ by straightforward structural induction on $C$.
    The cases for $\cskip$, $C_1 \cseq C_2$, $\cnif{C_1}{C_2}$, and $\cnwhile{C}$,
    directly follow from \lemref{lem:sem}.
\end{proof}
}

\begin{theorem}\textbf{Completeness.}\label{thm:complete}
Hyper Hoare Logic is complete:
    \begin{center}
        If $\hoare{P}{C}{Q}$ then $\shoare{P}{C}{Q}$.
    \end{center}
\end{theorem}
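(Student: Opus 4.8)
The plan is to prove completeness by constructing, for any valid hyper-triple $\hoare{P}{C}{Q}$, a derivation in the logic. The key idea is that the core rules for the atomic commands ($\mathit{Skip}$, $\mathit{Assume}$, $\mathit{Assign}$, $\mathit{Havoc}$) are already ``strongest-postcondition'' (or rather, weakest-precondition) complete: given any postcondition $P$, they produce a precondition that is \emph{exactly} the set of states that map into a $P$-satisfying set under the command. So the real work is to find, for each command $C$ and postcondition $Q$, a ``best'' precondition that is derivable paired with $Q$, and then close the gap using $\mathit{Cons}$. Concretely, I would define for each command $C$ a canonical precondition $P_C^Q \triangleq \lambda S \ldotp \sat{\sem(C,S)}{Q}$ — the \emph{weakest} hyper-assertion such that the triple $\simpleHoare{P_C^Q}{C}{Q}$ is valid — and prove by induction on the structure of $C$ that $\shoare{P_C^Q}{C}{Q}$ is derivable for every $Q$. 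Once this lemma is established, completeness follows immediately: if $\hoare{P}{C}{Q}$, then by definition $P \models P_C^Q$, so $\mathit{Cons}$ (with $Q \models Q$ trivially) turns the derivable triple $\shoare{P_C^Q}{C}{Q}$ into $\shoare{P}{C}{Q}$.

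The induction proceeds as follows. For $\cskip$, $\cassume{b}$, $\cassign{x}{e}$, and $\chavoc{x}$, one checks that $P_C^Q$ as defined above coincides (semantically, hence by $\mathit{Cons}$ up to logical equivalence) with the precondition appearing in the corresponding core rule; this is a direct unfolding of \defref{def:extended_semantics} together with the relevant clause of the big-step semantics. For $C_1 \cseq C_2$: by the IH, $\shoare{P_{C_2}^{Q}}{C_2}{Q}$ is derivable, and then applying the IH again with postcondition $P_{C_2}^Q$ gives $\shoare{P_{C_1}^{P_{C_2}^Q}}{C_1}{P_{C_2}^Q}$; the rule $\mathit{Seq}$ composes these, and using \lemref{lem:sem}(5) one verifies $P_{C_1}^{P_{C_2}^Q} = P_{C_1 \cseq C_2}^Q$. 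For $\cnif{C_1}{C_2}$: by the IH we have derivations of $\shoare{P_{C_1}^{Q}}{C_1}{Q}$ and $\shoare{P_{C_2}^{Q}}{C_2}{Q}$, but to apply $\mathit{Choice}$ we need a \emph{common} precondition and postconditions $Q_1, Q_2$ with $Q_1 \otimes Q_2 \models Q$. The natural choice is $Q_1 = Q_2 = Q$ will not work; instead take the common precondition to be $P_{\cnif{C_1}{C_2}}^Q$ itself, observe that it entails both $P_{C_1}^{Q_1}$ and $P_{C_2}^{Q_2}$ for suitable $Q_i$ — here one picks $Q_i \triangleq \lambda S_i \ldotp (\sem(C_i,\cdot)\text{-image of any set containing }S_i\text{-worth of states can be completed to }Q)$; more simply, one shows directly that $\sem(C_1,S) \mapsto Q_1$ and $\sem(C_2,S) \mapsto Q_2$ can be chosen so that $Q_1 \otimes Q_2$ captures exactly $\sat{\sem(C_1,S)\cup\sem(C_2,S)}{Q}$, using \lemref{lem:sem}(6) and \defref{def:otimes}. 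The cleanest route is: let $Q' \triangleq P_{\cnif{C_1}{C_2}}^Q$, apply $\mathit{Choice}$ with $Q_1 \triangleq \lambda T\ldotp \exists S\ldotp Q'(S)\land T = \sem(C_1,S)$ and symmetrically $Q_2$; then $Q_1\otimes Q_2 \models Q$ by construction, $\shoare{Q'}{C_i}{Q_i}$ is valid, and by the IH plus $\mathit{Cons}$ it is derivable.

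The main obstacle — and the one requiring the most care — is the loop $\cnwhile{C}$. Here we must supply an \emph{indexed} invariant $I$ for the rule $\mathit{Iter}$. The natural candidate is $I_n \triangleq \lambda S_n \ldotp \exists S\ldotp (\text{the canonical precondition holds of }S)\land S_n = \sem(C^n, S)$, i.e. $I_n$ records exactly the sets reachable after $n$ iterations starting from a set satisfying $P_{\cnwhile{C}}^Q$. One must then verify three things: (i) $P_{\cnwhile{C}}^Q \models I_0$ (take $S_0 = S$, using $\sem(C^0,S) = S$); (ii) the premise $\shoare{I_n}{C}{I_{n+1}}$ is derivable — this requires applying the IH for $C$ with postcondition $I_{n+1}$ and checking $I_n \models P_C^{I_{n+1}}$, which holds because $\sem(C, \sem(C^n,S)) = \sem(C^{n+1},S)$ by \lemref{lem:sem}(5); and (iii) $\otimesn I_n \models Q$, which is where \lemref{lem:sem}(7) is essential: a set satisfying $\otimesn I_n$ decomposes as $\bigcup_n f(n)$ with each $f(n)$ of the form $\sem(C^n, S_{(n)})$ for sets $S_{(n)}$ satisfying the canonical precondition — and one needs all these $S_{(n)}$ to be the \emph{same} set $S$ to conclude $\bigcup_n f(n) = \sem(\cnwhile{C}, S)$ and hence satisfies $Q$. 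This synchronization is the delicate point: the definition of $I_n$ must be strengthened so that the witness set $S$ is forced to be identical across all $n$ (e.g. by threading $S$ through a logical variable, or by quantifying $I$ over a fixed parameter set as is standard in such completeness proofs). Resolving this — getting $\otimesn I_n$ to pin down a single source set rather than a family — is the crux of the argument; everything else is bookkeeping with \lemref{lem:sem} and $\mathit{Cons}$.
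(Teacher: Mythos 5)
There is a genuine gap, and it sits exactly where the paper's own discussion warns it would: your treatment of $\cnif{C_1}{C_2}$ does not go through, and your loop case leaves the analogous (and harder) problem explicitly unresolved. In your ``cleanest route'' for choice you set $Q' \triangleq P_{\cnif{C_1}{C_2}}^Q$, $Q_i \triangleq \lambda T \ldotp \exists S \ldotp Q'(S) \land T = \sem(C_i, S)$, and claim $Q_1 \otimes Q_2 \models Q$ ``by construction.'' This is false: by \defref{def:otimes}, a set satisfying $Q_1 \otimes Q_2$ has the form $\sem(C_1, S) \cup \sem(C_2, S')$ where $S$ and $S'$ are \emph{independent} witnesses, both satisfying $Q'$, but possibly distinct; validity of the triple only guarantees $Q$ on unions $\sem(C_1,S) \cup \sem(C_2,S)$ with the \emph{same} source set. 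The paper's Example~1 is a direct counterexample to your claim: there $Q'$ is satisfied by $\{\varphi_0\}$ and by $\{\varphi_2\}$, and your $Q_1 \otimes Q_2$ admits the spurious set $\{\varphi_0, \varphi_3\}$, which does not satisfy $Q$. This desynchronization of the two branches' source sets is precisely the phenomenon the paper uses to show that \namerule{Choice} alone is not enough for completeness.

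The missing idea is the rule \namerule{Exist}, which you never invoke. The paper's proof is an induction on $C$ proving $\forall P, Q \ldotp \hoare{P}{C}{Q} \Rightarrow \shoare{P}{C}{Q}$, and in the choice case it first \emph{fixes} the input set to an arbitrary exact value $V$ via the precondition $P_V \triangleq (\lambda S \ldotp P(S) \land S = V)$, uses the induction hypothesis to derive exact (strongest) postconditions $R^i_V \triangleq (\lambda S \ldotp S = \sem(C_i, V))$ for each branch, applies \namerule{Choice} so that $R^1_V \otimes R^2_V$ pins down the single set $\sem(C_1,V) \cup \sem(C_2,V)$ (same $V$ on both sides), and only then existentially quantifies over $V$ with \namerule{Exist} and weakens with \namerule{Cons}. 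Your loop case correctly identifies the same synchronization problem for $\otimesn I_n$ as ``the crux'' but only gestures at a fix (``threading $S$ through,'' ``quantifying over a fixed parameter''); the actual fix is the same \namerule{Exist}-based device, with $I_n \triangleq (\lambda S \ldotp S = \sem(C^n, V))$ for the fixed $V$. Your atomic and sequential-composition cases are fine (the core rules for atomic commands are indeed exact weakest preconditions, and $P_{C_1}^{P_{C_2}^Q} = P_{C_1 \cseq C_2}^Q$ follows from \lemref{lem:sem}(5)), and your overall weakest-precondition framing could be made to work, but only after the choice and iteration cases are repaired by fixing the input set exactly before applying \namerule{Choice} and \namerule{Iter}.
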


\thibault{
\proofsketch{
    We prove $H(C) \triangleq (\forall P, Q \ldotp \hoare{P}{C}{Q} \Rightarrow \shoare{P}{C}{Q})$
    by structural induction over $C$. We show the case for $C \triangleq \cnif{C_1}{C_2}$; the proof for the non-deterministic iteration is analogous,
    and the other cases are standard or straightforward.

    We assume $H(C_1)$ and $H(C_2)$, and want to prove $H(C)$ where $C \triangleq \cnif{C_1}{C_2}$.
    As we illustrate after this proof sketch, we need to consider each possible value $V$ of the set of extended states $S$ separately.
    For an arbitrary value $V$, we define $P_V \triangleq (\lambda S \ldotp P(S) \land S = V)$,
    $R^1_V \triangleq (\lambda S \ldotp S = \sem(C_1, V) \land P(V))$, and $R^2_V \triangleq (\lambda S \ldotp S = \sem(C_2, V))$.
    We get $\shoare{P_V}{C_1}{R^1_V}$ from $H(C_1)$ and $\shoare{P_V}{C_2}{R^2_V}$ from $H(C_2)$.
    By applying the rule \namerule{Choice}, we get $\shoare{P_V}{C}{R^1_V \otimes R^2_V}$. Since we prove this triple for an arbitrary value $V$ (that is, for all $V$), we can apply the rule \namerule{Exist}, to obtain
    $\shoare{ \exists V \ldotp P_V }{C}{\exists V \ldotp R^1_V \otimes R^2_V }$.
    $P$ clearly entails $\exists V \ldotp P_V$,
    and
    the postcondition $\exists V \ldotp R^1_V \otimes R^2_V$ entails $(\lambda S \ldotp
    \exists V \ldotp P(V) \land S = \sem(C_1, V) \cup \sem(C_2, V))$,
    which precisely describes the sets of states $\sem(\cnif{C_1}{C_2}, V)$ (see \lemref{lem:sem}(6)) where $V$ satisfies $P$,
    and thus entails $Q$.
    By rule \namerule{Cons}, we get $\shoare{P}{C}{Q}$, which concludes the case.
}
}

Note that our completeness theorem is not concerned with the expressivity of the assertion language because we use \emph{semantic} hyper-assertions (\ie functions, see \defref{def:hyperassertions}). Similarly, by using semantic entailments in the rule \namerule{Cons},
we decouple the completeness of Hyper Hoare Logic from the completeness of the logic used to derive entailments.

Interestingly, the logic would \emph{not} be complete without the core rule \namerule{Exist},
as we illustrate with the following simple example:

\begin{example}
    Let $\varphi_v$ be the state that maps $x$ to $v$ and all other variables to $0$.
    Let $P_v \triangleq (\lambda S \ldotp S = \{ \varphi_v \})$.
    Clearly, the hyper-triples
    $\simpleHoare{P_0}{\cskip }{P_0}$,
    $\simpleHoare{P_2}{\cskip }{P_2}$,
    $\simpleHoare{P_0}{\cassign{x}{x+1}}{P_1}$,
    and
    $\simpleHoare{P_2}{\cassign{x}{x+1}}{P_3}$
    are all valid.
We would like to prove the hyper-triple
    $
    \simpleHoare{P_0 \lor P_2}{ \cnif{\cskip}{(\cassign{x}{x+1})} }{ \lambda S \ldotp S = \{ \varphi_0, \varphi_1 \} \lor S = \{ \varphi_2, \varphi_3 \}}
    $.
    That is, either $P_0$ holds before, and then we have $S = \{ \varphi_0, \varphi_1 \}$ afterwards, or $P_2$ holds before, and then we have $S = \{ \varphi_2, \varphi_3 \}$ afterwards.
    However, using the rule \namerule{Choice} only, the most precise triple we can prove is
    $$
    \Inf[\mathit{Choice}]{
        \shoare{P_0 \lor P_2}{\cskip }{ P_0 \lor P_2 }
    }{
        \shoare{P_0 \lor P_2}{\cassign{x}{x+1}}{ P_1 \lor P_3 }
    }{
        \shoare{P_0 \lor P_2}{ \cnif{ \cskip}{ (\cassign{x}{x+1}) } }{ (P_0 \lor P_2) \otimes (P_1 \lor P_3) }
    }
    $$
    The postcondition $(P_0 \lor P_2) \otimes (P_1 \lor P_3)$ is equivalent to
    $(P_0 \otimes P_1) \lor (P_0 \otimes P_3) \lor (P_2 \otimes P_1) \lor (P_2 \otimes P_3)$,
    \ie $\lambda S \ldotp S = \{ \varphi_0, \varphi_1 \} \lor S = \{ \varphi_0, \varphi_3 \} \lor S = \{ \varphi_2, \varphi_1 \} \lor S = \{ \varphi_2, \varphi_3 \}$.
    We thus have two spurious disjuncts, $P_0 \otimes P_3$ (\ie $S = \{ \varphi_0, \varphi_3 \}$) and $P_2 \otimes P_1$ (\ie $S = \{ \varphi_2, \varphi_1 \}$).
\end{example}

This example shows that the rule \namerule{Choice} on its own is not precise enough for the logic to be complete; we need at least a \emph{disjunction} rule to distinguish the two cases $P_0$ and $P_2$.
In general, however, there might be an infinite number of cases to consider,
which is why we need the rule \namerule{Exist}.
The premise of this rule allows us to \emph{fix} a set of states $S$ that satisfies some precondition $P$
and to prove the most precise postcondition for the precondition $\lambda S' \ldotp S = S'$;
combining these precise postconditions with an existential quantifier in the conclusion of the rule
allows us to obtain the most precise postcondition for the precondition $P$.

\thibault{For our example, we can use the rule \namerule{Exist} with a Boolean $b$ that records whether $P_0$ or $P_2$ is satisfied initially, as follows:}
$$
    \footnotesize
    \Inf[\mathit{Exist}]{
    \Inf[\mathit{Choice}]{
        \shoare{(b \simp P_0) \sand (\lnot b \simp P_2)}{\cskip }{ (b \simp P_0) \sand (\lnot b \simp P_2)}
    }
    {
        \shoare{(b \simp P_0) \sand (\lnot b \simp P_2)}{\cassign{x}{x+1} }{ (b \simp P_1) \sand (\lnot b \simp P_3)}
    }{
        \shoare{(b \simp P_0) \sand (\lnot b \simp P_2)}{\cnif{\cskip}{(\cassign{x}{x+1})} }{
        ((b \simp P_0) \sand (\lnot b \simp P_2)) \otimes ((b \simp P_1) \sand (\lnot b \simp P_3))
    }
    }}
    {
        \shoare{\underbrace{\exists b \ldotp (b \simp P_0) \sand (\lnot b \simp P_2)}_{ = P_0 \lor P_2  }}{
            \cnif{\cskip}{(\cassign{x}{x+1})} }{
        \underbracket{\exists b \ldotp ((b \simp P_0) \sand (\lnot b \simp P_2)) \otimes ((b \simp P_1) \sand (\lnot b \simp P_3))}_{
            = (P_0 \otimes P_2) \lor (P_1 \otimes P_3)
        } }
    }
$$

\subsection{Expressivity of Hyper-Triples}
\label{subsec:expressivity}

In the previous subsection, we have shown that \logic{} is sound and complete to establish the validity of hyper-triples,
and, thus, \logic{} is as expressive as hyper-triples.
We now show that hyper-triples are expressive enough to capture arbitrary hyperproperties over finite program executions.
A \emph{hyperproperty}~\cite{hyperproperties} is traditionally defined as a property of sets of \emph{traces} of a system, that is, of sequences of system states. Since Hoare logics typically consider only the initial and final state of a program execution, we use a slightly adapted definition here:

\begin{definition}\label{def:hyperproperties}\textbf{Program hyperproperties.}
    A \emph{program hyperproperty} is a set of
    sets of pairs of program states, 
    \ie an element of $\powerset{\powerset{\states \times \states}}$.

    A command $C$ satisfies the program hyperproperty $\phyperpropertyfont{H}$ iff
    the set of all pairs of pre- and post-states of $C$ is an element of $\phyperpropertyfont{H}$:
    $
    \hypersat{\{ (\sigma, \sigma') \mid \bigstep{C}{\sigma}{\sigma'} \}}{H}
    $.
\end{definition}

Equivalently, a program hyperproperty can be thought of as a predicate over $\powerset{\states \times \states}$. Note that this definition subsumes
\thibault{properties of single executions},
such as functional correctness properties. 

In contrast to traditional hyperproperties, our program hyperproperties describe only the \emph{finite} executions of a program, that is, those that reach a final state.
An extension of \logic{} to infinite executions might be possible by defining hyper-assertions over sets of traces rather than sets of states; we leave this as future work.
In the rest of this paper, when the context is clear, we use \emph{hyperproperties} to refer to \emph{program hyperproperties}.

Any program hyperproperty can be expressed as a hyper-triple in \logic{}:

\begin{theorem}\label{thm:expressing_hyperprop}\textbf{Expressing hyperproperties as hyper-triples.}
    Let $\phyperpropertyfont{H}$ be a program hyperproperty.
    Assume that the cardinality of $\lvars$ is at least the cardinality of $\vars$,
    and that the cardinality of $\lvals$ is at least the cardinality of $\vals$.

    Then there exist \hyperassertions{} $P$ and $Q$ such that,
    for any command $C$,
    $\hypersat{C}{H}$
    iff $\hoare{P}{C}{Q}$.
\end{theorem}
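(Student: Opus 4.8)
The plan is to encode the program hyperproperty $\phyperpropertyfont{H} \subseteq \powerset{\states \times \states}$ directly into the precondition $P$, by letting $P$ hold exactly of those sets $S$ of extended states that "look like" a copy of the identity relation tagged so that each initial program state is paired with a logical state recording that very state. The key trick is to use the logical variables as an immutable "photograph" of the initial program state: since $|\lvars| \geq |\vars|$ and $|\lvals| \geq |\vals|$, we can fix an injection $\iota \colon \vars \to \lvars$ and an injection $j \colon \vals \to \lvals$, and say that an extended state $\varphi$ is \emph{well-tagged} iff $\lproj{\varphi}(\iota(x)) = j(\pproj{\varphi}(x))$ for all $x \in \vars$ (and logical variables outside the image of $\iota$ take some fixed default value). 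Because logical variables are never modified by $C$ (by \defref{def:extended_semantics} and the semantics in \figref{fig:semantics}, $\bigstep{C}{\sigma}{\pproj{\varphi}}$ constrains only the program component, and $\sem$ keeps $\lproj{\varphi}$ equal to the logical state of the originating extended state), the logical component of any state in $\sem(C,S)$ still records the \emph{initial} program state that produced it. This is exactly what lets $Q$ reconstruct the pre/post pairing.

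First I would define $P$ to be the hyper-assertion: "$S$ is precisely the set of all well-tagged extended states." There is exactly one such set $S_0$ (for each well-tagged $\varphi$ there is a unique choice of $\pproj{\varphi}$ and the logical tag is then forced on the image of $\iota$, with the default elsewhere), so $P$ is satisfied by the single set $S_0 = \{\varphi \mid \varphi \text{ well-tagged}\}$. Next I would compute $\sem(C, S_0)$: using \defref{def:extended_semantics}, an extended state $\psi$ lies in $\sem(C,S_0)$ iff there is $\sigma$ with $(\lproj{\psi}, \sigma) \in S_0$ and $\bigstep{C}{\sigma}{\pproj{\psi}}$, i.e.\ iff $\lproj{\psi}$ is the well-tagging of some $\sigma$ and $C$ can step from that $\sigma$ to $\pproj{\psi}$. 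Thus from $\sem(C,S_0)$ we can recover the relation $R_C \triangleq \{(\sigma,\sigma') \mid \bigstep{C}{\sigma}{\sigma'}\}$ by decoding: $(\sigma,\sigma') \in R_C$ iff the extended state with program part $\sigma'$ and logical part the well-tagging of $\sigma$ lies in $\sem(C,S_0)$. I then define $Q$ as the hyper-assertion: "the relation obtained by decoding $S'$ in this way is an element of $\phyperpropertyfont{H}$" — formally $Q(S') \triangleq \big( \{(\sigma,\sigma') \mid (\text{wellTag}(\sigma), \sigma') \in S'\} \in \phyperpropertyfont{H} \big)$.

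With these definitions, both directions are immediate. If $\hypersat{C}{H}$, i.e.\ $R_C \in \phyperpropertyfont{H}$, then for the unique $S$ satisfying $P$ (namely $S_0$) we have $\sem(C,S_0)$ decoding to $R_C \in \phyperpropertyfont{H}$, so $Q(\sem(C,S_0))$ holds; hence $\hoare{P}{C}{Q}$. Conversely, if $\hoare{P}{C}{Q}$, instantiate at $S_0$ (which satisfies $P$): then $Q(\sem(C,S_0))$ holds, and the decoded relation is exactly $R_C$, so $R_C \in \phyperpropertyfont{H}$, i.e.\ $\hypersat{C}{H}$. The main obstacle — and the only place the cardinality hypotheses are really used — is ensuring that the tagging map is injective so that decoding is unambiguous: we need enough logical variables to store all program variables and enough logical values to store all program values, which is precisely what the hypotheses $|\lvars| \geq |\vars|$ and $|\lvals| \geq |\vals|$ provide. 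A secondary point to check carefully is that $\sem$ really does preserve the logical component unchanged along an execution, so that the "photograph" in $\lproj{\cdot}$ still refers to the initial state; this follows directly from \defref{def:extended_semantics}, since the big-step relation only relates program states.
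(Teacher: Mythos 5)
Your proposal is correct and follows essentially the same route as the paper: the precondition fixes the set of all initial program states with their values photographed into logical variables (using the cardinality assumptions to build the injective tagging), and the postcondition decodes the preserved logical tags to recover the pre/post-state relation and tests membership in $\phyperpropertyfont{H}$. The only difference is that you spell out the injections and the unambiguity of decoding explicitly, which the paper's sketch leaves implicit.
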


\proofsketch{
We define the precondition $P$ such that the set of initial states contains all program states, and the values of all program variables in these states are recorded in logical variables (which is possible due to the cardinality assumptions).
Since the logical variables are not affected by the execution of $C$, they allow $Q$ to refer to the initial values of any program variable, in addition to their values in the final state. Consequently, $Q$ can describe all possible pairs of pre- and post-states.
We simply define $Q$ to be true iff the set of these pairs is contained in $\phyperpropertyfont{H}$.
}

\thibault{We also proved the converse: every hyper-triple describes a program hyperproperty. That is, hyper-triples capture exactly the hyperproperties over finite executions.}

\thibault{
\begin{theorem}\textbf{Expressing hyper-triples as hyperproperties.}
    \label{thm:expressing-hyper-triples}
    For any hyper-assertions $P$ and $Q$, there exists a hyperproperty $\phyperpropertyfont{H}$
    such that, for any command $C$,
    $\hypersat{C}{H}$
    iff $\hoare{P}{C}{Q}$.
\end{theorem}
}

\thibault{
\begin{proof}
    We define $\phyperpropertyfont{H} \triangleq \left\{ \Sigma \mid \forall S \ldotp
    P(S) \Rightarrow Q( \{  (l, \sigma') \mid \exists \sigma \ldotp  (l, \sigma) \in S \land (\sigma, \sigma') \in \Sigma \} )
    \right\}$.
\end{proof}
}

Combined with our completeness result (\thmref{thm:complete}),
this theorem implies that, if a command $C$ satisfies a hyperproperty $\phyperpropertyfont{H}$ then there exists a proof of it in \logic{}.
More surprisingly, our logic also allows us to \emph{disprove} any hyperproperty: If $C$ does \emph{not} satisfy $\phyperpropertyfont{H}$ then $C$ satisfies the \emph{complement} of 
$\phyperpropertyfont{H}$, which is also a hyperproperty, and thus can also be \proven{}. Consequently, \logic{} can prove or disprove any \emph{program hyperproperty} as defined in \defref{def:hyperproperties}.

Since hyper-triples express hyperproperties (\thmref{thm:expressing_hyperprop} and \thmref{thm:expressing-hyper-triples}), the ability to disprove any hyperproperty implies that \logic{} can also disprove any \emph{hyper-triple}.
More precisely, one can \emph{always} use \logic{} to prove that some hyper-triple $\simpleHoare{P}{C}{Q}$ is \emph{invalid},
by proving the validity of another hyper-triple $\simpleHoare{P'}{C}{\lnot Q}$, where $P'$ is a satisfiable \hyperassertion{} that entails $P$.
Conversely, the validity of such a hyper-triple $\simpleHoare{P'}{C}{\lnot Q}$ implies
that all hyper-triples $\simpleHoare{P}{C}{Q}$ (with $P$ weaker than $P'$) are \emph{invalid}.
The following theorem precisely expresses this observation:

\begin{theorem}\label{thm:disproving}\textbf{Disproving hyper-triples.}
    Given $P$, $C$, and $Q$,
    the following two propositions are equivalent:
    \begin{enumerate}
        \item $\hoare{P}{C}{Q}$ does not hold.
        \item There exists a \hyperassertion{} $P'$ that is satisfiable, entails $P$,
        and $\hoare{P'}{C}{ \lnot Q }$.
    \end{enumerate}
\end{theorem}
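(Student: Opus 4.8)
The plan is to prove both implications directly from the semantic definition of hyper-triple validity (\defref{def:hoare_triples}), using the definition of entailment (\defref{def:hyperassertions}) and a suitable notion of ``$\lnot Q$'' as the hyper-assertion $\lambda S \ldotp \lnot Q(S)$. The key observation is that $\hoare{P}{C}{Q}$ failing means there is a \emph{concrete witness set} $S_0$ with $\sat{S_0}{P}$ but $\lnot \sat{\sem(C,S_0)}{Q}$, and such a single witness can be ``packaged'' into a hyper-assertion via the singleton predicate $\lambda S \ldotp S = S_0$, exactly as in the completeness proof and in \thmref{thm:expressing_hyperprop}.

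First I would prove (1) $\Rightarrow$ (2). Assume $\hoare{P}{C}{Q}$ does not hold. Unfolding \defref{def:hoare_triples}, there exists a set $S_0$ of extended states with $\sat{S_0}{P}$ and $\lnot\sat{\sem(C,S_0)}{Q}$. Define $P' \triangleq (\lambda S \ldotp S = S_0)$. Then $P'$ is satisfiable (witnessed by $S_0$ itself). It entails $P$: any $S$ with $\sat{S}{P'}$ equals $S_0$, and $\sat{S_0}{P}$ holds by assumption. Finally $\hoare{P'}{C}{\lnot Q}$: for any $S$ with $\sat{S}{P'}$ we have $S = S_0$, so $\sem(C,S) = \sem(C,S_0)$, and $\lnot\sat{\sem(C,S_0)}{Q}$ gives $\sat{\sem(C,S_0)}{\lnot Q}$, as required.

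Next I would prove (2) $\Rightarrow$ (1). Assume such a $P'$ exists. Since $P'$ is satisfiable, pick $S_0$ with $\sat{S_0}{P'}$. Because $P' \models P$, we get $\sat{S_0}{P}$. Because $\hoare{P'}{C}{\lnot Q}$, we get $\sat{\sem(C,S_0)}{\lnot Q}$, i.e.\ $\lnot\sat{\sem(C,S_0)}{Q}$. Hence $S_0$ is a set satisfying $P$ whose image under $\sem(C,\cdot)$ violates $Q$, so by \defref{def:hoare_triples}, $\hoare{P}{C}{Q}$ does not hold. This completes the equivalence.

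I do not expect a serious obstacle here: the statement is essentially a semantic restatement of what it means for a validity to fail, and the proof is a short unfolding of definitions with the singleton hyper-assertion as the only mild ``construction.'' The one point worth being careful about is the meaning of $\lnot Q$ as a hyper-assertion and the (trivial) fact that $\sat{S}{\lnot Q}$ is by definition $\lnot \sat{S}{Q}$ — this makes the passage between ``$Q$ fails on $\sem(C,S_0)$'' and ``$\lnot Q$ holds on $\sem(C,S_0)$'' immediate. A secondary remark is that this argument, unlike \thmref{thm:expressing_hyperprop}, needs no cardinality assumptions on $\lvars$ or $\lvals$, since we reuse the concrete witness set directly rather than re-encoding program states into logical variables.
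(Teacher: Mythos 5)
Your proof is correct and follows essentially the same route as the paper's: negate \defref{def:hoare_triples} to obtain a witness set $S_0$, take $P' \triangleq (\lambda S \ldotp S = S_0)$ for the forward direction, and unfold satisfiability plus entailment for the converse. No gaps.
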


\thibault{
\begin{proof}
    By negating \defref{def:hoare_triples},
    we get that point (1) is equivalent to the existence of a set of extended states $S$ such that
    $\sat{S}{P}$ holds but $\sat{\sem(C, S)}{Q}$ does not, \ie{} $\sat{\sem(C, S)}{\lnot Q}$ holds.
    Let $P' \triangleq (\lambda S' \ldotp S = S')$. $P'$ is clearly satisfiable.
    Moreover, point (1) implies that $P'$ entails $P$,
    and that $\hoare{P'}{C}{\lnot Q}$ holds. Thus, (1) implies (2).

    Assuming (2), we get that there exists a set of extended states $S$ such that $\sat{S}{P'}$ (since $P'$ is satisfiable)
    and $\sat{\sem(C, S)}{\lnot Q}$ hold.
    Since $P'$ entails $P$, $\sat{S}{P}$ holds, which implies (1).
\end{proof}
}

We need to strengthen $P$ to $P'$ in point~(2), because there might be some sets $S$, $S'$ that both satisfy $P$, such that $Q(\sem(C, S))$ holds, but $Q(\sem(C, S'))$ does not.
This was the case for our examples in \secref{subsec:k-safety} and \secref{subsec:beyond-k-safety}; 
for instance, one of the preconditions there was strengthened to 
include $\exists \inSet{\varphi_1}, \inSet{\varphi_2} \ldotp \varphi_1(h) \neq \varphi_2(h)$.

\thmref{thm:disproving} is another illustration of the expressivity of \logic{}. The corresponding result does \emph{not} hold in traditional Hoare logics.
For example, \thibault{(1)}~the classical Hoare triple $\normalHoare{\top}{\chavoc{x}}{x \geq 5}$ does not hold,
but \thibault{(2)}~there is \emph{no} satisfiable $P$ such that $\normalHoare{P}{\chavoc{x}}{ \lnot (x \geq 5)}$ holds.
\thibault{Moreover, to disprove this triple, one needs to show the \emph{existence} of an execution that satisfies the negated postcondition, which is not expressible in HL.}
In contrast, \logic{} can disprove the classical Hoare triple by proving the hyper-triple
$\simpleHoare{\exists \inSet{\varphi} \ldotp \top}{\chavoc{x}}{ \neg(\forall \inSet{\varphi} \ldotp \varphi(x) \geq 5) }$.

The correspondence between hyper-triples and program hyperproperties (\thmref{thm:expressing_hyperprop} and \thmref{thm:expressing-hyper-triples}), together with our completeness result (\thmref{thm:complete}) precisely characterizes the expressivity of \logic{}.
In \ifextended{\appref{app:expressivity}}{our extended version~\cite{hhlExtended}},
we show systematic ways to express the judgments of existing over- and underapproximating Hoare logics as hyper-triples.

\subsection{Compositionality}
\label{subsec:compositionality}

The core rules of \logic{} allow one to prove any valid hyper-triple, but not necessarily \emph{compositionally}.
As an example, consider the sequential composition of a command $C_1$ that satisfies \emph{generalized} non-interference (GNI) with a command $C_2$ that satisfies non-interference (NI).
We would like to prove that $C_1 \cseq C_2$ satisfies GNI (the weaker property).
As discussed in \secref{subsec:beyond-k-safety},
a possible postcondition for $C_1$ is
$\gni{l}{h} \triangleq (\forall \inSet{\varphi_1}, \inSet{\varphi_2} \ldotp \exists \inSet{\varphi} \ldotp \varphi_1(h) = \varphi(h) \land \varphi(l) = \varphi_2(l))$,
while a possible precondition for $C_2$ is
$\low{l} \triangleq (\forall \inSet{\varphi_1}, \inSet{\varphi_2} \ldotp \varphi_1(l) = \varphi_2(l))$.
Unfortunately, the corresponding hyper-triples for $C_1$ and $C_2$ cannot be composed using the core rules.
In particular, rule \namerule{Seq} cannot be applied (even in combination with \namerule{Cons}), since the postcondition of $C_1$ does not imply the precondition of $C_2$.
Note that this observation does \emph{not} contradict completeness: By
\thmref{thm:complete}, it is possible to prove \emph{more precise} triples for $C_1$ and $C_2$, such that the postcondition of $C_1$ matches the precondition of $C_2$.
However, to enable modular reasoning, our goal is to construct the proof by composing the given triples for the individual commands rather than deriving new ones.

We have \proven{} \thibault{(in \isabelle)} a number of useful \emph{compositionality rules} for hyper-triples,
which are presented in \ifextended{\appref{app:compositionality}}{%
~\cite{hhlExtended}}.
These rules are \emph{admissible} in \logic{}, in the sense that they do not modify the set of valid hyper-triples that can be \proven{}.
Rather, they enable flexible compositions of hyper-triples.
\thibault{As an example, we have \proven{} the following compositionality rule
\begin{center}
$ \small
    \Inf[\mathit{BigUnion}]{ \shoare{P}{C}{Q} }{ \shoare{\bigotimes P}{C}{\bigotimes Q} }
$
\end{center}
where $\bigotimes P \triangleq \left( \lambda S \ldotp \exists F \ldotp (S = \bigcup_{S' \in F} S') \land (\forall S' \in F \ldotp P(S')) \right)$.
This rule is useful when we have a set of states $S$ that does not directly satisfy the precondition $P$,
but that can be decomposed as a \emph{union} of smaller sets $S'$ that individually satisfy $P$.
After executing $C$, we get a set of states that is a union of smaller sets that individually satisfy $Q$.
}

\thibault{In our aforementioned example, we have a set of states $S$ that satisfies the postcondition $\forall \inSet{\varphi_1}, \inSet{\varphi_2} \ldotp \exists \inSet{\varphi} \ldotp \varphi_1(h) = \varphi(h) \land \varphi(l) = \varphi_2(l)$ of $C_1$.
While $S$ does not necessarily satisfy the precondition $\low{l}$ of $C_2$,
$S$ can be seen as a union of smaller sets of states $\{ \varphi, \varphi_2 \}$ that individually satisfy the precondition of $C_2$.
Thus, using the compositionality rule \namerule{BigUnion},
we obtain after $C_2$ a set of states that is a union of smaller sets that all satisfy the postcondition of $C_2$,
from which we can then prove the desired postcondition $\gni{l}{h}$ for $C_1 \cseq C_2$.%
\footnote{\ifextended{As we show in \appref{subsect:comp-examples}, we}{We}
also need to prove that $C_2$ does not drop executions depending on the value of $h$.}
The full proof of this example is
\ifextended{presented in \appref{subsect:comp-examples}}{shown in our extended version~\cite{hhlExtended}},
along with another challenging example.
}

\section{Syntactic Rules}
\label{sec:syntactic}
The core rules presented in \secref{sec:logic} are optimized for expressivity: They are
sufficient to prove \emph{any} valid hyper-triple (\thmref{thm:complete}), but not necessarily in the simplest way. %
In particular, the rules for atomic statements \namerule{Assume}, \namerule{Assign}, and \namerule{Havoc} require a set comprehension in the precondition,
which is necessary when dealing with arbitrary semantic hyper-assertions.
However, by imposing syntactic restrictions on hyper-assertions,
we can derive simpler rules,
as we show in this section.
In \secref{subsec:syntactic-assertions}, we define a syntax for hyper-assertions, in which the set of states occurs only as range of  universal and existential quantifiers.
As we have seen in \secref{sec:hyper-triples} and formally show in
\ifextended{\appref{app:expressivity}}{our extended version~\cite{hhlExtended}}, this syntax is sufficient to capture many useful hyperproperties.
Moreover, it allows us to derive simple rules for assignments (\secref{subsec:assignment-rules})
and assume statements (\secref{subsec:assume-rule}).
All rules presented in this section have been proven sound in \isabelle.

\subsection{Syntactic Hyper-Assertions}
\label{subsec:syntactic-assertions}

We define a restricted class of syntactic hyper-assertions, which can interact with the set of states only through universal and existential quantification over its states:
\begin{definition}
\textbf{Syntactic hyper-expressions and hyper-assertions.}\\
\emph{Hyper-expressions} $e$ are defined by the following syntax,
where $\varphi$ ranges over states, $x$ over (program or logical) variables, $y$ over quantified variables, $c$ over literals,
$\oplus$  over binary operators (such as $+, -, *$ for integers, $\concat{}$ for lists, etc.),
and $f$ denotes functions from values to values (such as $\mathit{len}$ for lists):
$$\mathit{e} \Coloneqq
c \mid
y \mid
\pproj{\varphi}(x) \mid
\lproj{\varphi}(x) \mid
e \oplus e \mid
f(e)
$$
\emph{Syntactic hyper-assertions} $A$ are defined by the following syntax,
where $e$ ranges over hyper-expressions,
$b$ over boolean literals, and $\succeq$ over binary operators (such as $=, \neq, <, >, \leq, \geq, \ldots$):
$$A \Coloneqq
b \mid
e \succeq e \mid
A \lor A \mid
A \land A \mid
\forall y \ldotp A \mid
\exists y \ldotp A \mid
\forall \inSet{\varphi} \ldotp A \mid
\exists \inSet{\varphi} \ldotp A
$$
\end{definition}

Note that \emph{hyper-expressions} are different from \emph{program} expressions,
since the latter can only refer to program variables of a \emph{single} implicit state (\eg $x = y + z$),
while the former can explicitly refer to different states (\eg{} $\varphi(x) = \varphi'(x)$).
Negation $\lnot A$ is defined recursively in the standard way.
We also define
$(A \Rightarrow B) \triangleq (\lnot A \lor B)$,
$\emp{} \triangleq (\forall \inSet{\varphi} \ldotp \bot)$,
and $\always p \triangleq (\forall \inSet{\varphi} \ldotp p(\varphi))$,
where $p$ is a \emph{state}\footnote{\emph{State} expressions refer to a single (implicit) state. In contrast to program expressions, they may additionally refer to logical variables and use quantifiers over values.}
expression.
The evaluation of hyper-expressions and satisfiability of hyper-assertions
are formally defined in \ifextended{\appref{app:definitions}~(\defref{def:sat})}{our extended version~\cite{hhlExtended}}.

\begin{figure*}[t!]
\begin{center}
    \footnotesize
    \[
    \begin{array}{c}

    \Inf[\mathit{AssignS}]{\shoare{\transformAssign{e}{x}{P}}{\cassign{x}{e}}{P}}

    \hspace{4mm}
    \Inf[\mathit{HavocS}]{\shoare{\transformHavoc{x}{P}}{\chavoc{x}}{P}}

    \hspace{4mm}
    \Inf[\mathit{AssumeS}]{\shoare{\transformAssume{b}{P}}{\cassume{b}}{P}}

    \end{array}
    \]
\end{center}

\caption{Some syntactic rules of \logic{}.
The syntactic transformations $\transformAssign{e}{x}{A}$ and $\transformHavoc{x}{A}$ are defined in \defref{def:transformAssigns},
and the syntactic transformation $\transformAssume{b}{\_}$ is defined in \defref{def:transformAssume}.
}
\label{fig:syntactic-rules}
\end{figure*}

\subsection{Syntactic Rules for Deterministic and Non-Deterministic Assignments}
\label{subsec:assignment-rules}

In classical Hoare logic, we obtain the precondition of the rule for the assignment $\cassign{x}{e}$ by substituting $x$ by $e$ in the postcondition.
The \logic{} syntactic rule for assignments \namerule{AssignS}
(\figref{fig:syntactic-rules})
generalizes this idea by repeatedly applying this substitution for \emph{every quantified state}.
This syntactic transformation, written $\transformAssign{e}{x}{\_}$ is defined below.
As an example,
for the assignment $\cassign{x}{y+z}$ and
postcondition $\exists \inSet{\varphi} \ldotp \forall \inSet{\varphi'} \ldotp \varphi(x) \leq \varphi'(x)$,
we obtain the precondition
$\transformAssign{y + z}{x}{\exists \inSet{\varphi} \ldotp \forall \inSet{\varphi'} \ldotp \varphi(x) \leq \varphi'(x) }
= \left( \exists \inSet{\varphi} \ldotp \forall \inSet{\varphi'} \ldotp \varphi(y) + \varphi(z) \leq \varphi'(y) + \varphi'(z) \right) $.

Similarly, our syntactic rule for non-deterministic assignments \namerule{HavocS} substitutes every occurrence of $\varphi(x)$,
for every quantified state $\varphi$,
by a fresh quantified variable $v$.
This variable is universally quantified for universally-quantified states, capturing the intuition that %
we must consider all possible assigned values. %
In contrast, $v$ is existentially quantified for existentially-quantified states, 
because it is sufficient to find one suitable behavior of the non-deterministic assignment.
As an example, for the non-deterministic assignment $\chavoc{x}$ and the aforementioned postcondition, we obtain the precondition
$\transformHavoc{x}{\exists \inSet{\varphi} \ldotp \forall \inSet{\varphi'} \ldotp \varphi(x) \leq \varphi'(x) }
= \left( \exists \inSet{\varphi} \ldotp \exists v \ldotp \forall \inSet{\varphi'} \ldotp \forall v' \ldotp v \leq v' \right) $.

\begin{definition}\textbf{Syntactic transformations for assignments.}%
\\
\label{def:transformAssigns}
$\transformAssign{e}{x}{A}$ yields the \hyperassertion{} $A$, where $\varphi(x)$ is syntactically substituted by $\expApplied{e}{\varphi}$ for all (existentially or universally) quantified states $\varphi$.
The two main cases are:
$$\transformAssign{e}{x}{\forall \inSet{\varphi} \ldotp A} \triangleq \left( \forall \inSet{\varphi} \ldotp \transformAssign{e}{x}{A[\expApplied{e}{\varphi}/\varphi(x)]} \right)
\hspace{6mm}
\transformAssign{e}{x}{\exists \inSet{\varphi} \ldotp A} \triangleq \left( \exists \inSet{\varphi} \ldotp \transformAssign{e}{x}{A[\expApplied{e}{\varphi}/\varphi(x)]} \right)
$$
where $A[y/x]$ refers to the standard syntactic substitution of $x$ by $y$.
Other cases 
apply $\transformAssignEmpty{e}{x}$ recursively
(\eg{} $\transformAssign{e}{x}{A \land B} \triangleq \transformAssign{e}{x}{A} \land \transformAssign{e}{x}{B}$).
\ifextended{The full definition is in \appref{app:definitions}}{%
See~\citet{hhlExtended} for the full definition}.

$\transformHavoc{x}{A}$ yields the \hyperassertion{} $A$ where $\varphi(x)$ is syntactically substituted by a fresh quantified variable $v$,
universally (resp. existentially) quantified for universally (resp. existentially) quantified states.
The two main cases are (where $v$ is fresh):
$$\transformHavoc{x}{\forall \inSet{\varphi} \ldotp A} \triangleq \left( \forall \inSet{\varphi} \ldotp \forall v \ldotp \transformHavoc{x}{A[v/\varphi(x)]} \right)
\hspace{6mm}
\transformHavoc{x}{\exists \inSet{\varphi} \ldotp A} \triangleq \left( \exists \inSet{\varphi} \ldotp \exists v \ldotp \transformHavoc{x}{A[v/\varphi(x)]} \right)$$
Other cases apply $\transformHavocEmpty{x}$ recursively.
\ifextended{The full definition is in \appref{app:definitions}}{%
See~\citet{hhlExtended} for the full definition}.
\end{definition}

\subsection{Syntactic Rules for Assume Statements}
\label{subsec:assume-rule}

Intuitively, $\cassume{b}$ provides additional information when proving properties \emph{for all} states, but imposes an additional requirement when proving \emph{the existence} of a state.
This intuition is captured by the rule \namerule{AssumeS} shown in \figref{fig:syntactic-rules}. The syntactic transformation $\transformAssumeEmpty{b}$
adds the state expression $b$ as an assumption for universally-quantified states,
and as a proof obligation for existentially-quantified states.
As an example, for the statement $\cassume{x \geq 0}$ and the postcondition $\forall \inSet{\varphi} \ldotp \exists \inSet{\varphi'} \ldotp \varphi(x) \leq \varphi'(x)$,
we obtain the precondition
$\transformAssume{x \geq 0}{\forall \inSet{\varphi} \ldotp \exists \inSet{\varphi'} \ldotp \varphi(x) \leq \varphi'(x) }
=
(\forall \inSet{\varphi} \ldotp \varphi(x) \geq 0 \Rightarrow (\exists \inSet{\varphi'} \ldotp \varphi'(x) \geq 0 \land \varphi(x) \leq \varphi'(x) ))$.

\begin{definition}\textbf{Syntactic transformation for assume statements.}\\
\label{def:transformAssume}
The two main cases of $\transformAssumeEmpty{b}$ are
$$\transformAssume{b}{\forall \inSet{\varphi} \ldotp A} \triangleq \left( \forall \inSet{\varphi} \ldotp \expApplied{b}{\varphi} \Rightarrow \transformAssume{b}{A} \right)
\hspace{6mm}
\transformAssume{b}{\exists \inSet{\varphi} \ldotp A} \triangleq \left( \exists \inSet{\varphi} \ldotp \expApplied{b}{\varphi} \land \transformAssume{b}{A} \right)$$
Other cases apply $\transformAssumeEmpty{b}$ recursively.
\ifextended{The full definition is in \appref{app:definitions}}{%
See~\citet{hhlExtended} for the full definition}.
\end{definition}

\begin{figure}[t]
    \scriptsize
    \begin{align*}
    &\outline{
    \exists \inSet{\varphi_1}, \inSet{\varphi_2} \ldotp
\varphi_1(h) \neq \varphi_2(h)} \\
    &\outline{ \exists \inSet{\varphi_1} \ldotp (\exists \inSet{\varphi_2} \ldotp
    (\forall \inSet{\varphi} \ldotp
    \forall v \ldotp
    v \leq 9
    \Rightarrow
    (\varphi(h) = \varphi_1(h)
    \Rightarrow
    \varphi_2(h) + 9 >
    \varphi(h) + v)))
    }
    \tag{Cons}
    \\
    &\outline{ \exists \inSet{\varphi_1} \ldotp \exists v_1 \ldotp v_1 \leq 9 \land (\exists \inSet{\varphi_2} \ldotp
    \exists v_2 \ldotp
    v_2 \leq 9 \land
    (\forall \inSet{\varphi} \ldotp
    \forall v \ldotp
    v \leq 9
    \Rightarrow
    ((\varphi(h) \neq \varphi_1(h))
    \lor (\varphi(h) + v \neq \varphi_2(h) + v_2))))
    }
    \tag{Cons}
    \\
    &\chavoc{y} \cseq \\
    &\outline{ \exists \inSet{\varphi_1} \ldotp \varphi_1(y) \leq 9 \land (\exists \inSet{\varphi_2} \ldotp
    \varphi_2(y) \leq 9 \land
    (\forall \inSet{\varphi} \ldotp
    \varphi(y) \leq 9
    \Rightarrow
    (\varphi(h) \neq \varphi_1(h)
    \lor \varphi(h) + \varphi(y) \neq \varphi_2(h) + \varphi_2(y))))
    }
    \tag{HavocS}
    \\
    &\cassume{y \leq 9} \cseq \\
    &\outline{ \exists \inSet{\varphi_1}, \inSet{\varphi_2} \ldotp
    \forall \inSet{\varphi} \ldotp
    \varphi(h) \neq \varphi_1(h)
    \lor \varphi(h) + \varphi(y) \neq \varphi_2(h) + \varphi_2(y)
    }
    \tag{AssumeS}
    \\
    &\cassign{l}{h + y} \\
    &\outline{ \exists \inSet{\varphi_1}, \inSet{\varphi_2} \ldotp
    \forall \inSet{\varphi} \ldotp
    \varphi(h) \neq \varphi_1(h)
    \lor \varphi(l) \neq \varphi_2(l)
    }
    \tag{AssignS}
    \end{align*}
    \caption{Proof outline showing that the program \emph{violates} generalized non-interference.
    The rules used at each step of the derivation are shown on the right (the use of rule \namerule{Seq} is implicit).}
    \label{fig:GNIviolated}
\end{figure}

\paragraph{Example}
We now illustrate the use of our three syntactic rules for atomic statements in \figref{fig:GNIviolated},
to prove that the program 
$C_4 \triangleq (\chavoc{y} \cseq \cassume{y \leq 9} \cseq \cassign{l}{h + y})$ from \secref{subsec:k-safety}
violates GNI\@.
This program leaks information about the secret $h$ through its public output $l$ because the pad it uses (variable $y$) is upper bounded.
From the output $l$, we can derive a lower bound for the secret value of $h$, namely $h \geq l - 9$.

To see why $C_4$ violates GNI, consider two executions with different secret values for $h$, and where the execution for the larger secret value sets $y$ to exactly $9$. This execution will produce a larger public output $l$ (since the other execution adds at most $9$ to its smaller secret). Hence, these executions can be \emph{distinguished} by their public outputs.

Our proof outline in \figref{fig:GNIviolated} captures this intuitive reasoning in a natural way.
We start with the postcondition that corresponds to the negation of GNI,
and work our way backward, by successively applying our syntactic rules \namerule{AssignS}, \namerule{AssumeS}, and \namerule{HavocS}.
We conclude using the rule \namerule{Cons}:
Since the precondition implies the existence of two states with different values for $h$,
we first instantiate $\varphi_1$ and $\varphi_2$ such that $\varphi_1$ and $\varphi_2$ are both members of the set of initial states,
and $\varphi_2(h) > \varphi_1(h)$.\footnote{Note
that the quantified states $\varphi_1$, $\varphi_2$ and $\varphi$ from different hyper-assertions
can be unrelated.
That is, the witnesses for $\varphi_1$ and $\varphi_2$ in the first hyper-assertion $\color{royalblue} [\exists \inSet{\varphi_1}, \inSet{\varphi_2} \ldotp \varphi_1(h) \neq \varphi_2(h)]$
are not necessarily the same as the ones in the second hyper-assertion $\color{royalblue} [\exists \inSet{\varphi_1} \ldotp \exists \inSet{\varphi_2} \ldotp
\varphi_2(h) > \varphi_1(h)]$, which is why the entailment holds.}
We then instantiate $v_2 = 9$, such that, for any $v \leq 9$,
$\varphi_2(h) + v_2 > \varphi(h) + v$,
which concludes the proof.

\section{Proof Principles for Loops}
\label{sec:loops}
To reason about standard while loops,
we can derive from the core rule \namerule{Iter} in \figref{fig:rules}
the rule \namerule{WhileDesugared}, shown in \figref{fig:loop-rules}
(recall that $\cwhile{b}{C} \triangleq \cnwhile{(\cassume{b} \cseq C)} \cseq \cassume{\lnot b}$).
While this derived rule is expressive, it has two main drawbacks for usability:
(1)~Because of the use of the infinitary $\otimesn$, it requires non-trivial \emph{semantic} reasoning (via the consequence rule),
and
(2)~the invariant $I_n$ relates only the executions that perform \emph{at least} $n$ iterations, but ignores executions that perform fewer.

To illustrate problem (2), imagine that we want to prove that the hyper-assertion
$\low{l} \triangleq (\forall \inSet{\varphi} \ldotp \forall \inSet{\varphi'} \ldotp \varphi(l) = \varphi'(l))$ holds after a while loop.
A natural choice for our loop invariant $I_n$ would be $I_n \triangleq \low{l}$ (independent of $n$).
However, this invariant does \emph{not} entail our desired postcondition $\low{l}$.
Indeed, $\otimesn \low{l}$ holds for a set of states iff it is a \emph{union of} sets of states that all \emph{individually} satisfy $\low{l}$. This property holds trivially in our example (simply choose one set per possible value of $l$) and, in particular, does not express that the entire set of states after the loop satisfies $\low{l}$.
Note that this does not contradict completeness (\thmref{thm:complete}),
but simply means that a stronger invariant $I_n$ is needed.

In this section, we thus present three more convenient loop rules, shown in \figref{fig:loop-rules},
which capture powerful reasoning principles, and overcome those limitations:
The rule \namerule{WhileSync} (\secref{subsec:sync}) is the easiest to use,
and can be applied whenever all executions of the loop have the same control flow.
Two additional rules for while loops can be applied whenever the
control flow differs.
The rule \namerule{While-}$\forall^* \exists^*$ (\secref{subsec:forall-exists})
supports $\forall^* \exists^*$ postconditions,
while the rule \namerule{While-}$\exists$ (\secref{subsec:proving-exists}) handles postconditions with a top-level existential quantifier.
In our experience, these loop rules cover all practical hyper-assertions that can be expressed in our syntax. We are not aware of any practical program hyperproperty that requires multiple quantifier alternations.

\begin{figure*}[t]
\begin{center}
    \scriptsize
    \[
    \begin{array}{c}

    \Inf[\mathit{WhileDesugared}]{\shoare{I_n}{\cassume{b} \cseq C}{I_{n+1}}}
    {\shoare{ \otimesn I_n }{\cassume{\lnot b}}{Q}}
    {\shoare{I_0}{\cwhile{b}{C}}{Q}}

    \\[2em]

    \Inf[\mathit{WhileSync}]{I \models \low{b}}{\shoare{I \land \square b}{C}{I}}{
        \shoare{I}{\cwhile{b}{C}}{ (I \lor \emp{}) \land \square (\lnot b) }}

    \hspace{3mm}

    \Inf[\mathit{IfSync}]{P \models \low{b}}{\shoare{P \land \square b}{C_1}{Q}}{\shoare{P \land \square (\lnot b)}{C_2}{Q}}{\shoare{P}{\cif{b}{C_1}{C_2}}{Q}}

    \\[2em]

    \Inf[\mathit{While-}\forall^* \exists^*]
    {\shoare{I}{\cifonly{b}{C}}{I}}
    {\shoare{I}{\cassume{\lnot b}}{Q}}
    { \text{no } \forall \inSet{\_} \text{ after any } \exists \text{ in } Q}
    {\shoare{I}{\cwhile{b}{C}}{Q}}

    \\[2em]

    \Inf[\mathit{While-}\exists]
    {\forall v \ldotp
        \shoare{\exists \inSet{\varphi} \ldotp P_\varphi \land \expApplied{b}{\varphi} \land v = \expApplied{e}{\varphi}}
                { \cifonly{b}{C} }
                { \exists \inSet{\varphi} \ldotp P_\varphi \land \expApplied{e}{\varphi} \prec v }}
    {\forall \varphi \ldotp \shoare{P_\varphi}{ \cwhile{b}{C} }{ Q_\varphi }}
    {\prec \text{wf}}
    {\shoare{\exists \inSet{\varphi} \ldotp P_\varphi }{ \cwhile{b}{C} }{ \exists \inSet{\varphi} \ldotp Q_\varphi }}

    \end{array}
    \]
\end{center}

\caption{\logic{} rules for while loops (and branching).
Recall that $\low{b} \triangleq (\forall \inSet{\varphi}, \inSet{\varphi'} \ldotp \expApplied{b}{\varphi} = \expApplied{b}{\varphi'})$
and $\square b \triangleq ( \forall \inSet{\varphi} \ldotp b(\varphi) )$.
In the rule \namerule{WhileSync}, $\prec$ must be \emph{well-founded} ($\text{wf}$).
}
\label{fig:loop-rules}
\end{figure*}

\subsection{Synchronized Control Flow}\label{subsec:sync}

Standard loop invariants are sound in relational logics if all executions exit the loop \thibault{\emph{simultaneously}~\cite{Benton04,terauchi2005secure}}.
In our logic, this synchronized control flow can be enforced by requiring that the loop guard $b$ has the same value in all states (1)~before the loop and (2)~after every loop iteration,
as shown by the rule \namerule{WhileSync} in \figref{fig:loop-rules}.
After the loop, we get to assume $(I \lor \emp{}) \land \always (\lnot b)$.
That is, the loop guard $b$ is false in all executions, and the invariant $I$ holds, or the set of states is empty.
The $\emp{}$ disjunct corresponds to the case where the loop does not terminate (\ie \emph{no} execution terminates).
Going back to our motivating example, the natural invariant $I \triangleq \low{l}$ with the rule \namerule{WhileSync} is now sufficient for our example, since we get the postcondition $(\low{l} \lor \emp{}) \land \always (\lnot b)$, which implies our desired (universally-quantified) postcondition $\low{l}$.
In the case where the desired postcondition quantifies existentially over states at the top-level, it is necessary to prove that
\thibault{at least one execution terminates}.
We show the corresponding rule in \ifextended{\appref{app:termination}}{our extended version~\cite{hhlExtended}}.

We also provide a rule for if statements with synchronized control flow (rule \namerule{IfSync} in \figref{fig:loop-rules}),
which can be applied when all executions take the same branch.
This rule is simpler to apply than the core rule \namerule{Choice}, since it avoids the $\otimes$ operator,
which usually requires semantic reasoning.

\begin{figure}
    \tiny
   \begin{align*}
    &\outline{\forall \inSet{\varphi_1}, \inSet{\varphi_2} \ldotp \mathit{len}(\varphi_1(h)) = \mathit{len}(\varphi_2(h)) } \\
    &\outline{ \forall \inSet{\varphi_1}, \inSet{\varphi_2} \ldotp
    0 = 0 \land \mathit{len}(\varphi_1(h)) = \mathit{len}(\varphi_2(h))
    \land (\exists \inSet{\varphi} \ldotp \varphi(h) = \varphi_1(h) \land [] = []) } \tag{Cons} \\
    &\cassign{s}{0} \\
    &\cassign{l}{[]} \\
    &\cassign{i}{0} \\
    &\outline{ \forall \inSet{\varphi_1}, \inSet{\varphi_2} \ldotp
    \varphi_1(i) = \varphi_2(i) \land \mathit{len}(\varphi_1(h)) = \mathit{len}(\varphi_2(h))
    \land (\exists \inSet{\varphi} \ldotp \varphi(h) = \varphi_1(h) \land \varphi(l) = \varphi_2(l)) } \tag{AssignS} \\
    &\mathbf{while} \; (i < \mathit{len}(h)) \; \{ \\
    &\quad \outline{ (\forall \inSet{\varphi_1}, \inSet{\varphi_2} \ldotp
    \varphi_1(i) = \varphi_2(i) \land \mathit{len}(\varphi_1(h)) = \mathit{len}(\varphi_2(h))
    \land (\exists \inSet{\varphi} \ldotp \varphi(h) = \varphi_1(h) \land \varphi(l) = \varphi_2(l))) \land \always(i < \mathit{len}(h)) } \\
    &\quad \outlineStart{ \forall \inSet{\varphi_1} \ldotp \forall v_1 \ldotp \forall \inSet{\varphi_2} \ldotp \forall v_2 \ldotp
    \varphi_1(i) + 1 = \varphi_2(i) + 1 \land \mathit{len}(\varphi_1(h)) = \mathit{len}(\varphi_2(h))
    \land
    } \\ &\quad \outlineEnd{
     (\exists \inSet{\varphi} \ldotp \exists v \ldotp \varphi(h) = \varphi_1(h) \land \varphi(l) \concat [ (\varphi(s) + \varphi(h)[\varphi(i)]) \oplus v ] = \varphi_2(l)
        \concat [ (\varphi_2(s) + \varphi_2(h)[\varphi_2(i)])  \oplus v_2 ]
    ) } \tag{Cons} \\
    &\quad \cassign{s}{s + h[i]} \cseq \\
    &\quad \chavoc{k} \cseq \\
    &\quad \cassign{l}{l \concat [s \oplus k]} \cseq \\
    &\quad \cassign{i}{i + 1} \cseq \\
    &\quad \outline{ \forall \inSet{\varphi_1}, \inSet{\varphi_2} \ldotp
    \varphi_1(i) = \varphi_2(i) \land \mathit{len}(\varphi_1(h)) = \mathit{len}(\varphi_2(h))
    \land (\exists \inSet{\varphi} \ldotp \varphi(h) = \varphi_1(h) \land \varphi(l) = \varphi_2(l)) } \tag{HavocS, AssignS} \\
    &\} \\
    &\outline{((
        \forall \inSet{\varphi_1}, \inSet{\varphi_2} \ldotp
        \varphi_1(i) = \varphi_2(i) \land \mathit{len}(\varphi_1(h)) = \mathit{len}(\varphi_2(h))
        \land (\exists \inSet{\varphi} \ldotp \varphi(h) = \varphi_1(h) \land \varphi(l) = \varphi_2(l))
    )
         \lor \emp{}) \land \always(i \geq \mathit{len}(h))} \tag{WhileSync} \\
    &\outline{ \forall \inSet{\varphi_1}, \inSet{\varphi_2} \ldotp
    \exists \inSet{\varphi} \ldotp \varphi(h) = \varphi_1(h) \land \varphi(l) = \varphi_2(l) } \tag{Cons}
    \end{align*}
    \caption{A proof that the program in black satisfies generalized non-interference (where the elements of list $h$ are secret, but its length is public), using the rule \namerule{WhileSync}.
    $[]$ represents the empty list, $\concat$ represents list concatenation, $h[i]$ represents the i-th element of list $h$, and $\oplus$ represents the XOR operator.}
    \label{fig:provingGNIloop}
\end{figure}

\paragraph{Example}
The program in \figref{fig:provingGNIloop}
takes as input a list $h$ of secret values (but whose length is public),
computes its prefix sum $[h[0], h[0] + h[1], \ldots]$, and encrypts the result by performing a one-time pad on each element of this prefix sum,
resulting in the output $[h[0] \oplus k_0, (h[0] + h[1]) \oplus k_1, \ldots]$.
The keys $k_0, k_1, \ldots$ are chosen non-deterministically at each iteration, via the variable $k$.\footnote{In practice,
the keys used in this program should be stored somewhere, so that one is later able to decrypt the output.}

Our goal is to prove that the encrypted output $l$ does not leak information about the secret elements of $h$,
provided that the attacker does not have any information about the non-deterministically chosen keys. 
We achieve this by formally proving that this program satisfies GNI\@.
Since the length of the list $h$ is public, we start with the precondition $\forall \inSet{\varphi_1}, \inSet{\varphi_2} \ldotp \mathit{len}(\varphi_1(h)) = \mathit{len}(\varphi_2(h))$.
This implies that all our executions will perform the same number of loop iterations.
Thus, we use the rule \namerule{WhileSync},
with the natural loop invariant $I \triangleq
    (\forall \inSet{\varphi_1}, \inSet{\varphi_2} \ldotp
    \varphi_1(i) = \varphi_2(i) \land \mathit{len}(\varphi_1(h)) = \mathit{len}(\varphi_2(h))
    \land (\exists \inSet{\varphi} \ldotp \varphi(h) = \varphi_1(h) \land \varphi(l) = \varphi_2(l)))$.
The last conjunct corresponds to the postcondition we want to prove, while the former entails $\low{i < \mathit{len}(h)}$,
as required by the rule \namerule{WhileSync}.

The proof of the loop body starts at the end with the loop invariant $I$, and works backward, using the syntactic rules \namerule{HavocS} and \namerule{AssignS}.
From $I \land \always(i < \mathit{len}(h))$,
we have to prove that there exists a value $v$ such that
$\varphi(l) \concat [ (\varphi(s) + \varphi(h)[\varphi(i)]) \oplus v ] = \varphi_2(l)
\concat [ (\varphi_2(s) + \varphi_2(h)[\varphi_2(i)])  \oplus v_2 ]$.
Since $\varphi(l) = \varphi_2(l)$,
this boils down to proving that
$(\varphi(s) + \varphi(h)[\varphi(i)]) \oplus v = (\varphi_2(s) + \varphi_2(h)[\varphi_2(i)]) \oplus v_2$,
which we achieve by choosing
$v \triangleq (\varphi_2(s) + \varphi_2(h)[\varphi_2(i)]) \oplus v_2 \oplus (\varphi(s) + \varphi(h)[\varphi(i)])$.

\subsection{$\forall^* \exists^*$-Hyperproperties}\label{subsec:forall-exists}

Let us now turn to the more general case, where different executions might exit the loop at different iterations.
As explained at the start of this section, the main usability issue of the rule \namerule{WhileDesugared}
is the precondition $\otimesn I_n$ in the second premise, which requires non-trivial semantic reasoning.
The $\otimesn$ operator is required, because $I_n$ ignores executions that exited the loop earlier;
it relates only the executions that have performed \emph{at least} $n$ iterations.
In particular, it would be unsound to replace the precondition $\otimesn I_n$ by $\exists n \ldotp I_n$.

\setlength\intextsep{0pt}
\begin{wrapfigure}{r}{0.23\textwidth}
    \begin{minipage}{0.23\textwidth}
    \begin{align*}
    &\cassign{a}{0} \cseq \sline
    &\cassign{b}{1} \cseq \sline
    &\cassign{i}{0} \cseq \sline
    &\mathbf{while} \; (i < n) \; \{ \sline
    &\quad \cassign{tmp}{b} \cseq \sline
    &\quad \cassign{b}{a+b} \cseq \sline
    &\quad \cassign{a}{tmp} \cseq \sline
    &\quad \cassign{i}{i + 1} \sline
    &\}
    \end{align*}
    \end{minipage}
    \caption{The program $C_{\mathit{fib}}$, which computes the $n$-th Fibonacci number.}
    \label{fig:fib}
\end{wrapfigure}

The rule \namerule{While-}$\forall^* \exists^*$ in \figref{fig:loop-rules} solves this problem for the general case of $\forall^* \exists^*$-postconditions.
The key insight is to reason about the successive \emph{unrollings} of the while loop:
The rule requires to prove an invariant $I$ for the conditional statement $\cifonly{b}{C}$, in contrast to $\cassume{b} \cseq C$ in the rule \namerule{WhileDesugared}.
This allows the invariant $I$ to refer to \emph{all} executions,
\ie executions that are still running the loop (which will execute $C$), and executions that have already exited the loop (which will not execute $C$).

\paragraph{Example}
The program $C_{\mathit{fib}}$ in \figref{fig:fib} takes as input an integer $n \geq 0$
and computes the $n$-th Fibonacci number (in variable $a$).
We want to prove that $C_{\mathit{fib}}$ is monotonic, \ie that the $n$-th Fibonacci number is greater than or equal to the $m$-th Fibonacci number whenever $n \geq m$,
without making explicit what $C_{\mathit{fib}}$ computes.
Formally, we want to prove the hyper-triple\\
$\simpleHoare{ \forall \inSet{\varphi_1}{,}\inSet{\varphi_2} \ldotp \varphi_1(t) \seq 1 \sand \varphi_2(t) \seq 2 {\Rightarrow} \varphi_1(n) \sgeq \varphi_2(n) }{C_{\mathit{fib}}}{ \forall \inSet{\varphi_1}{,}\inSet{\varphi_2} \ldotp \varphi_1(t) \seq 1 \sand \varphi_2(t) \seq 2 {\Rightarrow} \varphi_1(a) \sgeq \varphi_2(a) }$,
where $t$ is a logical variable used to track the execution (as explained in \secref{subsec:k-safety}).

Intuitively, this program is monotonic because
both executions will perform at least $\varphi_2(n)$ iterations, during which they will have the same values for $a$ and $b$.
The first execution will then perform $\varphi_1(n) - \varphi_2(n)$ additional iterations, during which $a$ and $b$ will increase,
thus resulting in larger values for $a$ and $b$.%

We cannot use the rule \namerule{WhileSync} to make this intuitive argument formal, since both executions might perform a different number of iterations.
Moreover, we cannot express this intuitive argument with the rule \namerule{WhileDesugared} either,
since the invariant $I_k$ only relates executions that perform \emph{at least $k$ iterations}, as explained earlier:
After the first $\varphi_2(n)$ iterations, the loop invariant $I_k$ cannot refer to the values of $a$ and $b$ in the second execution,
since this execution has already exited the loop.

However, we can use the rule \namerule{While-}$\forall^* \exists^*$ to prove that $C_{\mathit{fib}}$ is monotonic,
with the intuitive loop invariant
$I \triangleq (\forall \inSet{\varphi_1} {,} \inSet{\varphi_2} \ldotp \varphi_1(t) \seq 1 \sand \varphi_2(t) \seq 2$ \\ $\Rightarrow
    (\varphi_1(n) \sminus \varphi_1(i) \sgeq \varphi_2(n) \sminus \varphi_2(i) \sand \varphi_1(a) \sgeq \varphi_2(a) %
    \sand \varphi_1(b) \sgeq \varphi_2(b) %
    )
    \sand \always( b \sgeq a \sgeq 0 ))$.
The first part captures the relation between the two executions:
$a$ and $b$ are larger in the first execution than in the second one, and the first execution does at least as many iterations as the second one.
The second part $\always( b \geq a \geq 0 )$ is needed to prove that the additional iterations lead to larger values for $a$ and $b$.
The proof of this example is in
\ifextended{the appendix~(\appref{app:fibonacci}).}{%
our extended version~\cite{hhlExtended}.}

\paragraph{Restriction to $\forall^* \exists^*$-hyperproperties.}
The rule \namerule{While-}$\forall^* \exists^*$ is quite general and powerful,
since it can be applied to prove any postcondition of the shape $\forall^* \exists^*$,
which includes \emph{all} safety hyperproperties, as well as \thibault{some} liveness hyperproperties such as GNI\@.
However, it cannot be applied for postconditions with a top-level existential quantification over states, because this would be unsound.
Indeed, a triple such as $\shoare{\exists \inSet{\varphi} \ldotp \forall \inSet{\varphi'} \ldotp I}{ \cifonly{b}{C} }{\exists \inSet{\varphi} \ldotp \forall \inSet{\varphi'} \ldotp I}$
implies that, for any $n$, there exists a state $\varphi$ such that $I$ holds for all states $\varphi'$ reached after \emph{unrolling the loop $n$ times}.
The key issue is that $\varphi$ might not be a valid witness for states $\varphi'$ reached after \emph{more than $n$ loop unrollings},
and therefore we might have different witnesses for $\varphi$ for each value of $n$.
We thus have no guarantee that there is a \emph{global} witness that works for all states $\varphi'$ after any \emph{number} of loop unrollings. To handle such examples, we present a rule for
$\exists^* \forall^*$-hyperproperties
next.

\subsection{$\exists^* \forall^*$-Hyperproperties}
\label{subsec:proving-exists}

\begin{wrapfigure}{r}{0.22\textwidth}
    \begin{minipage}{0.22\textwidth}
    \begin{align*}
    &\cassign{x}{0} \cseq \\[-5pt]
    &\cassign{y}{0} \cseq \\[-5pt]
    &\cassign{i}{0} \cseq \\[-5pt]
    &\mathbf{while} \; (i < k) \; \{ \\[-5pt]
    &\quad \chavoc{r} \cseq \\[-5pt]
    &\quad \cassume{r \geq 2} \cseq \\[-5pt]
    &\quad \cassign{t}{x} \cseq \\[-5pt]
    &\quad \cassign{x}{2*x + r} \cseq \\[-5pt]
    &\quad \cassign{y}{y + t * r} \cseq \\[-5pt]
    &\quad \cassign{i}{i + 1} \\[-5pt]
    &\}
    \end{align*}
    \end{minipage}
    \caption{A program with a final state with minimal values for $x$ and $y$.}
    \label{fig:minimum}
    \vspace{1mm}
\end{wrapfigure}

The rule \namerule{While-}$\forall^* \exists^*$ can be applied for any postcondition of the form $\forall^* \exists^*$,
which includes all safety hyperproperties as well as \thibault{some} liveness hyperproperties such as GNI,
but cannot be applied to prove postconditions with a top-level existential quantifier,
such as postconditions of the shape $\exists^* \forall^*$ (\eg to prove the existence of minimal executions, or to prove that a $\forall^* \exists^*$-hyperproperty is violated).
In this case, we can apply
the rule \namerule{While-}$\exists$ in \figref{fig:loop-rules}.
To the best of our knowledge, this is the first program logic rule that can deal with $\exists^* \forall^*$-hyperproperties for loops.
This rule splits the reasoning into two parts:
First, we prove that there is a \emph{terminating} state $\varphi$ such that the hyper-assertion $P_\varphi$ holds after some number of loop unrollings.
This is achieved via the first premise of the rule, which requires a well-founded relation $\prec$, and a variant $e(\varphi)$ that strictly decreases at each iteration,
until $b(\varphi)$ becomes false and $\varphi$ exits the loop.\footnote{Note
that the existentially-quantified state $\varphi$ in the postcondition of the first premise of the rule \namerule{While-}$\exists$
does \emph{not} have to be from the same execution as the one in the precondition.}
In a second step, we fix the state $\varphi$ (since it has exited the loop),
which corresponds to our global witness,
and prove $\shoare{P_\varphi}{\cwhile{b}{C}}{Q_\varphi}$ using any loop rule.
For example, if $P_\varphi$ has another top-level existential quantifier, we can apply the rule \namerule{While-}$\exists$ once more;
if $P_\varphi$ is a $\forall^* \exists^*$-hyper-assertion,
we can apply the rule \namerule{While-}$\forall^* \exists^*$.

As an example, consider proving that the program $C_m$ in \figref{fig:minimum} has a final state with a minimal value for $x$ and $y$.
Formally, we want to prove the triple\\
$\simpleHoare{\lnot \emp{} \land \always(k \geq 0)}{C_m}{ \exists \inSet{\varphi} \ldotp \forall \inSet{\alpha} \ldotp \varphi(x) \leq \alpha(x) \land \varphi(y) \leq \alpha(y) }$.
Since the set of initial states is not empty and $k$ is always non-negative, we know that
there is an initial state with a minimal value for $k$.
We prove that this state leads to a final state with minimal values for $x$ and $y$, using the rule \namerule{While-}$\exists$.
For the first premise,
we choose the variant\footnote{We interpret $\prec$ as $<$ between natural numbers, \ie $a \prec b$ iff $0 \leq a$ and $a < b$, which is well-founded.} $k - i$,
and the invariant
$P_\varphi \triangleq
(
    \forall \inSet{\alpha} \ldotp 0 \leq \getvar{\varphi}{x} \leq \getvar{\alpha}{x} \land 0 \leq \getvar{\varphi}{y} \leq \getvar{\alpha}{y} \land \getvar{\varphi}{k} \leq \getvar{\alpha}{k} \land \getvar{\varphi}{i} = \getvar{\alpha}{i}
)$,
capturing both that $\varphi$ has minimal values for $x$ and $y$, but also that $\varphi$ will be the first state to exit the loop.
We prove that this is indeed an invariant for the loop, by choosing $r = 2$ for the non-deterministic assignment for $\varphi$.
Finally, we prove the second premise with $Q_\varphi \triangleq (\forall \inSet{\alpha} \ldotp 0 \leq \varphi(x) \leq \alpha(x) \land 0 \leq \varphi(y) \leq \alpha(y))$
and the rule \namerule{While-}$\forall^* \exists^*$.
The proof of this example is in
\ifextended{the appendix~(\appref{app:minimum}).}{%
our extended version.}

\section{Related Work}
\label{sec:related_work}

\mypar{Overapproximate (relational) Hoare logics}
Hoare Logic originated with the seminal works of \citet{FloydLogic} and \citet{HoareLogic},
with the goal of proving programs functionally correct.
Relational Hoare Logic~\cite{Benton04} (RHL) extends Hoare Logic to reason about (2-safety) hyperproperties of a single program as well as properties relating the executions of two different programs (\eg semantic equivalence). RHL's ability to relate the executions of two different programs is also useful in the context of proving $2$-safety hyperproperties of a single program,
in particular, when the two executions take different branches of a conditional statement. In comparison, \logic{} can prove and disprove hyperproperties of a single program (\secref{subsec:expressivity}), but requires a program transformation to express relational properties
(see \ifextended{end of \appref{subsec:beyond}}{our extended version~\cite{hhlExtended}}).
Extending \logic{} to multiple programs is interesting future work.

RHL has been extended in many ways, for example to
deal with heap-manipulating~\cite{RelationalSL}
and higher-order~\cite{aguirre2017relational}
programs.
A family of Hoare and separation logics~\cite{AmtoftSIF,Costanzo2014,SecCSL,CommCSL} designed to prove non-interference~\cite{volpano1996nonInterference} specializes RHL by considering triples with
a single
program, similar to \logic{}.
\citet{SurveyRHL} provides an overview of the principles underlying relational Hoare logics.
Cartesian Hoare Logic~\cite{CHL16} (CHL) extends RHL to reason about 
hyperproperties of $k$ executions, with a focus on automation and scalability.
CHL has recently been reframed~\cite{HypersafetyCompositionally}
as a weakest-precondition calculus,
increasing its support for proof compositionality.
\logic{} can express the properties supported by CHL, in addition to many other properties; automating \logic{} is future work.

\mypar{Underapproximate program logics}
Reverse Hoare Logic~\cite{ReverseHL} is an underapproximate variant of Hoare Logic, designed to prove the existence of good executions.
The recent Incorrectness Logic~\cite{IncorrectnessLogic} adapts this idea to prove the presence of bugs.
Incorrectness Logic has been extended with concepts from separation logic to reason about heap-manipulating sequential~\cite{ISL} and concurrent~\cite{CISL} programs.
It has also been extended to prove the presence of insecurity in a program (\ie to disprove $2$-safety hyperproperties)~\cite{InsecurityLogic}.
Underapproximate logics have been successfully used as foundation of industrial bug-finding tools~\cite{Blackshear2018,Gorogiannis2019,Distefano2019,Le2022}.
\logic{} enables proving and disproving hyperproperties within the same logic.

Several recent works have proposed approaches to unify over- and underapproximate reasoning.
Exact Separation Logic~\cite{maksimovic2022exact}
can establish both overapproximate and (backward) underapproximate properties over single executions of heap-manipulating programs,
by employing triples that describe \emph{exactly} the set of reachable states.
Local Completeness Logic~\cite{LCL-logic,correctnessIncorrectnessAI}
unifies over- and underapproximate reasoning in the context of abstract interpretation,
by building on Incorrectness Logic, and enforcing a notion of \emph{local completeness} (no false alarm should be produced relative to some fixed input).
HL and IL have been both embedded in a Kleene algebra with diamond operators and countable joins of tests~\cite{UnifyingHLandIL}.
Dynamic Logic~\cite{harel1979first} is an extension of modal logic that can express both overapproximate and underapproximate guarantees over single executions of a program.

Outcome Logic~\cite{OutcomeLogic} (OL) unifies overapproximate and (forward) underapproximate reasoning for heap-manipulating and probabilistic programs,
by combining and generalizing the standard overapproximate Hoare triples with forward underapproximate triples\ifextended{ (see \appref{subsec:liveness})}{}.
OL (instantiated to the powerset monad) uses a semantic model similar to our extended semantics (\defref{def:extended_semantics}),
and a similar definition for triples (\defref{def:hoare_triples}).
Moreover, a theorem similar to our \thmref{thm:disproving} holds in OL, \ie invalid OL triples can be disproven within OL\@.
The key difference with \logic{} is that OL does not support reasoning about hyperproperties.
OL assertions are composed of atomic unary assertions, which can express the existence and the absence of certain states,
but not relate states with each other, which is key to expressing hyperproperties.
OL does not provide logical variables, on which we rely to express certain hyperproperties (see \secref{subsec:k-safety}).

\mypar{Logics for $\forall^* \exists^*$-hyperproperties}
\citet{next700RHL} present a general framework for defining relational program logics for arbitrary monadic effects (such as state, input-output, nondeterminism, and discrete probabilities),
for two executions of two (potentially different) programs.
Their key idea is to map \emph{pairs} of (monadic) computations to relational specifications, using relational \emph{effect observations}.
In particular, they discuss instantiations for $\forall \forall$-, $\forall \exists$-, and $\exists \exists$-hyperproperties.
RHLE~\cite{RHLE} supports overapproximate and (a limited form of) underapproximate reasoning,
as it can establish  $\forall^* \exists^*$-hyperproperties, such as generalized non-interference (\secref{subsec:beyond-k-safety}) and program refinement.
\thibault{BiKAT~\cite{BiKat},
an algebra of alignment for relational verification, can be used directly to prove $\forall \forall$-properties.
Moreover, $\forall \exists$-properties between two programs $C_1$ and $C_2$ can also be proved with BiKAT,
by proving that a corresponding $\forall \forall$-property holds for some \emph{alignment witness},
\ie a program that overapproximates the behavior of $C_1$
while underapproximating the behavior of $C_2$.}
\thibault{All three frameworks can be used to}
reason about relational properties of multiple programs, whereas \logic{} requires a program transformation to handle such properties.
On the other hand, our logic supports a wider range of underapproximate reasoning and can express properties not handled by any of them,
\eg $\exists^* \forall^*$-hyperproperties \thibault{and hyperproperties relating an unbounded or infinite number of executions}.
Moreover, even for $\forall^* \exists^*$-hyperproperties, \logic{}
provides while loop rules that have no equivalent in these logics, such as the rules
\namerule{While-}$\exists$ (useful in this context for $\exists^*$-hyperproperties)
and
\namerule{While-}$\forall^* \exists^*$~(\secref{sec:loops}). 

\thibault{Note that one can in principle use BiKAT to prove $\exists \forall$-properties,
by essentially proving the negation of $\forall \exists$-properties:
To prove that an $\exists \forall$-property between two programs $C_1$ and $C_2$ holds,
one needs to consider all programs $W$ that overapproximate the behavior of $C_1$ and underapproximate the behavior of $C_2$,
and prove that $W$ does \emph{not} satisfy a $\forall \forall$-property.}

\mypar{Probabilistic Hoare logics}
Many assertion-based logics for probabilistic programs have been proposed~\cite{Ramshaw79, probaHL,corin2006probabilistic,Ellora,probabilisticSL, Rand2015}.
These logics typically employ assertions over \emph{probability (sub-)distributions} of states,
which bear some similarities to hyper-assertions:
Asserting the existence (resp.\ absence) of a state is analogous to asserting that the probability of this state is strictly positive (resp.\ zero).
\thibault{Taking the union of two sets of states is analogous to taking the sum of two sub-distributions.
Our operator $\otimes$ (\defref{def:otimes}) used in the rule \namerule{Choice}
is thus similar to the operator $\oplus$ from~\citet{Ellora}}.
Notably, our loop rule \namerule{While}-$\forall^* \exists^*$ draws some inspiration from the rule \namerule{While} of~\citet{Ellora},
which also requires an invariant that holds for all \emph{unrollings} of the loop.
These probabilistic logics have also been extended to the relational setting~\cite{barthe2009formal},
for instance to reason about the equivalence of probabilistic programs.

\mypar{Verification of hyperproperties}
The concept of hyperproperties has been formalized by \citet{hyperproperties}.
Verifying that a program satisfies a $k$-safety hyperproperty can be reduced to verifying a
\thibault{safety}
property of
the \emph{self-composition} of the program~\cite{selfComposition,terauchi2005secure}
(\eg by sequentially composing the program with renamed copies of itself).
Self-composition has been generalized to product programs~\cite{productPrograms,eilers2019modular}.
(Extensions of) product programs have also been used to verify relational properties such as program refinement~\cite{barthe2013beyond}
and probabilistic relational properties such as differential privacy~\cite{diffPrivacyHL}.
The temporal logics LTL, CTL, and CTL*, have been extended to HyperLTL and HyperCTL*~\cite{clarkson2014temporal}
to specify hyperproperties,
and model-checking algorithms~\cite{coenen2019verifying, Hsu21,beyondkSafety,AutoHyperQ} have been proposed to verify hyperproperties expressed in these logics,
including hyperproperties outside the safety class.
\citet{Unno2021} propose an approach to automate relational verification (\thibault{including $\forall^* \exists^*$-properties such as GNI}) based on an extension of constrained Horn-clauses.
Relational properties of imperative programs can be verified by reducing them to validity problems in trace logic~\cite{relationalUsingTraceLogic}.
Finally, the notion of hypercollecting semantics~\cite{assaf2017hypercollecting} (similar to our extended semantics) has been proposed to statically analyze information flow using abstract interpretation~\cite{cousot1977abstract}.
\thibault{One major difference between our extended semantics and
this hypercollecting semantics is the treatment of loops.
The former is defined directly on top of the big-step semantics (\defref{def:extended_semantics}), whereas the latter is defined inductively, and, in the case of loops, as a fixpoint over sets of sets of traces,
which is more suitable for abstract interpretation, but less precise than the extended semantics.
This difference in precision matters
for hyperproperties that are not subset-closed (such as GNI)~\cite{PasquaPhD2019,Naumann2019}.
}

\section{Conclusion and Future Work}
\label{sec:conclusion}
We have presented \logic{}, a novel, sound, and complete program logic that supports reasoning about a wide range of hyperproperties.
It is based on a simple but powerful idea: reasoning directly about the \emph{set} of states at a given program point, instead of a fixed number of states.
We have demonstrated that \logic{} is very expressive: It can be used to prove or disprove \emph{any} program hyperproperty over terminating executions,
including $\exists^* \forall^*$-hyperproperties and hyperproperties relating an unbounded or infinite number of executions,
which goes beyond the properties handled by existing Hoare logics.
Moreover, we have presented syntactic rules, compositionality rules, and rules for loops that capture important proof principles naturally.

We believe that \logic{} is a powerful foundation for reasoning about the correctness and incorrectness of program hyperproperties. We plan to build on this foundation in our future work.
First, we will explore automation for \logic{} by developing an encoding into an SMT-based verification system such as Boogie~\cite{Boogie}. 
Second, we will extend the language supported by the logic, in particular, to include a heap. The main challenge will be to adapt concepts from separation logic to hyper-assertions,
\eg to find a suitable definition for the separating conjunction of two hyper-assertions.
Third, we will explore an extension of \logic{} that can relate multiple programs.

\begin{acks}
We thank the anonymous POPL'24 and PLDI'24 reviewers for their valuable feedback.
This work was partially funded by the Swiss National Science Foundation (SNSF) under Grant No. 197065.
\end{acks}

\section*{Data Availability Statement}
All technical results presented in this paper have been formalized and proven in Isabelle/HOL, and our formalization is publicly available~\cite{artifact,HyperHoareLogic-AFP}.

\bibliography{references}

\ifextended{
\clearpage
\appendix

\section{Technical Definitions Omitted from the Paper}
\label{app:definitions}
\begin{figure*}[h]
    \footnotesize
\begin{center}
\[
\begin{array}{c}
\Inf{\bigstep{\cskip}{\sigma}{\sigma}}
\hspace{3mm}
\Inf{\bigstep{\cassign{x}{e}}{\sigma}{\sigma[x \mapsto e(\sigma)]}}
\hspace{3mm}
\Inf{\bigstep{\chavoc{x}}{\sigma}{\sigma[x \mapsto v]}}
\hspace{3mm}
\Inf{\bigstep{C_1}{\sigma}{\sigma'}}{\bigstep{C_2}{\sigma'}{\sigma''}}{\bigstep{C_1 \cseq C_2}{\sigma}{\sigma''}}
\\[2em]
\Inf{\bigstep{C_1}{\sigma}{\sigma'}}{\bigstep{\cnif{C_1}{C_2}}{\sigma}{\sigma'}}
\hspace{3mm}
\Inf{\bigstep{C_2}{\sigma}{\sigma'}}{\bigstep{\cnif{C_1}{C_2}}{\sigma}{\sigma'}}
\hspace{3mm}
\Inf{b(\sigma)}{\bigstep{\cassume{b}}{\sigma}{\sigma}}
\hspace{3mm}
\Inf{\bigstep{C}{\sigma}{\sigma'}}{\bigstep{\cnwhile{C}}{\sigma'}{\sigma''}}{\bigstep{\cnwhile{C}}{\sigma}{\sigma''}}
\hspace{3mm}
\Inf{\bigstep{\cnwhile{C}}{\sigma}{\sigma}}
\end{array}
\]
\end{center}
\caption{Big-step semantics. Since expressions are functions from states to values, $e(\sigma)$ denotes the evaluation of expression $e$ in state $\sigma$. $\sigma[x \mapsto v]$ is the state that yields $v$ for $x$ and the value in $\sigma$ for all other variables.
}
\label{fig:semantics}
\end{figure*}

\begin{definition}\textbf{Evaluation of syntactic hyper-expressions and satisfiability of hyper-assertions.}\\
    \label{def:sat}
    Let $\Sigma$ a mapping from variables (such as $\varphi$) to states, and $\Delta$ a mapping from variables (such as $x$) to values.\footnote{In our Isabelle formalization, these mappings are actually lists, since we use De Bruijn indices~\cite{DeBruijn}.}
    The evaluation of hyper-expressions is defined as follows:
\begin{align*}
    &\evalExp{c}{\Sigma}{\Delta} \triangleq c \\
    &\evalExp{y}{\Sigma}{\Delta} \triangleq \Delta(y) \\
    &\evalExp{ \pproj{\varphi}(x) }{\Sigma}{\Delta} \triangleq \pproj{(\Sigma(\varphi))}(x) \\
    &\evalExp{ \lproj{\varphi}(x) }{\Sigma}{\Delta} \triangleq \lproj{(\Sigma(\varphi))}(x) \\
    &\evalExp{ e_1 \oplus e_2 }{\Sigma}{\Delta} \triangleq \evalExp{e_1}{\Sigma}{\Delta} \oplus \evalExp{e_2}{\Sigma}{\Delta} \\
    &\evalExp{f(e)}{\Sigma}{\Delta} \triangleq f(\evalExp{e}{\Sigma}{\Delta})
\end{align*}

Let $S$ be a set of states. The satisfiability of hyper-assertions is defined as follows:
\begin{align*}
    &\satSyntactic{S}{\Sigma}{\Delta}{b} \triangleq b \\
    &\satSyntactic{S}{\Sigma}{\Delta}{e_1 \succeq e_2} \triangleq \left( \evalExp{e_1}{\Sigma}{\Delta} \succeq \evalExp{e_2}{\Sigma}{\Delta} \right) \\
    &\satSyntactic{S}{\Sigma}{\Delta}{A \land B} \triangleq \left( \satSyntactic{S}{\Sigma}{\Delta}{A} \land \satSyntactic{S}{\Sigma}{\Delta}{B} \right) \\
    &\satSyntactic{S}{\Sigma}{\Delta}{A \lor B} \triangleq \left( \satSyntactic{S}{\Sigma}{\Delta}{A} \lor \satSyntactic{S}{\Sigma}{\Delta}{B} \right) \\
    &\satSyntactic{S}{\Sigma}{\Delta}{\forall x \ldotp A} \triangleq \left( \forall v \ldotp \satSyntactic{S}{\Sigma}{\Delta[x \mapsto v]}{A} \right) \\
    &\satSyntactic{S}{\Sigma}{\Delta}{\exists x \ldotp A} \triangleq \left( \exists v \ldotp \satSyntactic{S}{\Sigma}{\Delta[x \mapsto v]}{A} \right) \\
    &\satSyntactic{S}{\Sigma}{\Delta}{\forall \varphi \ldotp A} \triangleq \left(
        \forall \alpha \ldotp \satSyntactic{S}{\Sigma[\varphi \mapsto \alpha]}{\Delta}{A} \right) \\
    &\satSyntactic{S}{\Sigma}{\Delta}{\exists \varphi \ldotp A} \triangleq \left(
        \exists \alpha \ldotp \satSyntactic{S}{\Sigma[\varphi \mapsto \alpha]}{\Delta}{A} \right)
\end{align*}

When interpreting hyper-assertions in hyper-triples, we start with $\Delta$ and $\Sigma$ being the empty mappings,
except when there is an explicit quantifier around the triple,
such as in the premises for the rule \namerule{While-}$\exists$ from \figref{fig:loop-rules}.

\end{definition}

\begin{definition}\textbf{Syntactic transformation for deterministic assignments.}\\
$\transformAssign{e}{x}{A}$ \peter{yields} the \hyperassertion{} $A$, where $\varphi(x)$ is syntactically substituted by $\expApplied{e}{\varphi}$, for all (existentially or universally) quantified states $\varphi$:
\begin{align*}
&\transformAssign{e}{x}{b} \triangleq b \\
&\transformAssign{e}{x}{e_1 \succeq e_2} \triangleq e_1 \succeq e_2 \\
&\transformAssign{e}{x}{A \land B} \triangleq \transformAssign{e}{x}{A} \land \transformAssign{e}{x}{B} \\
&\transformAssign{e}{x}{A \lor B} \triangleq \transformAssign{e}{x}{A} \lor \transformAssign{e}{x}{B} \\
&\transformAssign{e}{x}{\forall x \ldotp A} \triangleq \forall x \ldotp \transformAssign{e}{x}{A} \\
&\transformAssign{e}{x}{\exists x \ldotp A} \triangleq \exists x \ldotp \transformAssign{e}{x}{A} \\
&\transformAssign{e}{x}{\forall \inSet{\varphi} \ldotp A} \triangleq \left( \forall \inSet{\varphi} \ldotp \transformAssign{e}{x}{A[\expApplied{e}{\varphi}/\varphi(x)]} \right) \\
&\transformAssign{e}{x}{\exists \inSet{\varphi} \ldotp A} \triangleq \left( \exists \inSet{\varphi} \ldotp \transformAssign{e}{x}{A[\expApplied{e}{\varphi}/\varphi(x)]} \right)
\end{align*}
where $A[y/x]$ refers to the standard syntactic substitution of $x$ by $y$.

\end{definition}

\begin{definition}\textbf{Syntactic transformation for non-deterministic assignments.}\\
$\transformHavoc{x}{A}$ \peter{yields} the \hyperassertion{} $A$ where $\varphi(x)$ is syntactically substituted by a fresh quantified variable $v$,
universally (resp. existentially) quantified for universally (resp. existentially) quantified states:
\begin{align*}
&\transformHavoc{x}{b} \triangleq b \\
&\transformHavoc{x}{e_1 \succeq e_2} \triangleq e_1 \succeq e_2 \\
&\transformHavoc{x}{A \land B} \triangleq \transformHavoc{x}{A} \land \transformHavoc{x}{B} \\
&\transformHavoc{x}{A \lor B} \triangleq \transformHavoc{x}{A} \lor \transformHavoc{x}{B} \\
&\transformHavoc{x}{\forall x \ldotp A} \triangleq \forall x \ldotp \transformHavoc{x}{A} \\
&\transformHavoc{x}{\exists x \ldotp A} \triangleq \exists x \ldotp \transformHavoc{x}{A} \\
&\transformHavoc{x}{\forall \inSet{\varphi} \ldotp A} \triangleq \left( \forall \inSet{\varphi} \ldotp \forall v \ldotp \transformHavoc{x}{A[v/\varphi(x)]} \right) \\
&\transformHavoc{x}{\exists \inSet{\varphi} \ldotp A} \triangleq \left( \exists \inSet{\varphi} \ldotp \exists v \ldotp \transformHavoc{x}{A[v/\varphi(x)]} \right)
\end{align*}
\end{definition}

\begin{definition}\textbf{Syntactic transformation for assume statements.}
\begin{align*}
&\transformAssume{p}{b} \triangleq b \\
&\transformAssume{p}{e_1 \succeq e_2} \triangleq e_1 \succeq e_2 \\
&\transformAssume{p}{A \land B} \triangleq \transformAssume{p}{A} \land \transformAssume{p}{B} \\
&\transformAssume{p}{A \lor B} \triangleq \transformAssume{p}{A} \lor \transformAssume{p}{B} \\
&\transformAssume{p}{\forall x \ldotp A} \triangleq \forall x \ldotp \transformAssume{p}{A} \\
&\transformAssume{p}{\exists x \ldotp A} \triangleq \exists x \ldotp \transformAssume{p}{A} \\
&\transformAssume{p}{\forall \inSet{\varphi} \ldotp A} \triangleq \forall \inSet{\varphi} \ldotp \expApplied{p}{\varphi} \Rightarrow \transformAssume{p}{A} \\
&\transformAssume{p}{\exists \inSet{\varphi} \ldotp A} \triangleq \exists \inSet{\varphi} \ldotp \expApplied{p}{\varphi} \land \transformAssume{p}{A}
\end{align*}
\end{definition}

\clearpage

\section{Example of a Program Hyperproperty Relating an Unbounded Number of Executions}
\label{app:unbounded}

Given a program with a low-sensitivity (\emph{low} for short) input $l$, a high-sensitivity (\emph{high for short}) input $h$,
and output $o$,
an interesting problem is to \emph{quantify} how much information about $h$ is leaked through $o$.
This information flow can be quantified \thibault{in different ways,
for example using \emph{Shannon entropy}~\cite{Shannon48},
or \emph{min-capacity}~\cite{assaf2017hypercollecting,Smith2009}.
\citet{Yasuoka2010} have shown that, for both definitions,
deciding whether the quantitative information flow is smaller than some upper-bound $q$ (given as input)
is \emph{not} $k$-safety for any $k$, and thus requires reasoning about an \emph{unbounded} number of executions.
The authors also show that, when $q$ is fixed, upper-bounding the quantitative information flow becomes $k$-safety (for some $k$)
when based on min-capacity, but not when based on Shannon entropy.
}

\thibault{For simplicity, we focus here on the problem of upper-bounding the quantitative information flow based on min-capacity,}
which can be reduced to quantifying the number of different values that the output $o$ can have,
given that the initial value of $l$ is fixed (but the initial value of $h$ is not)~\cite{assaf2017hypercollecting}.
The problem (1) of \emph{upper-bounding} the number of possible values of $o$ is hypersafety, but not $k$-safety for any $k>0$,
and thus requires the ability to reason about an \emph{unbounded} number of executions,
which is not possible in any existing Hoare logic,
but is possible in \logic{}.
The harder problem (2) of both \emph{lower-bounding} (to show that there is some leakage)
and upper-bounding this quantity is not hypersafety anymore, and thus requires to be able to reason directly about properties of sets,
in this case cardinality.

\begin{figure}[h]
\begin{align*}
&\cassign{o}{0} \cseq \\
&\cassign{i}{0} \cseq \\
&\mathbf{while} \; (i < \mathit{max}(l, h)) \; \{ \\
&\quad \chavoc{r} \cseq \\
&\quad \cassume{0 \leq r \leq 1} \cseq \\
&\quad \cassign{o}{o + r} \\
&\quad \cassign{i}{i + 1} \\
&\}
\end{align*}
\caption{The program $C_l$ that leaks information about the high-sensitivity input $h$ via its output $o$.}
\label{fig:min-leakage}
\end{figure}

As an example, consider the program $C_l$ shown in \figref{fig:min-leakage}.
Assuming that we know $h \geq 0$,
the output $o$ of this program can be at most $h$.
Hence, $o$ leaks information about $h$: We learn that $h \geq o$.
With respect to problem (1),
we can express that this program can have \emph{at most} $v+1$ output values,
where $v$ is the initial value of $l$, with the hyper-triple
$$
\simpleHoare{\always(h \geq 0) \land \low{l}}{C_l}{
\lambda S \ldotp \exists v \ldotp (\forall \varphi \in S \ldotp \varphi(l) = v) \land
|\{ \varphi(o) \mid \varphi \in S \}| \leq v + 1}
$$

\thibault{Note that this hyper-triple is equivalent to the following:
$$
\forall v \ldotp
\simpleHoare{\always(h \geq 0 \land l = v)}{C_l}{
\lambda S \ldotp
|\{ \varphi(o) \mid \varphi \in S \}| \leq v + 1}
$$
Given that the initial value of $l$ is $v$, there can be at most $v+1$ output values for $o$.
}

Moreover, with respect to the harder problem (2),
we can express that this program can have \emph{exactly} $v+1$ output values, with the hyper-triple
$$
\simpleHoare{\always(h \geq 0) \land \low{l}}{C_l}{
\lambda S \ldotp \exists v \ldotp (\forall \varphi \in S \ldotp \varphi(l) = v) \land
|\{ \varphi(o) \mid \varphi \in S \}| = v + 1}
$$
\thibault{or, equivalently:
$
\forall v \ldotp
\simpleHoare{\always(h \geq 0 \land l = v)}{C_l}{
\lambda S \ldotp
|\{ \varphi(o) \mid \varphi \in S \}| = v + 1}
$.}

\clearpage

\section{Expressing Judgments of Hoare Logics as Hyper-Triples}
\label{app:expressivity}
In this section, we demonstrate the expressivity of the logic by showing that hyper-triples
can express the judgments of existing over- and underapproximating Hoare logics (\appref{subsec:safety} and \appref{subsec:liveness}) and enable reasoning about useful properties that go beyond over- and underapproximation (\appref{subsec:beyond}). All theorems and propositions in this section have been \proven{} in Isabelle/HOL.

\subsection{Overapproximate Hoare Logics}\label{subsec:safety}

The vast majority of existing Hoare logics prove the absence of bad (combinations of) program executions. To achieve that, they prove properties \emph{for all} (combinations of) executions, that is, they overapproximate the set of possible (combinations of) executions. In this subsection, we discuss overapproximate logics that prove properties of single executions or of $k$ executions (for a fixed number $k$), and show that \logic{} goes beyond them by also supporting properties of unboundedly or infinitely many executions.

\mypar{Single executions}

Classical Hoare Logic~\cite{HoareLogic} is an overapproximate logic for properties of single executions.
The meaning of triples can be defined as follows:
\begin{definition}\textbf{Hoare Logic (HL)}.
   Let $P$ and $Q$ be sets of extended states. Then
   \begin{align*}
\nhoare{HL}{P}{C}{Q}
 \triangleq
(\forall \varphi \in P \ldotp
\forall \sigma' \ldotp
\bigstep{C}{\pproj{\varphi}}{\sigma'}
\Rightarrow (\lproj{\varphi}, \sigma') \in Q
)
   \end{align*}
\end{definition}

This definition reflects the standard partial-correctness meaning of Hoare triples: executing $C$ in some initial state that satisfies $P$ can only lead to a final state that satisfies $Q$. This meaning can be expressed as a program hyperproperty as defined in \defref{def:hyperproperties}:

\begin{proposition}\textbf{HL triples express hyperproperties.}
    Given sets of extended states $P$ and $Q$, there exists a hyperproperty $\phyperpropertyfont{H}$ such that, for all commands $C$,
    $
        \hypersat{C}{H}$ iff $\nhoare{HL}{P}{C}{Q}
    $.
\end{proposition}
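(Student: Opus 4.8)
The plan is to exhibit the hyperproperty $\phyperpropertyfont{H}$ explicitly and then verify the equivalence by unfolding definitions. The key observation is that $\nhoare{HL}{P}{C}{Q}$ is already nothing but a constraint on the input--output relation $\{(\sigma,\sigma') \mid \bigstep{C}{\sigma}{\sigma'}\}$ of $C$; so I would simply take $\phyperpropertyfont{H}$ to be the set of all relations over program states that meet this constraint:
\[
\phyperpropertyfont{H} \triangleq \{\, T \in \powerset{\states \times \states} \mid \forall \varphi \in P \ldotp \forall \sigma' \ldotp (\pproj{\varphi}, \sigma') \in T \Rightarrow (\lproj{\varphi}, \sigma') \in Q \,\}.
\]
First I would check that this is a legitimate program hyperproperty in the sense of \defref{def:hyperproperties}: it is a subset of $\powerset{\states \times \states}$, hence an element of $\powerset{\powerset{\states \times \states}}$. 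Note that, although $P$ and $Q$ are sets of \emph{extended} states, they occur here only as fixed parameters: the logical component $\lproj{\varphi}$ of a state $\varphi \in P$ is used merely to look up which final states $Q$ deems acceptable, while membership $T \in \phyperpropertyfont{H}$ quantifies over pairs of program states only.

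The remaining step is to verify, for an arbitrary command $C$, that $\hypersat{C}{H}$ iff $\nhoare{HL}{P}{C}{Q}$. By \defref{def:hyperproperties}, $\hypersat{C}{H}$ unfolds to $\{(\sigma,\sigma') \mid \bigstep{C}{\sigma}{\sigma'}\} \in \phyperpropertyfont{H}$. Substituting this relation for $T$ in the definition of $\phyperpropertyfont{H}$, and using that $(\pproj{\varphi}, \sigma')$ lies in it exactly when $\bigstep{C}{\pproj{\varphi}}{\sigma'}$, the condition becomes: for all $\varphi \in P$ and all $\sigma'$, $\bigstep{C}{\pproj{\varphi}}{\sigma'}$ implies $(\lproj{\varphi}, \sigma') \in Q$. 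That is verbatim the definition of $\nhoare{HL}{P}{C}{Q}$, so both directions hold immediately.

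I do not expect any genuine obstacle: the statement is essentially a definitional rearrangement. The only point requiring a moment's attention is the one flagged above — ensuring that the witness hyperproperty ranges over sets of pairs of \emph{program} states, which forces the slightly asymmetric treatment of the logical versus program components of the states in $P$. In the Isabelle formalization this proof reduces to supplying the witness $\phyperpropertyfont{H}$ and discharging the equivalence by unfolding the relevant definitions.
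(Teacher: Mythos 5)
Your proposal is correct and matches the paper's proof: the paper defines exactly the same witness hyperproperty (written there with the input--output relation $\Sigma(C)$ in place of your bound variable $T$) and likewise discharges the equivalence by unfolding definitions. Your version is, if anything, stated more cleanly, since it makes explicit that the hyperproperty is a set of relations over program states rather than a set of commands.
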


\begin{proof}
    We define
    $$\phyperpropertyfont{H} \triangleq
    \{ C \mid
        \forall \varphi \in P \ldotp
        \forall \sigma' \ldotp
        (\pproj{\varphi}, \sigma') \in \Sigma(C)
        \Rightarrow
        (\lproj{\varphi}, \sigma') \in Q
    \}
    $$
    and prove
    $\forall C \ldotp \hypersat{C}{H} \Longleftrightarrow \nhoare{HL}{P}{C}{Q}$.
\end{proof}

This proposition together with completeness of our logic implies the \emph{existence} of a proof in \logic{} for every valid classical Hoare triple. But there is an even stronger connection: we can map any assertion in classical Hoare logic to a hyper-assertion in \logic{}, which suggests a direct translation from classical Hoare logic to our \logic. 

The assertions $P$ and $Q$ of a valid Hoare triple characterize \emph{all} initial and \emph{all} final states of executing a command $C$. Consequently, they represent \emph{upper bounds} on the possible initial and final states. We can use this observation to map  classical Hoare triples to hyper-triples by interpreting their pre- and postconditions as upper bounds on sets of states.

\begin{proposition}\textbf{Expressing HL in \logic{}.}\label{prop:expressing_hl}
Let $\overline{P} \triangleq (\lambda S \ldotp S \subseteq P)$. \\ Then
$
\nhoare{HL}{P}{C}{Q}$
iff
$\hoare{\overline{P}}{C}{\overline{Q}}$.

Equivalently,
$
\nhoare{HL}{P}{C}{Q}$
iff
$\hoare{\forall \inSet{\varphi} \ldotp \varphi \in P}{C}{\forall \inSet{\varphi} \ldotp \varphi \in Q}$.
\end{proposition}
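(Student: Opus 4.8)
The plan is to prove the biconditional by unfolding both definitions and chasing elements through $\sem$. Concretely, recall that $\hoare{\overline{P}}{C}{\overline{Q}}$ unfolds (via \defref{def:hoare_triples}) to $\forall S \ldotp S \subseteq P \Rightarrow \sem(C,S) \subseteq Q$, and $\nhoare{HL}{P}{C}{Q}$ unfolds to $\forall \varphi \in P \ldotp \forall \sigma' \ldotp \bigstep{C}{\pproj{\varphi}}{\sigma'} \Rightarrow (\lproj{\varphi},\sigma') \in Q$. The bridge between the two is the observation that, by \defref{def:extended_semantics}, an extended state $\varphi'$ lies in $\sem(C,S)$ exactly when there is some $\sigma$ with $(\lproj{\varphi'},\sigma) \in S$ and $\bigstep{C}{\sigma}{\pproj{\varphi'}}$; in particular, for the full set $S = P$, the elements of $\sem(C,P)$ are precisely the pairs $(\lproj{\varphi},\sigma')$ with $\varphi \in P$ and $\bigstep{C}{\pproj{\varphi}}{\sigma'}$, which is exactly the set of post-states quantified over in the HL definition.

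For the forward direction, I would assume $\nhoare{HL}{P}{C}{Q}$, fix an arbitrary $S$ with $S \subseteq P$, and take any $\varphi' \in \sem(C,S)$. Unfolding $\sem$ gives some $\sigma$ with $(\lproj{\varphi'},\sigma) \in S$ and $\bigstep{C}{\sigma}{\pproj{\varphi'}}$; since $S \subseteq P$, the extended state $(\lproj{\varphi'},\sigma)$ lies in $P$, so instantiating the HL hypothesis with this extended state and with $\sigma' \triangleq \pproj{\varphi'}$ yields $(\lproj{\varphi'},\pproj{\varphi'}) = \varphi' \in Q$. Hence $\sem(C,S) \subseteq Q$, i.e.\ $\hoare{\overline{P}}{C}{\overline{Q}}$.

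For the backward direction, I would assume $\hoare{\overline{P}}{C}{\overline{Q}}$ and instantiate it with $S \triangleq P$ (which trivially satisfies $S \subseteq P$), obtaining $\sem(C,P) \subseteq Q$. Then, given $\varphi \in P$ and $\sigma'$ with $\bigstep{C}{\pproj{\varphi}}{\sigma'}$, the extended state $(\lproj{\varphi},\sigma')$ lies in $\sem(C,P)$: indeed $\lproj{(\lproj{\varphi},\sigma')} = \lproj{\varphi}$, and $(\lproj{\varphi},\pproj{\varphi}) = \varphi \in P$ together with $\bigstep{C}{\pproj{\varphi}}{\sigma'}$ witnesses membership. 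Therefore $(\lproj{\varphi},\sigma') \in Q$, which is the HL conclusion. Finally, the alternative formulation follows immediately, since for any set of extended states $S$ we have $\overline{P}(S) \iff S \subseteq P \iff (\forall \varphi \in S \ldotp \varphi \in P)$, so $\overline{P}$ and the syntactic hyper-assertion $\forall \inSet{\varphi} \ldotp \varphi \in P$ denote the same predicate (and likewise for $Q$).

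There is no real obstacle here: the argument is pure definition-chasing, and the only point requiring a little care is the bookkeeping between a program state $\sigma'$ and the extended state $(\lproj{\varphi},\sigma')$ obtained by keeping the logical component of $\varphi$ fixed — this is exactly the role logical variables play, and it is what makes the equivalence hold on the nose rather than up to some adjustment of the assertions. (One could equivalently use singleton sets $S = \{\varphi\}$ in the backward direction; using $S = P$ is slightly shorter and avoids even that.)
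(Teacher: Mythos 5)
Your proof is correct and is exactly the definition-unfolding argument one expects here (the paper relegates the proof to its Isabelle formalization, but there is only one sensible route): unfold $\sem$, use $S=P$ for one direction and $S\subseteq P$ plus the identification of $\varphi'$ with $(\lproj{\varphi'},\pproj{\varphi'})$ for the other. The bookkeeping with the logical component is handled correctly, and the remark identifying $\overline{P}$ with $\forall \inSet{\varphi}\ldotp \varphi\in P$ is the right justification for the equivalent formulation.
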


This proposition implies that some rules of \logic{} have a direct correspondence in HL\@.
For example, the rule \namerule{Seq} instantiated with $\overline{P}$, $\overline{R}$, and $\overline{Q}$ directly corresponds
to the sequential composition rule from HL\@.
Moreover, the upper-bound operator distributes over $\otimes$ and $\bigotimes$,
since
$\overline{A} \otimes \overline{B} = \overline{A \cup B}$,
and
$\bigotimes_i \overline{F_i} = \overline{\bigcup_i F(i)}$.
Consequently, we can for example easily derive in \logic{} the classic while-rule from HL, using the rule \namerule{Iter} from \figref{fig:rules}.

\mypar{$k$ executions}

Many extensions of HL have been proposed to deal with hyperproperties of $k$ executions. As a representative of this class of logics, we relate Cartesian Hoare Logic~\cite{CHL16} to our \logic{}.
To define the meaning of Cartesian Hoare Logic triples, we first lift our semantic relation $\rightarrow$ from one execution on states to $k$ executions on extended states.
    Let $k \in \mathbb{N^+}$.
    We write $\vecto{\varphi}$ to represent the $k$-tuple of extended states
    ($\varphi_1, \ldots, \varphi_k)$,
    and $\forall \vecto{\varphi}$ (resp.\ $\exists \vecto{\varphi}$)
    as a shorthand for $\forall \varphi_1, \ldots, \varphi_k$
    (resp.\ $\exists \varphi_1, \ldots, \varphi_k$).
    Moreover, we define the relation $\overset{k}{\rightarrow}$
    as
    $\bigstepK{C}{\varphi}{\varphi'}{k} \triangleq
    (\forall i \in \range{1}{k} \ldotp
   \bigstep{C}{\pproj{{\varphi_i}}}{\pproj{{\varphi'_i}}}
\;\land\;
\lproj{{\varphi_i}}=\lproj{{\varphi'_i}})$.

\begin{definition}
\label{def:CHL}
\textbf{Cartesian Hoare Logic (CHL).}
Let $k \in \mathbb{N^+}$,
and
let $P$ and $Q$ be sets of $k$-tuples of extended states. Then
\begin{align*}
\nhoare{CHL(k)}{P}{C}{Q}
\triangleq
&(
\forall \vecto{\varphi} \in P \ldotp
\forall \vecto{\varphi'} \ldotp
\bigstepK{C}{\varphi}{\varphi'}{k}
\Rightarrow
\vecto{\varphi'} \in Q
)
\end{align*}
\end{definition}

$\nhoare{CHL(k)}{P}{C}{Q}$ is valid iff
executing $C$ $k$ times in $k$ initial states that \emph{together} satisfy $P$
can only lead to $k$ final states that together satisfy $Q$.
This meaning can be expressed as a program hyperproperty:

\begin{proposition}
\label{prop:trans-CHL}
\textbf{CHL triples express hyperproperties.}
    Given sets of $k$-tuples of extended states $P$ and $Q$, there exists a hyperproperty $\phyperpropertyfont{H}$
    such that, for all commands $C$,
    $
        \hypersat{C}{H} \Longleftrightarrow \nhoare{CHL(k)}{P}{C}{Q}
    $.
\end{proposition}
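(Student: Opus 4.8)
The plan is to mirror the proof of the analogous proposition for HL\@ (and of \propref{prop:trans-CHL}'s sibling for single executions). I would define the candidate hyperproperty directly as a predicate on binary relations over program states, namely
\[
\phyperpropertyfont{H} \triangleq \Bigl\{\, R \subseteq \states \times \states \;\Bigm|\; \forall \vecto{\varphi} \in P \ldotp \forall \sigma'_1, \ldots, \sigma'_k \ldotp \bigl(\forall i \in \range{1}{k} \ldotp (\pproj{\varphi_i}, \sigma'_i) \in R\bigr) \Rightarrow \bigl((\lproj{\varphi_1}, \sigma'_1), \ldots, (\lproj{\varphi_k}, \sigma'_k)\bigr) \in Q \,\Bigr\},
\]
and then show that for every command $C$, $\hypersat{C}{H}$ (that is, $\{(\sigma, \sigma') \mid \bigstep{C}{\sigma}{\sigma'}\} \in \phyperpropertyfont{H}$) iff $\nhoare{CHL(k)}{P}{C}{Q}$. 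Since membership of $R$ in $\phyperpropertyfont{H}$ is a property of $R$ alone, this is a well-formed program hyperproperty in the sense of \defref{def:hyperproperties}.

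The observation that makes the two sides coincide is that the lifted relation $\overset{k}{\rightarrow}$ keeps every logical component fixed: $\bigstepK{C}{\varphi}{\varphi'}{k}$ holds iff $\lproj{\varphi'_i} = \lproj{\varphi_i}$ and $\bigstep{C}{\pproj{\varphi_i}}{\pproj{\varphi'_i}}$ for each $i \in \range{1}{k}$. Hence, for a fixed $\vecto{\varphi} \in P$, the $k$-tuples $\vecto{\varphi'}$ reachable via $\overset{k}{\rightarrow}$ are exactly those of the form $((\lproj{\varphi_1}, \sigma'_1), \ldots, (\lproj{\varphi_k}, \sigma'_k))$ where $(\pproj{\varphi_i}, \sigma'_i) \in \{(\sigma, \sigma') \mid \bigstep{C}{\sigma}{\sigma'}\}$ for every $i$. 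Substituting this characterisation into \defref{def:CHL} turns the validity of $\nhoare{CHL(k)}{P}{C}{Q}$ literally into the defining condition of $\{(\sigma, \sigma') \mid \bigstep{C}{\sigma}{\sigma'}\} \in \phyperpropertyfont{H}$, so both directions of the equivalence follow by unfolding definitions; no induction on $C$ and no auxiliary semantic lemma are required.

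I do not expect a genuine obstacle here; the only points to watch are (i) threading the $k$ logical states correctly, so that the $i$-th final extended state pairs $\lproj{\varphi_i}$ with $\sigma'_i$ and not with some other index, and (ii) phrasing $\phyperpropertyfont{H}$ as a set of relations (equivalently a predicate over $\powerset{\states \times \states}$) rather than a set of commands, so that ``$C$ satisfies $\phyperpropertyfont{H}$'' is by definition the statement about $C$'s pre/post relation. As in the HL case, this proposition combined with \thmref{thm:expressing_hyperprop} and completeness (\thmref{thm:complete}) yields as a corollary that every valid $\mathrm{CHL}(k)$ triple has a proof in \logic{}.
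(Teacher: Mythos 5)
Your proposal is correct and matches the paper's proof: the paper defines exactly the same hyperproperty (written informally with $\Sigma(C)$ for the pre/post-state relation of $C$, requiring $\lproj{\varphi_i}=\lproj{\varphi'_i}$ where you instead build the final extended states as $(\lproj{\varphi_i},\sigma'_i)$ — an equivalent phrasing) and likewise concludes by unfolding \defref{def:CHL} and the definition of $\overset{k}{\rightarrow}$. Your extra care in presenting $\phyperpropertyfont{H}$ explicitly as a set of relations is a cosmetic improvement, not a different argument.
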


\begin{proof}
    We define
    \begin{align*}
        \phyperpropertyfont{H} \triangleq
    \{ C \mid
        \forall \vecto{\varphi} \in P \ldotp
        \forall \vecto{\varphi'} \ldotp
        (\forall i \in \range{1}{k} \ldotp
        \lproj{\varphi}_i = \lproj{{\varphi'}}_i
       \land
        (\pproj{\varphi_i}, \pproj{{\varphi'}_i}) \in \Sigma(C))
        \Rightarrow \vecto{\varphi'} \in Q
    \}
    \end{align*}
    and prove
    $\forall C \ldotp \hypersat{C}{H} \Longleftrightarrow \nhoare{CHL(k)}{P}{C}{Q}$.
\end{proof}

Like we did for Hoare Logic, we can provide a direct translation from CHL triples to hyper-triples in our logic. Similarly to HL, CHL assertions express upper bounds, here on sets of $k$-tuples. However, simply using upper bounds as in \propref{prop:expressing_hl} does not capture the full expressiveness of CHL because executions in CHL are \emph{distinguishable}. For example, one can
express monotonicity from $x$ to $y$ as
$\nhoare{CHL(k)}{x(1) \geq x(2)}{\cassign{y}{x}}{y(1) \geq y(2)}$. When going from (ordered) tuples of states in CHL to (unordered) sets of states in \logic{}, we need to identify which state in the set of final states corresponds to execution $1$, and which state corresponds to execution $2$.
As we did in \appref{subsect:comp-examples} to express monotonicity,  we use a logical variable $t$ to tag a state with the number $i$ of the execution it corresponds to.

\begin{proposition}\label{prop:expressing_CHL}\textbf{Expressing CHL in \logic{}.}
    Let
    \begin{align*}
    &P' \triangleq (
        \forall \vecto{\varphi} \ldotp
    (\forall i \in \range{1}{k} \ldotp
    \inSet{\varphi_i}
    \land \lproj{{\varphi}_i}(t) = i)
	\Rightarrow \vecto{\varphi} \in P) \\
    &Q' \triangleq (
        \forall \vecto{\varphi} \ldotp
    (\forall i \in \range{1}{k} \ldotp
    \inSet{\varphi_i}
    \land \lproj{{\varphi}_i}(t) = i)
    \Rightarrow \vecto{\varphi} \in Q)
    \end{align*}
where $t$ does not occur free in $P$ or $Q$. Then
    $
    \nhoare{CHL(k)}{P}{C}{Q} \;\Longleftrightarrow\; \hoare{P'}{C}{Q'}
    $.
\end{proposition}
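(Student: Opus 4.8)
The plan is to unfold \defref{def:CHL}, \defref{def:hoare_triples}, and \defref{def:extended_semantics} and prove the two implications of the equivalence separately, reading the side condition ``$t$ does not occur free in $P$ (resp.\ $Q$)'' as the semantic statement that membership of a $k$-tuple in $P$ (resp.\ $Q$) is unchanged when the value assigned to $t$ in each logical component is modified.

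First I would prove $\nhoare{CHL(k)}{P}{C}{Q} \Rightarrow \hoare{P'}{C}{Q'}$. Fix a set $S$ with $\sat{S}{P'}$ and an arbitrary $k$-tuple $\vecto{\varphi'}$ with $\varphi'_i \in \sem(C,S)$ and $\lproj{\varphi'_i}(t) = i$ for each $i$; the goal is $\vecto{\varphi'} \in Q$, which (since $\vecto{\varphi'}$ is arbitrary) gives $\sat{\sem(C,S)}{Q'}$. By \defref{def:extended_semantics}, for each $i$ there is a state $\sigma_i$ with $(\lproj{\varphi'_i},\sigma_i) \in S$ and $\bigstep{C}{\sigma_i}{\pproj{\varphi'_i}}$. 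Put $\varphi_i \triangleq (\lproj{\varphi'_i},\sigma_i)$: then $\varphi_i \in S$ and $\lproj{\varphi_i}(t) = i$, so $\sat{S}{P'}$ gives $\vecto{\varphi} \in P$, and $\bigstepK{C}{\varphi}{\varphi'}{k}$ holds because $\pproj{\varphi_i} = \sigma_i$ steps to $\pproj{\varphi'_i}$ while $\lproj{\varphi_i} = \lproj{\varphi'_i}$. Validity of the CHL triple then yields $\vecto{\varphi'} \in Q$.

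Next I would prove the converse $\hoare{P'}{C}{Q'} \Rightarrow \nhoare{CHL(k)}{P}{C}{Q}$. Fix $\vecto{\varphi} \in P$ and $\vecto{\varphi'}$ with $\bigstepK{C}{\varphi}{\varphi'}{k}$; the goal is $\vecto{\varphi'} \in Q$. Encode the $k$ executions into one set by re-tagging: let $\psi_i \triangleq (\lproj{\varphi_i}[t \mapsto i],\pproj{\varphi_i})$, $\psi'_i \triangleq (\lproj{\varphi'_i}[t \mapsto i],\pproj{\varphi'_i})$, and $S \triangleq \{\psi_1,\ldots,\psi_k\}$. Since the $\psi_i$ carry pairwise distinct values of $t$, the only $k$-tuple whose $i$-th component lies in $S$ and has $t$-tag $i$ is $(\psi_1,\ldots,\psi_k)$; this tuple differs from $\vecto{\varphi}$ only in the $t$-components of the logical states, so by the side condition on $P$ it lies in $P$, hence $\sat{S}{P'}$. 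From $\bigstepK{C}{\varphi}{\varphi'}{k}$ we also get $\lproj{\varphi_i} = \lproj{\varphi'_i}$, so $(\lproj{\psi'_i},\pproj{\varphi_i}) = \psi_i \in S$, and together with $\bigstep{C}{\pproj{\varphi_i}}{\pproj{\varphi'_i}}$ this shows $\psi'_i \in \sem(C,S)$. Applying $\sat{\sem(C,S)}{Q'}$ (which holds by $\hoare{P'}{C}{Q'}$) to the tuple $(\psi'_1,\ldots,\psi'_k)$ gives $(\psi'_1,\ldots,\psi'_k) \in Q$, and since this tuple differs from $\vecto{\varphi'}$ only in the $t$-components, the side condition on $Q$ gives $\vecto{\varphi'} \in Q$.

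The main obstacle is the careful bookkeeping around the logical variable $t$: one has to make ``$t$ does not occur free in $P$/$Q$'' precise as invariance of $P$ and $Q$ under re-tagging of $t$, use it in both directions to pass between an arbitrarily-tagged tuple and the canonically $t$-tagged tuple that $P'$/$Q'$ quantify over, and check in the converse that the re-tagged pre-states form a set on which $P'$ reduces to exactly the CHL precondition $\vecto{\varphi} \in P$. Beyond that, the argument is a routine unfolding of the relevant definitions, and it mirrors the earlier treatment of logical-variable tagging used to express monotonicity.
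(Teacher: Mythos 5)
Your proof is correct and follows exactly the approach the paper intends: the paper gives no written proof of this proposition (deferring to the Isabelle formalization), but the surrounding text explicitly motivates the construction you use, namely tagging each of the $k$ executions with the logical variable $t$ and exploiting that $t$ is both preserved by execution and irrelevant to $P$ and $Q$. Both directions check out, including the key step in the converse where distinctness of the $t$-tags forces the unique tuple witnessing $P'$ on the constructed set $S$.
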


Recall that $\inSet{\varphi}$ is equivalent to $\lambda S \ldotp \varphi \in S$.
As an example, we can express the CHL assertion $y(1) \geq y(2)$ as the hyper-assertion
$\forall \inSet{\varphi_1}, \inSet{\varphi_2}
\ldotp \lproj{\varphi_1}(t) = 1 \land \lproj{\varphi_2}(t) = 2 \Rightarrow \pproj{\varphi_1}(y) \geq \pproj{\varphi_2}(y)$. Such translations provide a direct way of representing CHL proofs in \logic{}.

CHL, like \logic{}, can reason about multiple executions of a single command $C$, which is sufficient for many practically-relevant hyperproperties such as non-interference or determinism. Other logics, such as Relational Hoare Logic~\cite{Benton04}, relate the executions of multiple (potentially different) commands, for instance, to prove program equivalence. In case these commands are all the same, triples of relational logics can be translated to \logic{} analogously to CHL\@. We explain how to encode relational properties relating different commands to \logic{} in \appref{subsec:beyond}.

\mypar{Unboundedly many executions}

To the best of our knowledge, all existing overapproximate Hoare logics consider a fixed number $k$ of executions. In contrast, \logic{} can reason about an unbounded number of executions, as 
illustrated in \appref{app:unbounded}.

Moreover, since our hyper-assertions are functions of potentially-infinite sets of states, \logic{} can even express properties of infinitely-many executions, as we illustrate in \appref{subsec:beyond}.

\subsection{Underapproximate Hoare Logics}\label{subsec:liveness}

Several recent Hoare logics allow proving the \emph{existence} of certain (combinations of) program executions, which is useful, for instance, to disprove a specification, that is, to demonstrate that a program definitely has a bug. These logics underapproximate the set of possible (combinations of) executions. In this subsection, we discuss two forms of underapproximate logics, \emph{backward} and \emph{forward}, and show that both can be expressed in \logic{}.

\mypar{Backward underapproximation}

Reverse Hoare Logic~\cite{ReverseHL} and Incorrectness Logic~\cite{IncorrectnessLogic} are both underapproximate logics. Reverse Hoare Logic is designed to reason about the reachability of good final states. Incorrectness Logic uses the same ideas to prove the presence of bugs in programs. We focus on Incorrectness Logic in the following, but our results also apply to Reverse Hoare Logic.
Incorrectness Logic reasons about single program executions:

\begin{definition}\textbf{Incorrectness Logic (IL).}
   Let $P$ and $Q$ be sets of extended states. Then
$$
\nhoare{IL}{P}{C}{Q}
\triangleq
(\forall \varphi \in Q \ldotp \exists \sigma \ldotp
(\lproj{\varphi}, \sigma) \in P
\land \bigstep{C}{\sigma}{\pproj{\varphi}})
$$
\end{definition}

The meaning of IL triples is defined \emph{backward} from the postcondition: any state that satisfies the postcondition $Q$ can be reached by executing $C$ in an initial state that satisfies the precondition $P$. This meaning can be expressed as a program hyperproperty:

\begin{proposition}
\label{prop:trans-IL}
\textbf{IL triples express hyperproperties.}
    Given sets of extended states $P$ and $Q$, there exists a hyperproperty $\phyperpropertyfont{H}$
    such that, for all commands $C$,
    $\hypersat{C}{H}$
    iff $\nhoare{IL}{P}{C}{Q}$.
\end{proposition}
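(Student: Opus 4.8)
The plan is to follow exactly the template used for the HL and CHL propositions: exhibit an explicit hyperproperty obtained by rephrasing the IL validity condition as a predicate on the input–output relation of a command. Concretely, I would set
$$\phyperpropertyfont{H} \triangleq \bigl\{ T \in \powerset{\states \times \states} \;\big|\; \forall \varphi \in Q \ldotp \exists \sigma \ldotp (\lproj{\varphi}, \sigma) \in P \land (\sigma, \pproj{\varphi}) \in T \bigr\},$$
which is by construction an element of $\powerset{\powerset{\states \times \states}}$ and hence a program hyperproperty in the sense of \defref{def:hyperproperties}.

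First I would unfold what $\hypersat{C}{H}$ means via \defref{def:hyperproperties}: it is $\{(\sigma,\sigma') \mid \bigstep{C}{\sigma}{\sigma'}\} \in \phyperpropertyfont{H}$, i.e., writing $\Sigma(C)$ for that input–output relation, the statement $\forall \varphi \in Q \ldotp \exists \sigma \ldotp (\lproj{\varphi}, \sigma) \in P \land (\sigma, \pproj{\varphi}) \in \Sigma(C)$. Since $(\sigma, \pproj{\varphi}) \in \Sigma(C)$ is by definition $\bigstep{C}{\sigma}{\pproj{\varphi}}$, this condition is literally the defining condition of $\nhoare{IL}{P}{C}{Q}$. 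Hence $\hypersat{C}{H} \Leftrightarrow \nhoare{IL}{P}{C}{Q}$ for every $C$, as required. The argument is thus a short chain of definitional rewrites; no induction on $C$ or on the semantics is needed, and in Isabelle it would be discharged by \texttt{auto} after unfolding the relevant definitions.

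The only point that deserves care — and the one place I expect to have to be explicit — is the bookkeeping of the logical-state component of extended states. IL validity quantifies over $\varphi \in Q$ and asks for a program state $\sigma$ with $(\lproj{\varphi}, \sigma) \in P$; crucially the \emph{same} logical state $\lproj{\varphi}$ must be paired with both $\sigma$ and $\pproj{\varphi}$, reflecting that program execution leaves logical variables untouched. The chosen $\phyperpropertyfont{H}$ threads $\lproj{\varphi}$ through in exactly this way, which is why the equivalence goes through; a definition that dropped this coupling would not. Beyond that the proof is routine and parallels the earlier HL and IL-style propositions, so it can be presented as a two-line definitional argument with the same structure as \propref{prop:trans-CHL}.
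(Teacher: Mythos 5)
Your proposal is correct and matches the paper's own proof: the paper defines the same hyperproperty (written slightly informally with $\Sigma(C)$ in place of your relation variable $T$) and observes that the equivalence is a definitional unfolding. Your explicit phrasing of $\phyperpropertyfont{H}$ as a set of relations, and your remark about threading the logical state $\lproj{\varphi}$ through both the precondition membership and the transition, only make the same argument more precise.
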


\begin{proof}
    We define
    $$\phyperpropertyfont{H} \triangleq
    \{ C \mid
        \forall \varphi \in Q \ldotp
        \exists \sigma \ldotp (\lproj{\varphi}, \sigma) \in P \land (\sigma, \pproj{\varphi}) \in \Sigma(C)
    \}
    $$
    and prove
    $\forall C \ldotp \hypersat{C}{H} \Longleftrightarrow \nhoare{IL}{P}{C}{Q}$.
\end{proof}

Hoare Logic shows the absence of executions by overapproximating the set of possible executions, whereas Incorrectness Logic shows the existence of executions by underapproximating it. This duality also leads to an analogous translation of IL judgments into \logic, which uses lower bounds on the set of executions instead of the upper bounds used in \propref{prop:expressing_hl}.

\begin{proposition}\textbf{Expressing IL in \logic.}
Let $\underline{P} \triangleq (\lambda S \ldotp P \subseteq S)$.
Then
$\nhoare{IL}{P}{C}{Q}$
iff
$\hoare{\underline{P}}{C}{\underline{Q}}$.

Equivalently,
$\nhoare{IL}{P}{C}{Q}$
iff
$\hoare{\forall \varphi \in P \ldotp \inSet{\varphi}}{C}{\forall \varphi \in Q \ldotp \inSet{\varphi} }$.
\end{proposition}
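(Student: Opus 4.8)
The plan is to unfold all three definitions — the validity of IL triples (\defref{def:hoare_triples}-style unfolding applied to \namerule{IL}), the validity of hyper-triples, and the extended semantics $\mathit{sem}$ — and then observe that both directions reduce to a one-line manipulation. Concretely, $\hoare{\underline{P}}{C}{\underline{Q}}$ unfolds to: for every set $S$ of extended states with $P \subseteq S$, we have $Q \subseteq \mathit{sem}(C, S)$, where by \defref{def:extended_semantics} membership $\varphi \in \mathit{sem}(C,S)$ means $\exists \sigma \ldotp (\lproj{\varphi}, \sigma) \in S \land \bigstep{C}{\sigma}{\pproj{\varphi}}$. The IL side says exactly: for every $\varphi \in Q$, $\exists \sigma \ldotp (\lproj{\varphi}, \sigma) \in P \land \bigstep{C}{\sigma}{\pproj{\varphi}}$.

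For the forward direction ($\Rightarrow$), I would assume $\nhoare{IL}{P}{C}{Q}$, fix an arbitrary $S$ with $P \subseteq S$, and fix $\varphi \in Q$. Applying the IL hypothesis to $\varphi$ yields a witness $\sigma$ with $(\lproj{\varphi}, \sigma) \in P$ and $\bigstep{C}{\sigma}{\pproj{\varphi}}$; since $P \subseteq S$ this same $\sigma$ witnesses $\varphi \in \mathit{sem}(C, S)$. Hence $Q \subseteq \mathit{sem}(C, S)$, i.e.\ $\sat{\mathit{sem}(C,S)}{\underline{Q}}$, which is what is needed.

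For the backward direction ($\Leftarrow$), the key move is to instantiate the hyper-triple hypothesis at the \emph{smallest} set satisfying the precondition, namely $S \triangleq P$: since $P \subseteq P$, we have $\sat{P}{\underline{P}}$, so the hypothesis gives $\sat{\mathit{sem}(C,P)}{\underline{Q}}$, i.e.\ $Q \subseteq \mathit{sem}(C, P)$. Then for any $\varphi \in Q$ we get $\varphi \in \mathit{sem}(C,P)$, which unfolds directly to the required IL witness $\sigma$ with $(\lproj{\varphi},\sigma)\in P$ and $\bigstep{C}{\sigma}{\pproj{\varphi}}$.

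The equivalent reformulation $\underline{P} = (\forall \varphi \in P \ldotp \inSet{\varphi})$ is immediate from the definition $\inSet{\varphi} = (\lambda S \ldotp \varphi \in S)$, since $(\forall \varphi \in P \ldotp \varphi \in S) \iff P \subseteq S$; the same holds for $Q$, so the two statements of the proposition are literally the same hyper-triple. I do not anticipate a real obstacle here — the only mild subtlety is recognizing that picking $S = P$ (the tightest precondition) is exactly the instance needed in the $\Leftarrow$ direction, mirroring how Incorrectness Logic is about reachability from the precondition rather than from an arbitrary enlargement of it; everything else is definitional unfolding.
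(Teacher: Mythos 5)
Your proof is correct: both directions are exactly the definitional unfolding, and the key observation that the $\Leftarrow$ direction only needs the instance $S = P$ (with the $\Rightarrow$ direction relying on the witness in $P$ also lying in any superset $S$) is precisely what makes the equivalence work. The paper gives no textual proof for this proposition (it is discharged in the Isabelle formalization), but your argument is the standard one and matches the intended reasoning, including the observation that the $\forall \varphi \in P \ldotp \inSet{\varphi}$ reformulation is literally the same hyper-assertion as $\underline{P}$.
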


Analogous to the upper bounds for HL, the lower-bound operator
distributes over $\otimes$ and $\bigotimes$:
$\underline{A} \otimes \underline{B} = \underline{A \cup B}$
and
$\bigotimes_i \underline{F_i} = \underline{\bigcup_i F(i)}$.
Using the latter equality with the rules \namerule{While}
and \namerule{Cons}, it is easy to derive the loop rules from both 
Incorrectness Logic and Reverse Hoare Logic.

\medskip
\citet{InsecurityLogic} has recently proposed an underapproximate logic based on IL that can reason about two executions of two (potentially different) programs, for instance, to prove that a program violates a hyperproperty such as non-interference.
We use the name \emph{k-Incorrectness Logic} for the restricted version of this logic where the two programs are the same
(and discuss relational properties between different programs in \appref{subsec:beyond}).
The meaning of triples in k-Incorrectness Logic is also defined backward.
They express that, for any pair of final states $(\varphi'_1, \varphi'_2)$
that together satisfy a relational postcondition, there exist two initial states $\varphi_1$ and $\varphi_2$ that together
satisfy the relational precondition, and executing command $C$ in $\varphi_1$ (resp.\ $\varphi_2$) leads to $\varphi'_1$ (resp.\ $\varphi'_2$). Our formalization lifts this meaning from 2 to $k$ executions:

\begin{definition}\textbf{k-Incorrectness Logic (k-IL).}
Let $k \in \mathbb{N^+}$,
and $P$ and $Q$ be sets of $k$-tuples of extended states. Then
$\nhoare{k-IL}{P}{C}{Q}
\triangleq
(
\forall \vecto{\varphi'} \in Q \ldotp
\exists \vecto{\varphi} \in P \ldotp
\bigstepK{C}{\varphi}{\varphi'}{k}
)$.
\end{definition}

Again, this meaning is a hyperproperty:
\begin{proposition}
\label{prop:RIL-is-hyper}
\textbf{k-IL triples express hyperproperties.}
    Given sets of $k$-tuples of extended states $P$ and $Q$, there exists a hyperproperty $\phyperpropertyfont{H}$
    such that, for all commands $C$,
    $\hypersat{C}{H} \Longleftrightarrow \nhoare{k-IL}{P}{C}{Q}$.
\end{proposition}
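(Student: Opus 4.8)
The plan is to follow exactly the pattern established by the three preceding propositions (for HL, CHL, and IL): I will exhibit the hyperproperty $\phyperpropertyfont{H}$ by taking the defining condition of $\nhoare{k-IL}{P}{C}{Q}$ and replacing every occurrence of the operational step relation $\bigstep{C}{\cdot}{\cdot}$ by membership in an abstract set of pairs of program states. Concretely, recalling that $\bigstepK{C}{\varphi}{\varphi'}{k}$ unfolds to $\forall i \in \range{1}{k} \ldotp \bigstep{C}{\pproj{\varphi_i}}{\pproj{\varphi'_i}} \land \lproj{\varphi_i} = \lproj{\varphi'_i}$, and writing (as elsewhere) $\Sigma(C)$ for $\{ (\sigma, \sigma') \mid \bigstep{C}{\sigma}{\sigma'} \}$, I would define
$$\phyperpropertyfont{H} \triangleq \{\, C \mid \forall \vecto{\varphi'} \in Q \ldotp \exists \vecto{\varphi} \in P \ldotp \forall i \in \range{1}{k} \ldotp \lproj{\varphi_i} = \lproj{\varphi'_i} \land (\pproj{\varphi_i}, \pproj{\varphi'_i}) \in \Sigma(C) \,\}.$$

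Then the equivalence $\hypersat{C}{H} \Longleftrightarrow \nhoare{k-IL}{P}{C}{Q}$ is obtained by a direct unfolding of both sides, using no rule of \logic{}. By \defref{def:hyperproperties}, $\hypersat{C}{H}$ means $\Sigma(C) \in \phyperpropertyfont{H}$, which by the definition above says precisely that for every $k$-tuple $\vecto{\varphi'} \in Q$ there is a $k$-tuple $\vecto{\varphi} \in P$ agreeing with it componentwise on logical states and with $(\pproj{\varphi_i}, \pproj{\varphi'_i}) \in \Sigma(C)$ — i.e. $\bigstep{C}{\pproj{\varphi_i}}{\pproj{\varphi'_i}}$ — for every $i \in \range{1}{k}$. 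Folding these conditions on all $k$ components back into a single $\bigstepK{C}{\varphi}{\varphi'}{k}$, this is literally the defining condition of $\nhoare{k-IL}{P}{C}{Q}$. So the two statements are not merely equivalent but syntactically the same formula, and both directions hold at once.

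The only point that needs a moment's care — and the closest thing to an obstacle — is checking that $\phyperpropertyfont{H}$ is genuinely a \emph{program} hyperproperty, i.e. an element of $\powerset{\powerset{\states \times \states}}$, even though $P$ and $Q$ range over $k$-tuples of \emph{extended} states. The point is that in the definition of $\phyperpropertyfont{H}$ the command $C$ enters only through $\Sigma(C)$, a set of pairs of program states: the logical components $\lproj{\varphi_i}$ and $\lproj{\varphi'_i}$ are quantified over freely inside the filters $P$ and $Q$ and only constrained to match one another, so whether a given $C$ lies in $\phyperpropertyfont{H}$ depends solely on $\Sigma(C)$. Hence $\phyperpropertyfont{H}$ may be read as the set of those $T \in \powerset{\states \times \states}$ satisfying the displayed condition with $\Sigma(C)$ replaced by $T$, which is exactly the required type. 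As with \propref{prop:trans-IL} and \propref{prop:trans-CHL}, this observation together with the final verification above is the kind of routine unfolding that is discharged directly in the Isabelle/HOL formalization.
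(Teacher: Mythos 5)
Your proposal is correct and takes essentially the same route as the paper: the paper defines $\phyperpropertyfont{H}$ by exactly the same substitution of the step relation with membership in $\Sigma(C)$ and then discharges the equivalence by direct unfolding. Your closing observation that membership in $\phyperpropertyfont{H}$ depends on $C$ only through $\Sigma(C)$ (so it is genuinely an element of $\powerset{\powerset{\states \times \states}}$) is a point the paper leaves implicit but is exactly the right thing to check.
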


\begin{proof}
    We define
    \begin{align*}
        \phyperpropertyfont{H} \triangleq 
    \{ C \mid
        \forall \vecto{\varphi'} \in Q \ldotp
        \exists \vecto{\varphi'} \in P \ldotp
        (\forall i \in \range{1}{k} \ldotp
        \lproj{\varphi_i} = \lproj{{\varphi'}_i}
       \land
        (\pproj{\varphi_i}, \pproj{{\varphi'}_i}) \in \Sigma(C))
    \}
    \end{align*}
    and prove
    $\forall C \ldotp \hypersat{C}{H} \Longleftrightarrow \nhoare{k-IL}{P}{C}{Q}$.
\end{proof}

Together with \thmref{thm:expressing_hyperprop}, this implies that we can express any k-IL triple as hyper-triple in \logic. However, defining a direct translation of k-IL triples to hyper-triples is surprisingly tricky. In particular, it is \emph{not} sufficient to apply the transformation from \propref{prop:expressing_CHL}, which uses a logical variable $t$ to tag each state with the number of the execution it belongs to. This approach works for Cartesian Hoare Logic because CHL and \logic{} are both forward logics (see \defref{def:hoare_triples} and \defref{def:CHL}). Intuitively, this commonality allows us to identify corresponding tuples from the preconditions in the two logics and relate them to corresponding tuples in the postconditions.

However, since k-IL is a \emph{backward} logic, the same approach is not sufficient to identify corresponding tuples. For two states
$\varphi'_1$ and $\varphi'_2$ from the set of final states, we know through the tag variable $t$ to which execution they belong, but not whether they originated from one tuple 
$(\varphi_1, \varphi_2)\in P$, or from two \emph{unrelated} tuples.

To solve this problem, we use another logical variable $u$, which records the ``identity'' of the initial $k$-tuple that satisfies $P$. To avoid cardinality issues, we define the encoding under the assumption that $P$ depends only on program variables. Consequently,
there are at most $|\states^k|$ such $k$-tuples, which we can represent as logical values if the cardinality of $\lvals$ is at least the cardinality of $\states^k$, as shown by the following result:

\begin{proposition}\textbf{Expressing k-IL in \logic.}
    Let $t,u$ be distinct variables in $\lvars{}$ and
    \begin{align*}
    P' \triangleq (
    \forall \vecto{\varphi} \in P \ldotp
            &(\forall i \in \range{1}{k} \ldotp
            \lproj{{\varphi}_i}(t) = i)
        \Rightarrow
            (\exists v \ldotp \forall i \in \range{1}{k}
            \ldotp
            \inSet{{\varphi}_i[u \mapsto v]}
            )
    )
    \\
    Q' \triangleq (
    \forall \vecto{\varphi'} \in Q \ldotp
            &(\forall i \in \range{1}{k} \ldotp
            \lproj{{\varphi'}_i}(t) = i)
        \Rightarrow
            (\exists v \ldotp \forall i \in \range{1}{k}
            \ldotp
            \inSet{{\varphi'_i}[u \mapsto v]}
            )
    )
    \end{align*}
    If
    (1) $P$ depends only on program variables,
    (2) the cardinality of $\lvals{}$ is at least the cardinality of $\states{}^k$, and
    (3) $t,u$ do not occur free in $P$ or $Q$,
    then
    $
    \nhoare{k-IL}{P}{C}{Q} \;\Longleftrightarrow\; \hoare{P'}{C}{Q'}
    $.
\end{proposition}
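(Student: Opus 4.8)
The plan is to prove both implications directly from the definitions, exhibiting a single canonical set of extended states as the witness for the $(\Leftarrow)$ direction. \textbf{Setup.} Since $P$ depends only on program variables (assumption~(1)), put $\widehat{P} \triangleq \{ (\pproj{\varphi_1}, \ldots, \pproj{\varphi_k}) \mid \vecto{\varphi} \in P \} \subseteq \states^k$; then $\vecto{\varphi} \in P$ holds iff $(\pproj{\varphi_1},\ldots,\pproj{\varphi_k}) \in \widehat{P}$, regardless of the logical components. By assumption~(2) fix an injection $\iota : \states^k \to \lvals$, and define
\[ S^* \triangleq \{\, (\lambda, \sigma) \mid \exists \vecto{\tau} \in \widehat{P},\ \exists j \in \range{1}{k} \ldotp \sigma = \tau_j \land \lambda(t) = j \land \lambda(u) = \iota(\vecto{\tau}) \,\}. \]
Intuitively $S^*$ records in the logical variable $t$ the index of each state inside its $P$-tuple and in $u$ a code for the whole tuple, leaving all other logical components unconstrained; the code in $u$ is what lets us recover which states of $S^*$ ``belong together''.

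\textbf{$(\Rightarrow)$.} Assume $\nhoare{k-IL}{P}{C}{Q}$ and let $S$ satisfy $P'$; I would derive $Q'(\sem(C,S))$. Take any $\vecto{\varphi'}$ with $\vecto{\varphi'} \in Q$ and $\lproj{\varphi'_i}(t) = i$ for all $i$ (if no such tuple exists, $Q'$ holds vacuously). By validity of the k-IL triple there is $\vecto{\varphi} \in P$ with $\bigstepK{C}{\varphi}{\varphi'}{k}$, so $\lproj{\varphi_i} = \lproj{\varphi'_i}$ and in particular $\lproj{\varphi_i}(t) = i$; hence $\vecto{\varphi}$ meets the hypothesis of $P'(S)$, yielding one value $v$ with $\varphi_i[u\mapsto v] \in S$ for every $i$. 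Since $\lproj{\varphi_i} = \lproj{\varphi'_i}$ and $\bigstep{C}{\pproj{\varphi_i}}{\pproj{\varphi'_i}}$, unfolding \defref{def:extended_semantics} gives $\varphi'_i[u\mapsto v] \in \sem(C,S)$ for all $i$ with the \emph{same} $v$ --- which is exactly the conclusion of $Q'$.

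\textbf{$(\Leftarrow)$.} Assume $\hoare{P'}{C}{Q'}$. First check $P'(S^*)$: given $\vecto{\varphi} \in P$ with $\lproj{\varphi_i}(t) = i$, the tuple $\vecto{\sigma} \triangleq (\pproj{\varphi_1},\ldots,\pproj{\varphi_k})$ lies in $\widehat{P}$, and with $v \triangleq \iota(\vecto{\sigma})$ one verifies (using $u \neq t$) that $\varphi_i[u\mapsto v] \in S^*$ for every $i$. Hence $Q'(\sem(C,S^*))$ holds. Now take an arbitrary $\vecto{\varphi'} \in Q$ and re-tag it as $\vecto{\psi'}$ with $\psi'_i \triangleq \varphi'_i[t\mapsto i]$. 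As $t$ is not free in $Q$ (assumption~(3)), $\vecto{\psi'} \in Q$ and is correctly tagged, so $Q'(\sem(C,S^*))$ gives one value $v$ with $\psi'_i[u\mapsto v] \in \sem(C,S^*)$ for all $i$. Unfolding $\sem$, for each $i$ there is $\tau_i$ with $(\lproj{\varphi'_i}[t\mapsto i][u\mapsto v],\, \tau_i) \in S^*$ and $\bigstep{C}{\tau_i}{\pproj{\varphi'_i}}$. Reading off the definition of $S^*$: its $t$-component forces the recorded index to be $i$, and its $u$-component forces $v = \iota(\vecto{\rho}^{(i)})$ for the tuple $\vecto{\rho}^{(i)} \in \widehat{P}$ witnessing membership; since $\iota$ is injective, all $\vecto{\rho}^{(i)}$ coincide with one $\vecto{\rho} \in \widehat{P}$ and $\tau_i = \rho_i$. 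Thus $(\tau_1,\ldots,\tau_k) \in \widehat{P}$, so (using assumption~(1) once more) the tuple $\vecto{\varphi}$ with $\pproj{\varphi_i} = \tau_i$ and $\lproj{\varphi_i} = \lproj{\varphi'_i}$ lies in $P$, and $\bigstepK{C}{\varphi}{\varphi'}{k}$ holds by construction, establishing $\nhoare{k-IL}{P}{C}{Q}$.

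\textbf{Main obstacle.} The delicate point --- and the reason the ``tag with $t$ only'' encoding used for CHL (\propref{prop:expressing_CHL}) is insufficient here --- is the last step of $(\Leftarrow)$: because k-IL is a \emph{backward} logic, the $k$ preimages $\tau_1,\ldots,\tau_k$ are produced independently, one per execution index, and nothing a priori guarantees they were drawn from a common $P$-tuple rather than from $k$ unrelated ones. The shared logical variable $u$ carrying the code $\iota(\vecto{\tau})$, together with injectivity of $\iota$ and the cardinality hypothesis~(2), is precisely what glues them back together; getting the bookkeeping of $t$ versus $u$ and the uses of ``$P$ depends only on program variables'' right is the subtle part. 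The remaining steps are routine unfoldings of $\sem$, $P'$, $Q'$, and the $k$-execution step relation, and the whole argument would be formalized in Isabelle/HOL (with De~Bruijn indices standing in for the named logical variables).
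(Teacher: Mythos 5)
Your proof is correct and follows essentially the approach the paper itself takes (the paper only sketches the idea in the prose preceding the proposition and defers the details to the Isabelle formalization): tag the execution index in $t$, record the identity of the originating $P$-tuple in $u$ via an injection of $\states^k$ into $\lvals$ licensed by the cardinality assumption, and use injectivity to glue the $k$ independently-obtained preimages back into a single tuple of $P$. The construction of $S^*$, the check of $P'(S^*)$, and the re-tagging step using the freeness of $t$ in $Q$ all match the intended argument.
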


This proposition provides a direct translation for some k-IL triples into hyper-triples. Those that cannot be translated directly can still be verified with \logic{}, according to \propref{prop:RIL-is-hyper}.

\mypar{Forward underapproximation}

Underapproximate logics can also be formulated in a forward way:
Executing command $C$ in any state that satisfies the precondition
reaches at least one final state that satisfies the postcondition.
Forward underapproximation has recently been explored in Outcome Logic~\cite{OutcomeLogic}, a Hoare logic whose goal is to unify correctness (in the sense of classical Hoare logic) and incorrectness reasoning (in the sense of forward underapproximation) for single program executions. We focus on the underapproximation aspect of Outcome Logic here; overapproximation can be handled analogously to Hoare Logic (see \appref{subsec:safety}).
Moreover, we restrict the discussion to the programming language defined in \secref{subsec:language}; Outcome Logic also supports 
heap-manipulating and probabilistic programs, which we do not consider here.

Forward underapproximation for single executions can be formalized as follows:

\begin{definition}\textbf{Forward Underapproximation (FU).}\label{def:forward_underapproximation}
Let $P$ and $Q$ be sets of extended states. Then
    $\nhoare{FU}{P}{C}{Q} \triangleq
    \left(\forall \varphi \in P \ldotp 
    \exists \sigma' \ldotp
    \bigstep{C}{\pproj{\varphi}}{\sigma'} \land (\lproj{\varphi}, \sigma') \in Q
    \right)$
\end{definition}

This meaning can be expressed in \logic{} as follows: If we execute $C$ in a set of initial states that contains at least one state from $P$ then the set of final states will contain at least one state in $Q$.

\begin{proposition}\textbf{Expressing FU in \logic.}
$$
\nhoareSmall{FU}{P}{C}{Q}
\;\Longleftrightarrow\;
\hoareSmall{\lambda S \ldotp P \cap S \neq \varnothing}{C}{\lambda S \ldotp Q \cap S \neq \varnothing}
$$

Equivalently,
$\nhoare{FU}{P}{C}{Q}$
iff
$\hoareSmall{\exists \inSet{\varphi} \ldotp \varphi \in P}{C}{\exists \inSet{\varphi} \ldotp \varphi \in Q}$.

\end{proposition}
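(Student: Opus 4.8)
The plan is to unfold both sides of the equivalence to their semantic definitions (\defref{def:hoare_triples} and \defref{def:extended_semantics}) and then prove the two implications directly. The single observation that makes everything go through is that an extended state $\varphi$ is the pair $(\lproj{\varphi}, \pproj{\varphi})$, and that $\varphi' \in \sem(C, S)$ holds iff there is a $\sigma$ with $(\lproj{\varphi'}, \sigma) \in S$ and $\bigstep{C}{\sigma}{\pproj{\varphi'}}$.

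First I would prove the left-to-right direction. Assume $\nhoare{FU}{P}{C}{Q}$ and let $S$ be any set with $P \cap S \neq \none$; pick some $\varphi \in P \cap S$. Since $\varphi \in P$, the FU hypothesis yields a state $\sigma'$ with $\bigstep{C}{\pproj{\varphi}}{\sigma'}$ and $(\lproj{\varphi}, \sigma') \in Q$. Taking $\varphi' \triangleq (\lproj{\varphi}, \sigma')$ we have $\varphi' \in Q$, and because $\varphi = (\lproj{\varphi}, \pproj{\varphi}) \in S$ together with $\bigstep{C}{\pproj{\varphi}}{\pproj{\varphi'}}$, also $\varphi' \in \sem(C, S)$. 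Hence $Q \cap \sem(C, S) \neq \none$, i.e.\ $\sem(C, S)$ satisfies the postcondition, which proves the hyper-triple.

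Next I would prove the right-to-left direction. Assume the hyper-triple holds and fix an arbitrary $\varphi \in P$. Instantiating the hyper-triple with the singleton $S \triangleq \{ \varphi \}$ --- which satisfies $P \cap S \neq \none$ --- gives $Q \cap \sem(C, \{ \varphi \}) \neq \none$. Choose $\varphi'$ in that intersection. From $\varphi' \in \sem(C, \{ \varphi \})$ there is a $\sigma$ with $(\lproj{\varphi'}, \sigma) \in \{ \varphi \}$, forcing $\sigma = \pproj{\varphi}$ and $\lproj{\varphi'} = \lproj{\varphi}$, together with $\bigstep{C}{\sigma}{\pproj{\varphi'}}$, i.e.\ $\bigstep{C}{\pproj{\varphi}}{\pproj{\varphi'}}$. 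Since $\varphi' = (\lproj{\varphi}, \pproj{\varphi'}) \in Q$, the witness $\sigma' \triangleq \pproj{\varphi'}$ discharges the FU condition for $\varphi$; as $\varphi \in P$ was arbitrary, $\nhoare{FU}{P}{C}{Q}$ follows. The ``equivalently'' restatement in terms of $\exists \inSet{\varphi} \ldotp \varphi \in P$ is then immediate, since $(\lambda S \ldotp P \cap S \neq \none)$ and $(\lambda S \ldotp \exists \varphi \in P \ldotp \varphi \in S)$ denote the very same predicate by the semantics of syntactic hyper-assertions (\defref{def:sat}); likewise for $Q$.

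I do not expect a real obstacle here: the argument is entirely propositional once the definition of $\sem$ is unfolded, and the only point requiring a moment's care is keeping the logical-state and program-state components of extended states straight, together with the small trick of instantiating the hyper-triple with a singleton initial set for the backward direction --- exactly as in the analogous proofs for HL and IL earlier in this appendix. Accordingly, the \isabelle{} proof should be essentially automatic after unfolding the relevant definitions.
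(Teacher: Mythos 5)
Your proof is correct: both implications follow by unfolding \defref{def:extended_semantics} and \defref{def:hoare_triples} exactly as you describe, and the singleton instantiation $S = \{\varphi\}$ is the right move for the backward direction. The paper gives no textual proof for this proposition (it is verified only in \isabelle{}), but your argument is the natural one and mirrors the pattern of the analogous HL and IL translations in the same appendix.
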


The precondition (resp.\ postcondition) states that the intersection between $S$ and $P$ (resp.\ $Q$) is non-empty.
If instead it required that $S$ is a \emph{non-empty subset} of $P$ (resp.\ $Q$), it would express the meaning of Outcome Logic triples, \ie the conjunction of classical Hoare Logic and forward underapproximation.

While Outcome Logic reasons about single executions only,
it is straightforward to generalize forward underapproximation to multiple executions:

\begin{definition}\textbf{k-Forward Underapproximation (k-FU).}
Let $k \in \mathbb{N^+}$,
and
let $P$ and $Q$ be sets of $k$-tuples of extended states. Then
$
\nhoare{k-FU}{P}{C}{Q}
\triangleq
(
\forall \vecto{\varphi} \in P \ldotp
\exists \vecto{\varphi'} \in Q \ldotp
\bigstepK{C}{\varphi}{\varphi'}{k}
)$.
\end{definition}

Again, this meaning can be expressed as a hyperproperty:

\begin{proposition}
\label{prop:trans-RFU}
\textbf{k-FU triples express hyperproperties.}
Given sets of $k$-tuples of extended states $P$ and $Q$, there exists a hyperproperty $\phyperpropertyfont{H}$
    such that, for all commands $C$,
    $
        \hypersat{C}{H} \Longleftrightarrow \nhoare{k-FU}{P}{C}{Q}
    $.
\end{proposition}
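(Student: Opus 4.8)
The plan is to follow exactly the pattern used above for the analogous results (the propositions stating that HL, CHL, IL, and k-IL triples express hyperproperties): exhibit an explicit program hyperproperty $\mathcal{H}$, phrased in terms of the relation $\Sigma(C) = \{(\sigma,\sigma') \mid \bigstep{C}{\sigma}{\sigma'}\}$, and then verify the biconditional by unfolding definitions. Concretely, I would define
\[
\mathcal{H} \triangleq \{\, C \mid \forall \vecto{\varphi} \in P \ldotp \exists \vecto{\varphi'} \in Q \ldotp \forall i \in \range{1}{k} \ldotp (\lproj{\varphi_i} = \lproj{\varphi'_i} \,\land\, (\pproj{\varphi_i}, \pproj{\varphi'_i}) \in \Sigma(C)) \,\},
\]
i.e.\ $\mathcal{H}$ holds of (the state-pair relation of) a command exactly when every input $k$-tuple satisfying $P$ has some output $k$-tuple satisfying $Q$ that is reached componentwise, with matching logical states. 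This $\mathcal{H}$ is a legitimate element of $\powerset{\powerset{\states \times \states}}$ because its membership condition refers to the command only through $\Sigma(C)$; here $P$ and $Q$ enter merely as fixed parameters.

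Next I would prove, for an arbitrary command $C$, that $\hypersat{C}{H} \Longleftrightarrow \nhoare{k-FU}{P}{C}{Q}$. Unfolding $\hypersat{C}{H}$ via \defref{def:hyperproperties} rewrites the abstract membership $\{(\sigma,\sigma') \mid \bigstep{C}{\sigma}{\sigma'}\} \in \mathcal{H}$ into precisely the displayed condition on $\Sigma(C)$, while unfolding the definition of k-FU gives $\forall \vecto{\varphi} \in P \ldotp \exists \vecto{\varphi'} \in Q \ldotp \bigstepK{C}{\varphi}{\varphi'}{k}$. The one remaining step is to expand $\overset{k}{\rightarrow}$: by its definition, $\bigstepK{C}{\varphi}{\varphi'}{k}$ is exactly $\forall i \in \range{1}{k} \ldotp (\bigstep{C}{\pproj{\varphi_i}}{\pproj{\varphi'_i}} \land \lproj{\varphi_i} = \lproj{\varphi'_i})$, and $\bigstep{C}{\sigma}{\sigma'}$ holds iff $(\sigma,\sigma') \in \Sigma(C)$ by definition of $\Sigma$. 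After this definitional rewrite (and using commutativity of $\land$), the two unfolded statements coincide, which closes the proof.

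I do not anticipate any genuine obstacle: this is the ``easy'' direction — merely asserting the \emph{existence} of a suitable hyperproperty, as opposed to a direct syntactic translation into \logic{} — and it is strictly parallel to \propref{prop:trans-CHL} and \propref{prop:RIL-is-hyper}. The only point deserving a brief check is that $\mathcal{H}$ is a genuine \emph{program} hyperproperty in the sense of \defref{def:hyperproperties} — that it inspects the command only via $\Sigma(C)$ — which is immediate, since the logical components $\lproj{\varphi_i}$ and $\lproj{\varphi'_i}$ are supplied by $P$ and $Q$ and are constrained to be equal by $\overset{k}{\rightarrow}$ itself, so only the big-step relation on program states is consulted. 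As with the earlier propositions, the remaining bookkeeping would be discharged in the Isabelle/HOL formalization.
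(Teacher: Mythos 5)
Your proposal is correct and coincides with the paper's own proof: the paper defines exactly the same hyperproperty $\phyperpropertyfont{H}$ (quantifying $\forall \vecto{\varphi} \in P$, $\exists \vecto{\varphi'} \in Q$ over componentwise membership in $\Sigma(C)$ with matching logical states) and verifies the biconditional by unfolding definitions. No differences worth noting.
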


\begin{proof}
    We define
    \begin{align*}
        \phyperpropertyfont{H} \triangleq
    \{ C \mid
        \forall \vecto{\varphi} \in P \ldotp
        \exists \vecto{\varphi'} \in Q \ldotp
        (\forall i \in \range{1}{k} \ldotp
        \lproj{\varphi_i} = \lproj{{\varphi'}_i}
       \land
        (\pproj{\varphi_i}, \pproj{{\varphi'}_i}) \in \Sigma(C))
    \}
    \end{align*}
    and prove
    $\forall C \ldotp \hypersat{C}{H} \Longleftrightarrow \nhoare{k-FU}{P}{C}{Q}$.
\end{proof}

Since FU corresponds exactly to k-FU for $k=1$, this proposition applies also to FU\@.

Because k-FU is \emph{forward} underapproximate, we can use the tagging from \propref{prop:expressing_CHL} to translate k-FU triples into hyper-triples.
The following encoding intuitively corresponds to
the precondition $(S_1 \times \ldots \times S_k) \cap P \neq \varnothing$
and the postcondition
$(S_1 \times \ldots \times S_k) \cap Q \neq \varnothing$,
where $S_i$ corresponds to the set of states with $t = i$:

\begin{proposition}\textbf{Expressing k-FU in \logic.}
    \\ Let
        $P' \triangleq (
        \exists \vecto{\varphi} \in P \ldotp
            \forall i \in \range{1}{k} \ldotp
                \inSet{{\varphi}_i} \land \lproj{{\varphi}_i}(t) = i
        )$
        and
        $Q' \triangleq (
        \exists \vecto{\varphi'} \in Q \ldotp
            \forall i \in \range{1}{k} \ldotp
                \inSet{{\varphi'_i}} \land \lproj{{\varphi'_i}}(t) = i
        )$.

\noindent
    If $t$ does not occur free in $P$ or $Q$, then
    $
    \nhoare{k-FU}{P}{C}{Q} \;\Longleftrightarrow\; \hoare{P'}{C}{Q'}
    $.
\end{proposition}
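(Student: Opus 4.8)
The plan is to unfold both sides to their semantic definitions (\defref{def:extended_semantics}, \defref{def:hoare_triples}, and the definition of k-FU) and establish the biconditional directly. The one device that makes everything work is that neither $P$ nor $Q$ mentions $t$, so we may freely overwrite the value of $t$ in any (extended) state without leaving $P$ (resp.\ $Q$); combined with the fact that a $k$-tuple satisfying $P'$ (resp.\ $Q'$) tags its $i$-th component with $\lproj{\varphi_i}(t) = i$, this lets us move between ordered $k$-tuples of states and $k$-element sets of states.

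First I would handle the direction $\hoare{P'}{C}{Q'} \Rightarrow \nhoare{k-FU}{P}{C}{Q}$. Fix an arbitrary $\vecto{\varphi} \in P$, let $\psi_i \triangleq \varphi_i[t \mapsto i]$, and set $S \triangleq \{ \psi_1, \ldots, \psi_k \}$. Since $t$ does not occur free in $P$, the tuple $(\psi_1, \ldots, \psi_k)$ is still in $P$, and it witnesses $\sat{S}{P'}$ because each $\psi_i \in S$ and $\lproj{\psi_i}(t) = i$. Applying the hypothesis yields $\sat{\sem(C,S)}{Q'}$, i.e.\ a tuple $\vecto{\varphi'} \in Q$ with each $\varphi'_i \in \sem(C,S)$ and $\lproj{\varphi'_i}(t) = i$. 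By \defref{def:extended_semantics}, $\varphi'_i \in \sem(C,S)$ gives some $\sigma_i$ with $(\lproj{\varphi'_i}, \sigma_i) \in S$ and $\bigstep{C}{\sigma_i}{\pproj{\varphi'_i}}$; since the $\psi_j$ carry pairwise distinct values of $t$ and $\lproj{\varphi'_i}(t) = i$, the pair $(\lproj{\varphi'_i}, \sigma_i)$ must be $\psi_i$, so $\sigma_i = \pproj{\varphi_i}$ and $\lproj{\varphi'_i} = \lproj{\varphi_i}[t \mapsto i]$. Finally, overwriting $t$ in each $\varphi'_i$ with the original value $\lproj{\varphi_i}(t)$ produces a tuple $\vecto{\chi} \in Q$ (again by freshness of $t$ in $Q$) with $\lproj{\chi_i} = \lproj{\varphi_i}$, $\pproj{\chi_i} = \pproj{\varphi'_i}$, and $\bigstep{C}{\pproj{\varphi_i}}{\pproj{\chi_i}}$ for all $i$, which is exactly $\bigstepK{C}{\varphi}{\chi}{k}$.

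For the converse, $\nhoare{k-FU}{P}{C}{Q} \Rightarrow \hoare{P'}{C}{Q'}$, take any $S$ with $\sat{S}{P'}$, extract the witnessing tuple $\vecto{\varphi} \in P$ with $\varphi_i \in S$ and $\lproj{\varphi_i}(t) = i$, and apply the k-FU hypothesis to get $\vecto{\varphi'} \in Q$ with $\bigstepK{C}{\varphi}{\varphi'}{k}$. For each $i$ we then have $\lproj{\varphi'_i} = \lproj{\varphi_i}$ and $\bigstep{C}{\pproj{\varphi_i}}{\pproj{\varphi'_i}}$ with $\varphi_i \in S$, so taking $\sigma = \pproj{\varphi_i}$ in \defref{def:extended_semantics} shows $\varphi'_i \in \sem(C,S)$; moreover $\lproj{\varphi'_i}(t) = \lproj{\varphi_i}(t) = i$. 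Hence $\vecto{\varphi'}$ witnesses $\sat{\sem(C,S)}{Q'}$. The step I expect to require the most care is the first direction: one must take $S$ to be exactly the $k$-element set $\{\psi_1,\ldots,\psi_k\}$ (not a larger one) so that each state of $\sem(C,S)$ is traceable to a unique source execution via its $t$-tag, and one must invoke the freshness of $t$ twice — once to re-inject the tagged tuple into $P$ and once to recover a $Q$-tuple whose logical states match the originals, as $\bigstepK{}{}{}{k}$ requires the logical component to be preserved along each execution. The remaining bookkeeping is routine unfolding.
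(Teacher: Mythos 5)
Your proof is correct and is essentially the intended argument: the paper gives no prose proof for this proposition (it is only verified in Isabelle/HOL), and the direct unfolding you carry out — tagging each $\varphi_i$ with $t\mapsto i$, taking $S$ to be exactly the $k$-element tagged set so that every state of $\sem(C,S)$ traces back to a unique source via its $t$-value, and invoking the freshness of $t$ once to re-enter $P$ and once to untag the resulting $Q$-tuple — is precisely the reasoning the construction of $P'$ and $Q'$ is designed to support. The two points you flag as delicate (minimality of $S$ and the preservation of logical states by $\sem$, which is what makes the traceback and the condition $\lproj{\varphi_i}=\lproj{\varphi'_i}$ in $\bigstepK{C}{\varphi}{\varphi'}{k}$ line up) are exactly the right ones, and both are handled correctly.
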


\subsection{Beyond Over- and Underapproximation}\label{subsec:beyond}

In the previous subsections, we have discussed overapproximate logics, which reason about \emph{all} executions, and underapproximate logics, which reason about the \emph{existence} of executions. In this subsection, we explore program hyperproperties that combine universal and existential quantification, as well as properties that apply other comprehensions to the set of executions. We also discuss relational properties about multiple programs (such as program equivalence).

\mypar{$\forall^* \exists^*$-hyperproperties}

Generalized non-interference (see \secref{subsec:beyond-k-safety}) intuitively expresses that for each execution that produces a given observable output, there exists another execution that produces the same output using any other secret. That is, observing the output does not reveal any information about the secret. GNI is a 
hyperproperty that cannot be expressed in existing over- or underapproximate Hoare logics. It mandates the existence of an execution \emph{based on other possible executions}, whereas underapproximate logics can show only the existence of (combinations of) executions that satisfy some properties,
\emph{independently of the other possible executions}.
Generalized non-interference belongs to a broader class of $\forall^*\exists^*$-hyperproperties.

RHLE~\cite{RHLE} is a Hoare-style relational logic that has been recently proposed to verify $\forall^*\exists^*$-relational properties, such as program refinement~\cite{refinement}.
We call the special case of RHLE where triples specify properties of multiple executions of the same command \emph{k-Universal Existential}; we can formalize its triples as follows:

\begin{definition}\textbf{k-Universal Existential (k-UE).}
Let $k_1, k_2 \in \mathbb{N^+}$,
and
let $P$ and $Q$ be sets of $(k_1+k_2)$-tuples of extended states. Then
    \begin{align*}
        &\nhoare{k-UE(k_1,k_2)}{P}{C}{Q}
        \triangleq
        (
            \forall (\vecto{\varphi}, \vecto{\gamma}) \in P \ldotp
                \forall \vecto{\varphi'} \ldotp
                        \bigstepK{C}{\varphi}{\varphi'}{k_1}
                    \Rightarrow
                        (\exists \vecto{\gamma'} \ldotp
                            \bigstepK{C}{\gamma}{\gamma'}{k_2} \land
                            (\vecto{\varphi'}, \vecto{\gamma'}) \in Q
                        )
        )
    \end{align*}
\end{definition}

Given $k_1+k_2$ initial states $\varphi_1, \ldots, \varphi_{k_1}$ and $\gamma_1, \ldots, \gamma_{k_2}$ that together satisfy the precondition $P$,
for any final states $\varphi'_1, \ldots, \varphi'_{k_1}$ that can be reached by executing $C$ in the initial states
$\varphi_1, \ldots, \varphi_{k_1}$,
there exist $k_2$ final states $\gamma'_1, \ldots, \gamma'_{k_2}$ that can be reached by executing $C$ in the initial states $\gamma_1, \ldots, \gamma_{k_2}$,
such that $\varphi'_1, \ldots, \varphi'_{k_1}, \gamma'_1, \ldots, \gamma'_{k_2}$ together satisfy the postcondition $Q$.

The properties expressed by k-UE assertions are hyperproperties:

\begin{proposition}
\label{prop:trans-RHLE}
\textbf{k-UE triples express hyperproperties.}
Given sets of $(k_1 + k_2)$-tuples of extended states $P$ and $Q$, there exists a hyperproperty $\phyperpropertyfont{H}$
    such that, for all commands $C$,
    $
        \hypersat{C}{H} \Longleftrightarrow \nhoare{k-UE(k_1, k_2)}{P}{C}{Q}
    $.
\end{proposition}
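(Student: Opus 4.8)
The plan is to follow exactly the recipe used for the analogous results about HL, CHL, IL, $k$-IL, and $k$-FU (see \propref{prop:trans-CHL}, \propref{prop:RIL-is-hyper}, and \propref{prop:trans-RFU}): construct the witnessing hyperproperty by unfolding the definition of $\nhoare{k-UE(k_1,k_2)}{P}{C}{Q}$ and replacing every occurrence of the big-step relation by the trace set $\Sigma(C) \triangleq \{\, (\sigma, \sigma') \mid \bigstep{C}{\sigma}{\sigma'} \,\}$, which is the object on which program hyperproperties are evaluated (\defref{def:hyperproperties}). Concretely, I would define
\begin{align*}
\phyperpropertyfont{H} \triangleq \bigl\{\, C \;\big|\; & \forall (\vecto{\varphi}, \vecto{\gamma}) \in P \ldotp \forall \vecto{\varphi'} \ldotp \bigl(\forall i \in \range{1}{k_1} \ldotp \lproj{\varphi_i} = \lproj{{\varphi'}_i} \land (\pproj{\varphi_i}, \pproj{{\varphi'}_i}) \in \Sigma(C)\bigr) \Rightarrow{} \\
& \exists \vecto{\gamma'} \ldotp \bigl(\forall j \in \range{1}{k_2} \ldotp \lproj{\gamma_j} = \lproj{{\gamma'}_j} \land (\pproj{\gamma_j}, \pproj{{\gamma'}_j}) \in \Sigma(C)\bigr) \land (\vecto{\varphi'}, \vecto{\gamma'}) \in Q \,\bigr\}.
\end{align*}
Although written as a set of commands for readability, the membership condition of $\phyperpropertyfont{H}$ refers to $C$ only through $\Sigma(C)$, so $\phyperpropertyfont{H}$ is genuinely an element of $\powerset{\powerset{\states \times \states}}$, \ie a legitimate program hyperproperty, and $C$ satisfies it precisely when $\Sigma(C)$ is a member.

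The equivalence $\hypersat{C}{H} \Longleftrightarrow \nhoare{k-UE(k_1,k_2)}{P}{C}{Q}$ then follows by a routine unfolding. Expanding the shorthand $\bigstepK{C}{\varphi}{\varphi'}{k_1}$ according to its definition yields $\forall i \in \range{1}{k_1} \ldotp \bigstep{C}{\pproj{\varphi_i}}{\pproj{{\varphi'}_i}} \land \lproj{\varphi_i} = \lproj{{\varphi'}_i}$, and $\bigstep{C}{\sigma}{\sigma'}$ holds iff $(\sigma, \sigma') \in \Sigma(C)$; the same holds for $\bigstepK{C}{\gamma}{\gamma'}{k_2}$. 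Substituting these equivalences into the definition of $\nhoare{k-UE(k_1,k_2)}{P}{C}{Q}$ turns its body verbatim into the condition defining membership of $C$ in $\phyperpropertyfont{H}$, which gives both directions. In the Isabelle development this is, as for the earlier propositions, just a matter of providing $\phyperpropertyfont{H}$ and discharging the biconditional after unfolding the definitions of $\Sigma$, of $\overset{k}{\rightarrow}$, and of k-UE validity.

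I do not anticipate a real obstacle. The quantifier alternation $\forall^* \exists^*$ present in k-UE causes no difficulty, because program hyperproperties are arbitrary predicates over trace sets and may carry any logical structure whatsoever; this is precisely why \logic{} can also express k-UE triples. The only point that needs a line of care is the bookkeeping of the logical-state components: since the big-step semantics constrains only the program states, the constructed $\phyperpropertyfont{H}$ must independently demand $\lproj{\varphi_i} = \lproj{{\varphi'}_i}$ for the universally-quantified tuple and $\lproj{\gamma_j} = \lproj{{\gamma'}_j}$ for the existentially-quantified one, mirroring the definition of $\overset{k}{\rightarrow}$. This is the same subtlety already handled in \propref{prop:trans-CHL} and \propref{prop:RIL-is-hyper}, and it carries over unchanged.
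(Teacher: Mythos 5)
Your proposal is correct and coincides with the paper's own proof: the paper defines exactly the same hyperproperty $\phyperpropertyfont{H}$ (the k-UE condition with the big-step relation replaced by membership in $\Sigma(C)$ and the logical-state equalities retained) and then checks the biconditional by unfolding definitions. Your additional remark that $\phyperpropertyfont{H}$ depends on $C$ only through $\Sigma(C)$ is a point the paper leaves implicit, but it is the same argument.
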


\begin{proof}
    We define
    \begin{align*}
        \phyperpropertyfont{H} \triangleq
    \{ C \mid
        \forall (\vecto{\varphi}, \vecto{\gamma}) \in P \ldotp
        \forall \vecto{\varphi'} \ldotp
        &\left( \forall i \in \range{1}{k_1}
        \ldotp
            (\pproj{\varphi_i}, \pproj{{\varphi'}_i}) \in \Sigma(C) 
            \land
            \lproj{\varphi_i} = \lproj{{\varphi'}_i}
        \right) \\
        &\Rightarrow
        \exists \vecto{\gamma'} \ldotp
        (\vecto{\varphi'}, \vecto{\gamma'}) \in Q
        \land
        ( \forall i \in \range{1}{k_2}
        \ldotp
            (\pproj{\gamma_i}, \pproj{{\gamma'}_i}) \in \Sigma(C) 
            \land
            \lproj{\gamma_i} = \lproj{{\gamma'}_i}
        )
    \}
    \end{align*}
    and prove
    $\forall C \ldotp \hypersat{C}{H} \Longleftrightarrow \nhoare{k-UE(k_1, k_2)}{P}{C}{Q}$.
\end{proof}

They can be directly expressed in \logic{}, as follows:

\begin{proposition}\textbf{Expressing k-UE in \logic.}
    Let $t,u$ be distinct variables in $\lvars{}$, and
    \begin{align*}
    T_n &\triangleq (\lambda \vecto{\varphi} \ldotp
        \forall i \in \range{1}{k_n} \ldotp
            \inSet{{\varphi}_i} \land
            {\varphi}_i(t) = i \land
            {\varphi}_i(u) = n
    ) \\
        P' &\triangleq (
            \forall i \ldotp
                \exists \inSet{\varphi} \ldotp
                    \lproj{\varphi}(t) = i
                    \land
                    \lproj{\varphi}(u) = 2
            )
            \land
            (\forall \vecto{\varphi}, \vecto{\gamma} \ldotp
                T_1(\vecto{\varphi}) \land T_2(\vecto{\gamma})
                \Rightarrow (\vecto{\varphi}, \vecto{\gamma}) \in P
        ) \\
        Q' &\triangleq (
            \forall \vecto{\varphi'} \ldotp
                T_1(\varphi') \Rightarrow
                (\exists \vecto{\gamma'} \ldotp
                    T_2(\vecto{\gamma'}) \land (\vecto{\varphi'}, \vecto{\gamma'}) \in Q
                )
        )
    \end{align*}
where $t,u$ do not occur free in $P$ or $Q$. Then
    $
        \nhoare{k-UE(k_1,k_2)}{P}{C}{Q} \;\Longleftrightarrow\;
        \hoare{P'}{C}{Q'}
    $.
\end{proposition}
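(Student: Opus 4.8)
The plan is to prove the two directions of the biconditional separately, treating the logical variables $t$ and $u$ as \emph{tags}: $t$ will record a state's position within a tuple, and $u$ will record which block it belongs to ($u=1$ for the $k_1$ universally-quantified executions, $u=2$ for the $k_2$ existentially-quantified ones). This is the same device used for CHL in \propref{prop:expressing_CHL}, pushed one step further to accommodate the quantifier alternation in k-UE. Two facts will be used repeatedly: (i) by \defref{def:extended_semantics}, $\bigstep{C}{\cdot}{\cdot}$ never changes the logical part of an extended state, so the tags of a state in $\sem(C,S)$ determine, up to its program state, the source state in $S$ it came from; and (ii) since $t,u\notin\freevars{P}\cup\freevars{Q}$, replacing a tuple by a copy with different $t,u$-values preserves membership in $P$ and in $Q$.

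\textbf{Direction $(\Leftarrow)$.} Assuming $\hoare{P'}{C}{Q'}$, I would fix $(\vecto{\varphi},\vecto{\gamma})\in P$ and $\vecto{\varphi'}$ with $\bigstepK{C}{\varphi}{\varphi'}{k_1}$ and construct the witness $\vecto{\gamma'}$. Let $S$ contain the states $\varphi_i[t\mapsto i][u\mapsto 1]$ for $i\in\range{1}{k_1}$, the states $\gamma_j[t\mapsto j][u\mapsto 2]$ for $j\in\range{1}{k_2}$, and one further $u{=}2$-tagged state per remaining position, as the first conjunct of $P'$ requires. Then $S\models P'$ — the first conjunct by construction, the second by (ii). Running $C$ on the $u{=}1$ block produces a $T_1$-tuple inside $\sem(C,S)$, so $Q'$ yields a $T_2$-tuple $\vecto{\delta}\subseteq\sem(C,S)$ with $(\vecto{\varphi'},\vecto{\delta})\in Q$ (after re-tagging, by (ii)). By (i), each $\delta_j$ is reached from the source $\gamma_j[t\mapsto j][u\mapsto 2]$, which its tags identify uniquely; restoring $\gamma_j$'s original logical state on top of the program state of $\delta_j$ gives $\gamma'_j$, hence $\bigstepK{C}{\gamma}{\gamma'}{k_2}$, and $(\vecto{\varphi'},\vecto{\gamma'})\in Q$ again by (ii).

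\textbf{Direction $(\Rightarrow)$.} Assuming $\nhoare{k-UE(k_1,k_2)}{P}{C}{Q}$, I would take an arbitrary $S\models P'$ and show $\sem(C,S)\models Q'$. Given a $T_1$-tuple $\vecto{\varphi'}$ in $\sem(C,S)$, tracing each $\varphi'_i$ back through $\sem$ gives a source $\varphi_i\in S$ with $\bigstep{C}{\pproj{\varphi_i}}{\pproj{{\varphi'_i}}}$ and equal logical state; thus $\varphi_i(t)=i$, $\varphi_i(u)=1$, so $\vecto{\varphi}$ is a $T_1$-tuple in $S$ and $\bigstepK{C}{\varphi}{\varphi'}{k_1}$. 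Instantiating the first conjunct of $P'$ at $i=1,\ldots,k_2$ gives states $\gamma_j\in S$ with $\gamma_j(t)=j$, $\gamma_j(u)=2$, i.e.\ a $T_2$-tuple $\vecto{\gamma}$ in $S$, and the second conjunct then gives $(\vecto{\varphi},\vecto{\gamma})\in P$. The k-UE hypothesis applied to $(\vecto{\varphi},\vecto{\gamma})$ and $\vecto{\varphi'}$ produces $\vecto{\gamma'}$ with $\bigstepK{C}{\gamma}{\gamma'}{k_2}$ and $(\vecto{\varphi'},\vecto{\gamma'})\in Q$; each $\gamma'_j\in\sem(C,S)$ keeps its tags $t{=}j,u{=}2$, so $\vecto{\gamma'}$ witnesses $Q'$ for $\vecto{\varphi'}$.

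\textbf{Main obstacle.} I expect the real work to be purely in the bookkeeping rather than in any single insight: making sure the pair of tags $(t,u)$ pins down a \emph{unique} source state for every reachable state used when back-translating the existential block, and checking that the auxiliary first conjunct of $P'$ is exactly what is needed — strong enough to guarantee a populated $u{=}2$ block in the $(\Rightarrow)$ direction, yet satisfiable by an explicitly built $S$ in the $(\Leftarrow)$ direction. Discharging the ``re-tagging preserves membership in $P$, $Q$'' steps is routine semantically but needs some care in the Isabelle development, where states carry De Bruijn-indexed environments (\defref{def:sat}).
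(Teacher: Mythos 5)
Your proposal is correct and follows essentially the same route the paper takes (and formalizes in Isabelle): tag states with $t$ for the position and $u$ for the quantifier block, exploit that $\sem$ preserves logical states so that reachable states can be traced back to their uniquely tagged sources, and use the freeness of $t,u$ in $P$ and $Q$ to discharge the re-tagging steps. Both directions check out, including the bookkeeping you flag as the main obstacle --- in the constructed $S$ the only $T_1$- and $T_2$-tuples are the intended ones, and the extra $u{=}2$ states required by the first conjunct of $P'$ carry $t$-values outside $\range{1}{k_2}$ and so cannot pollute either.
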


This proposition  borrows ideas from the translations of other logics we saw earlier. In particular, we use a logical variable $t$ to tag the executions, and an additional logical variable $u$ that indicates whether a state is universally ($u=1$) or existentially ($u=2$) quantified.

\mypar{$\exists^* \forall^*$-hyperproperties}

To the best of our knowledge, no existing Hoare logic can express $\exists^*\forall^*$-hyperproperties,
\ie the \emph{existence} of executions in relation to \emph{all} other executions.
As shown by the example in \secref{sec:syntactic},
$\exists^*\forall^*$-hyperproperties naturally arise when
disproving a $\forall^*\exists^*$-hyperproperty (such as GNI),
where the existential part can be thought of as a counter-example,
and the universal part as the proof that this is indeed a counter-example.
The existence of a minimum for a function computed by a command $C$ is another simple example of an $\exists^*\forall^*$-property,
as illustrated in \secref{subsec:proving-exists}.

\mypar{Properties using other comprehensions}

Some interesting program hyperproperties cannot be expressed by quantifying over states, but require other comprehensions over the set of states, such as counting or summation. As an example, the hyperproperty ``there are exactly $n$ different possible outputs for any given input'' cannot be expressed by quantifying over the states, but requires counting (see \appref{app:unbounded}). Other examples of such hyperproperties include statistical properties about a program:

\begin{example}
    \textbf{Mean number of requests.}
    Consider a command $C$ that, given some input $x$, retrieves and returns information from a database.
    At the end of the execution of $C$, variable $n$ contains the number of database requests that were performed.
    If the input values are
    uniformly distributed (and with a suitable precondition $P$),
    then the following hyper-triple expresses that the average number of requests performed by $C$ is at most $2$:
    $$
    \simpleHoare{P}{C}{\lambda S \ldotp \mathit{mean}_n^x(\{ \pproj{\varphi} \mid \varphi \in S \}) \leq 2}
    $$
    where $\mathit{mean}_n^x$ computes the average (using a suitable definition for the average if the set is infinite) of the value of $n$.%
\end{example}

To the best of our knowledge, \logic{} is the only Hoare logic that can prove this property; existing logics neither support reasoning about mean-comprehensions over multiple execution states nor 
reasoning about infinitely many executions \emph{at the same time} (which is necessary if the domain of input $x$ is infinite).

\mypar{Relational program properties}

Relational program properties typically relate executions of several \emph{different} programs and,  thus, do not correspond to program hyperproperties as defined in \defref{def:hyperproperties}. However, it is possible to construct a single \emph{product} program~\cite{productPrograms,barthe2013beyond} that encodes the executions of several given programs, such that relational properties can be expressed as hyperproperties of the constructed program and \proven{} in \logic{}.

We illustrate this approach on program refinement~\cite{refinement}.
A command $C_2$ \emph{refines} a command $C_1$ iff the set of pairs of pre- and post-states of $C_2$ is a subset of the corresponding set of $C_1$. Program refinement is a $\forall\exists$-property,
where the $\forall$ and the $\exists$ apply to different programs. To encode refinement, we construct a new product program that non-deterministically executes either $C_1$ or $C_2$, and we track in a logical variable $t$ which command was executed. This encoding allows us to express and prove refinement in \logic{}:

\begin{example}\textbf{Expressing program refinement in \logic.}\\
    Let $C \triangleq \cnif{ ( \cassign{t}{1} \cseq C_1 ) }{ ( \cassign{t}{2} \cseq C_2 ) } $.
    If $t$ does not occur free in $C_1$ or $C_2$ then
    $C_2$ refines $C_1$ iff
    $$
    \hoare{\top}{C}{
        \forall \inSet{\varphi} \ldotp
        \pproj{\varphi}(t) = 2 \Rightarrow \inSet{(\lproj{\varphi}, \pproj{\varphi}[t \coloneqq 1])}
    }
    $$
\end{example}

This example illustrates the general methodology to transform a relational property over different programs into an equivalent hyperproperty for a new product program,
and thus to reason about relational program properties in \logic{}.
Relational logics typically provide rules that align and relate parts of the different program executions; we present such a rule for \logic{} in \appref{app:synchronous}.

\medskip
This section demonstrated that \logic{} is sufficiently expressive to prove and disprove arbitrary program hyperproperties as defined in \defref{def:hyperproperties}. Thereby, it captures hyperproperties that are beyond the reach of
existing Hoare logics.

\clearpage

\section{Compositionality}
\label{app:compositionality}

The core rules of \logic{} allow one to prove any valid hyper-triple, but not necessarily \emph{compositionally},
as explained in \secref{subsec:compositionality}.
As an example, consider the sequential composition of a command $C_1$ that satisfies \emph{generalized} non-interference (GNI) with a command $C_2$ that satisfies non-interference (NI).
We would like to prove that $C_1 \cseq C_2$ satisfies GNI (the weaker property).
As discussed in \secref{subsec:beyond-k-safety},
a possible postcondition for $C_1$ is
$\gni{l}{h} \triangleq (\forall \inSet{\varphi_1}, \inSet{\varphi_2} \ldotp \exists \inSet{\varphi} \ldotp \lproj{\varphi_1}(h) = \lproj{\varphi}(h) \land \pproj{\varphi}(l) = \pproj{\varphi_2}(l))$,
while a possible precondition for $C_2$ is
$\low{l} \triangleq (\forall \inSet{\varphi_1}, \inSet{\varphi_2} \ldotp \varphi_1(l) = \varphi_2(l))$.
The corresponding hyper-triples for $C_1$ and $C_2$ cannot be composed using the core rules.
In particular, rule \namerule{Seq} cannot be applied (even in combination with \namerule{Cons}), since the postcondition of $C_1$ does not imply the precondition of $C_2$.
Note that this observation does \emph{not} contradict completeness: By
\thmref{thm:complete}, it is possible to prove \emph{more precise} triples for $C_1$ and $C_2$, such that the postcondition of $C_1$ matches the precondition of $C_2$.
However, to enable modular reasoning, our goal is to construct the proof by composing the given triples for the individual commands rather than deriving new ones.

In this section, we present \emph{compositionality rules} for hyper-triples~(\appref{subsec:compositionality-rules}).
These rules are \emph{admissible} in \logic{}, in the sense that they do not modify the set of valid hyper-triples that can be proved.
Rather, these rules enable flexible compositions of hyper-triples (such as those discussed above).
We illustrate these rules on two examples (\appref{subsect:comp-examples}): Composing minimality with monotonicity, and GNI with NI\@.
All technical results presented in this section (soundness of the rules shown in \figref{fig:compositional_rules} and validity of the examples)
have been formalized and proved in \isabelle{}.

\subsection{Compositionality Rules}
\label{subsec:compositionality-rules}

\begin{figure}[h]
    \[
    \begin{array}{c}

    \Inf[\mathit{Linking}]{
        \forall \varphi_1, \varphi_2 \ldotp \left( \lproj{\varphi_1} = \lproj{\varphi_2} \land \shoare{ \inSet{\varphi_1} }{C}{ \inSet{ \varphi_2 } }
        \Longrightarrow \;
        \shoare{ P_{\varphi_1} }{C}{ Q_{\varphi_2} } \right)
    }{\shoare{ \forall \inSet{\varphi} \ldotp P_{\varphi} }{C}{  \forall \inSet{\varphi} \ldotp
    Q_{\varphi} }}

    \\[2em]

    \Inf[\mathit{And}]{\shoare{P_1}{C}{Q_1}}{\shoare{P_2}{C}{Q_2}}{\shoare{P_1 \land P_2}{C}{Q_1 \land Q_2}}

    \hspace{5mm}

    \Inf[\mathit{Or}]{\shoare{P_1}{C}{Q_1}}{\shoare{P_2}{C}{Q_2}}{\shoare{P_1 \lor P_2}{C}{Q_1 \lor Q_2}}

    \\[2em]

    \Inf[\mathit{FrameSafe}]{\shoare{P}{C}{Q}}{\text{no } \exists \inSet{\_} \text{ in } F}{\mathit{wr}(C) \cap \freevars{F} = \varnothing}{\shoare{P \land F}{C}{Q \land F}}

    \\[2em]

    \Inf[\mathit{Forall}]{\forall x \ldotp ( \shoare{P_x}{C}{Q_x} )}{\shoare{\forall x \ldotp P_x}{C}{\forall x \ldotp Q_x}}

    \hspace{5mm}

    \Inf[\mathit{IndexedUnion}]{\forall x \ldotp ( \shoare{P_x}{C}{Q_x} )}{\shoare{\bigotimes\nolimits_{x \in X} P_x}{C}{
        \bigotimes\nolimits_{x \in X} Q_x}}

    \\[2em]

    \Inf[\mathit{Union}]{\shoare{P_1}{C}{Q_1}}{\shoare{P_2}{C}{Q_2}}{\shoare{P_1 \otimes P_2}{C}{Q_1 \otimes Q_2}}

    \hspace{5mm}

    \Inf[\mathit{BigUnion}]{ \shoare{P}{C}{Q} }{ \shoare{\bigotimes P}{C}{\bigotimes Q} }

    \\[2em]

    \Inf[\mathit{Specialize}]{\shoare{P}{C}{Q}}{\mathit{wr}(C) \cap \freevars{b} = \varnothing}{\shoare{\transformAssume{b}{P}}{C}{\transformAssume{b}{Q}}}

    \\[2em]

    \Inf[\mathit{LUpdate}]{P \Rightarrow^V P'}{\shoare{P'}{C}{Q}}{\invariant{V}{Q}}{\shoare{P}{C}{Q}}

    \\[2em]

    \Inf[\mathit{LUpdateS}]{\shoare{P \land (\forall \inSet{\varphi} \ldotp \varphi(t) = \expApplied{e}{\varphi})}{C}{Q}}{ t \notin \freevars{P} \cup \freevars{Q} \cup \mathit{fv}(e) }{\shoare{P}{C}{Q}}

    \\[2em]

    \Inf[\mathit{AtMost}]{\shoare{P}{C}{Q}}{\shoare{\sqsubseteq P}{C}{\sqsubseteq Q}}
    \hspace{5mm}

    \Inf[\mathit{AtLeast}]{\shoare{P}{C}{Q}}{\shoare{\sqsupseteq P}{C}{\sqsupseteq Q}}

    \\[2em]

    \Inf[\mathit{True}]{\shoare{P}{C}{\top}}
    \hspace{5mm}

    \Inf[\mathit{False}]{\shoare{\bot}{C}{Q}}
    \hspace{5mm}

    \Inf[\mathit{Empty}]{\shoare{ \emp{} }{C}{ \emp{} }}
    \hspace{5mm}

    \end{array}
    \]

\caption{Compositionality rules of \logic{}. All these rules have been proven sound in \isabelle{}.
$\mathit{wr}(C)$ corresponds to the set of program variables that are potentially written by $C$ (\ie that appear on the left-hand side of an assignment),
while $\freevars{F}$ corresponds to the set of program variables that appear in look-up expressions for quantified states.
For example, $\freevars{\forall \inSet{\varphi} \ldotp \exists n \ldotp \pproj{\varphi}(x) = n^2} = \{x\}$.
The operators $\bigotimes$, $\sqsubseteq$, and $\sqsupseteq$ are defined as follows:
$\bigotimes P \triangleq \left( \lambda S \ldotp \exists F \ldotp (S = \bigcup_{S' \in F} S') \land (\forall S' \in F \ldotp P(S')) \right)$,
$\sqsubseteq P \triangleq \left( \lambda S \ldotp \exists S' \ldotp S \subseteq S' \land P(S') \right)$,
and
$\sqsupseteq P \triangleq \left( \lambda S \ldotp \exists S' \ldotp S' \subseteq S \Rightarrow P(S') \right)$.
}
\label{fig:compositional_rules}
\end{figure}

\figref{fig:compositional_rules} shows a (selection of) compositionality rules for \logic{}, which we discuss below.

\paragraph{Linking}
To prove hyper-triples of the form $\simpleHoare{\forall \inSet{\varphi_1} \ldotp P_{\varphi_1}}{C}{\forall \inSet{\varphi_2} \ldotp Q_{\varphi_2}}$, the rule \namerule{Linking} considers each pair of pre-state $\varphi_1$ and post-state $\varphi_2$ separately, and lets one assume that $\varphi_2$ can be reached by executing $C$ in the state $\varphi_1$, and that logical variables do not change during this execution.

\paragraph{Conjunctions and disjunctions}
\logic{} admits the usual rules for conjunction
(\namerule{And} and \namerule{Forall} in \figref{fig:compositional_rules})
and disjunction  (\namerule{Or} in \figref{fig:compositional_rules}
and the core rule \namerule{Exist} in \figref{fig:rules}).

\paragraph{Framing}
Similarly to the frame rules in Hoare logic and separation logic~\cite{SeparationLogic},
\logic{} admits rules that allow us to frame information about states that is not affected by the execution of $C$.
The rule \namerule{FrameSafe} allows us to frame
any hyper-assertion $F$ if
(1) it does not refer to variables that the program can modify, and (2) it does not existentially quantify over states.
While (1) is standard, (2) is specific to hyper-assertions:
Framing the existence of a state (\eg with $F \triangleq \exists \inSet{\varphi} \ldotp \top$)
would be unsound if the execution of the program in the state $\varphi$ does not terminate.
We show in \appref{app:termination} that restriction (2) can be lifted if $C$ terminates.
\depsOnFibonacci{We also show an example of how this rule is used in \appref{app:fibonacci}.}

\paragraph{Decompositions}
As explained at the beginning of this section, the
two triples $\simpleHoare{P}{C_1}{\gni{l}{h}}$
and $\simpleHoare{\low{l}}{C_2}{Q}$ cannot be composed
because $\gni{l}{h}$ does not entail $\low{l}$ (not all states in the set $S$ of final states of $C_1$ need to have the same value for $l$). 
However, we can prove GNI for the composed commands by decomposing $S$ into subsets that all satisfy $\low{l}$ and considering each subset separately.
The rule \namerule{BigUnion} allows us to perform this decomposition (formally expressed with the hyper-assertion $\bigotimes \low{l}$),
use the specification of $C_2$ on each of these subsets (since they all satisfy the precondition of $C_2$),
and eventually recompose the final set of states
(again with the operator $\bigotimes$) 
to prove our desired postcondition.
\logic{} also admits rules for binary unions (rule \namerule{Union})
and indexed unions (rule \namerule{IndexedUnion}).

Note that unions ($\otimes$ and $\bigotimes$) and
disjunctions in hyper-assertions are very \emph{different}:
$(P \otimes Q)(S)$ expresses that the set $S$ can be decomposed into two sets $S_P$ (satisfying $P$) and $S_Q$ (satisfying $Q$),
while $(P \lor Q)(S)$ expresses that the entire set $S$ satisfies $P$ or $Q$.
Similarly, intersections and conjunctions are very different:
While \logic{} admits conjunction rules,
rules based on intersections would be unsound,
as shown by the following example:
\begin{example} Let $P_1 \triangleq (\lambda S \ldotp \exists \varphi \ldotp S = \{ \varphi \} \land \varphi(x) = 1)$,
and $P_2 \triangleq (\lambda S \ldotp \exists \varphi \ldotp S = \{ \varphi \} \land \varphi(x) = 2)$.
Both triples $\simpleHoare{P_1}{\cassign{x}{1}}{P_1}$ and $\simpleHoare{P_2}{\cassign{x}{1}}{P_1}$ are valid,
but the triple
$$\simpleHoare{
    \lambda S \ldotp \exists S_1, S_2 \ldotp S = S_1 \cap S_2 \land P_1(S_1) \land P_2(S_2)}{\cassign{x}{1}}{
    \lambda S \ldotp \exists S_1, S_2 \ldotp S = S_1 \cap S_2 \land P_1(S_1) \land P_1(S_2)}$$
is invalid, as the precondition is equivalent to $\emp$, but the postcondition is satisfiable by a non-empty set (with states satisfying $x = 1$).
\end{example}

\paragraph{Specializing hyper-triples}
By definition, a hyper-triple can only be applied to a set of states that satisfies its precondition, which can be restrictive.
In cases where only a \emph{subset} of the current set of states satisfies the precondition, one can obtain a \emph{specialized}
triple using the rule \namerule{Specialize}.
This rule uses the syntactic transformation $\transformAssumeEmpty{b}$ defined in \secref{subsec:assume-rule}
to weaken both the precondition and the postcondition of the triple, which is sound as long as the validity of $b$ is not influenced by executing $C$.
Intuitively, $\transformAssume{b}{P}$ holds for a set $S$ iff $P$ holds for the subset of states from $S$ that satisfy $b$.
As an example, the triple\\
$\simpleHoare{\always(t \seq 1 {\Rightarrow} x {\geq} 0) \sand \always(t \seq 2 {\Rightarrow} x {<} 0)}{C}
{ \forall \inSet{\varphi_1} {,} \inSet{\varphi_2}
\ldotp \varphi_1(t) {=} 1 {\land} \varphi_2(t){=} 2 {\Rightarrow} \varphi_1(y) {\geq} \varphi_2(y) }$,
whose postcondition corresponds to $\mono{t}{y}$,
can be derived from the two triples
$\simpleHoare{\always(x \geq 0)}{C}{\always(y \geq 0)}$ and
$\simpleHoare{\always(x < 0)}{C}{\always(y < 0)}$,
by applying the rule \namerule{Specialize} twice,
using $b \triangleq (t {=} 1)$ and $b \triangleq (t {=} 2)$ respectively,
followed by the consequence rule.

\paragraph{Logical updates}
Logical variables play an important role in the expressivity of the logic:
As we have informally shown in \secref{subsec:k-safety}
and formally shown
in \appref{app:expressivity},
relational specifications are typically expressed in \logic{} by using logical variables to formally link the pre-state of an execution with the corresponding post-states. Since logical variables cannot be modified by the execution, these tags are preserved.

To apply this proof strategy with existing triples, it is often necessary to update logical variables to introduce such tags. The rule \namerule{LUpdate} allows us to update the logical variables in a set $V$, provided that (1)~from every set of states $S$ that satisfies $P$,
we can obtain a new set of states $S'$ that satisfies $P'$, by only updating (for each state) the logical variables in $V$, (2)~we can prove the triple with the updated set of initial states, and (3)~the postcondition $Q$ cannot distinguish between two sets of states that are equivalent up to logical variables in $V$.
We formalize this intuition  in the following:

\begin{definition}\textbf{Logical updates.}
    Let $V$ be a set of logical variable names.
    Two states $\varphi_1$ and $\varphi_2$ are equal up to logical variables $V$,
    written $\varphi_1 \overset{V}{=} \varphi_2$, iff $\forall i \ldotp i \not\in V \Rightarrow \lproj{\varphi_1}(i) = \lproj{\varphi_2}(i)$ and $\pproj{\varphi_1} = \pproj{\varphi_2}$.

    Two sets of states $S_1$ and $S_2$ are equivalent up to logical variables $V$, written $S_1 \overset{V}{=} S_2$,    iff
every state $\varphi_1 \in S_1$ has a corresponding state $\varphi_2 \in S_2$ with the same values for all variables except those in $V$, and vice-versa:
    $$
    (\forall \varphi_1 \in S_1 \ldotp \exists \varphi_2 \in S_2 \ldotp \varphi_1 \overset{V}{=} \varphi_2)
    \land
    (\forall \varphi_2 \in S_2 \ldotp \exists \varphi_1 \in S_1 \ldotp \varphi_1 \overset{V}{=} \varphi_2)
    $$

    A \hyperassertion{} $P$ entails a \hyperassertion{} $P'$ modulo logical variables $V$, written $P \logEntails{V} P'$, iff
    $$
    \forall S \ldotp P(S) \Longrightarrow (\exists S' \ldotp P'(S') \land S \overset{V}{=} S')
    $$

    Finally, a hyper-assertion $P$ is invariant with respect to logical updates in $V$, written $\invariant{V}{P}$, iff
$$
\forall S_1, S_2 \ldotp S_1 \overset{V}{=} S_2 \Longrightarrow (P(S_1) \Longleftrightarrow P(S_2))
$$
    
\end{definition}

Note that $\invariant{V}{Q}$ means that $Q$ cannot inspect the value of logical variables in $V$, but it usually also implies that $Q$ cannot check for \emph{equality} between states, and cannot inspect the cardinality of the set, since updating logical variables might collapse two states that were previously distinct (because of distinct values for logical variables in $V$).

Since this rule requires semantic reasoning,
we also derive a weaker syntactic version of this rule, \namerule{LUpdateS}, which is easier to use.
The rule \namerule{LUpdateS} allows us to %
strengthen a precondition $P$ to 
$P \land (\forall \inSet{\varphi} \ldotp \varphi(t) = \expApplied{e}{\varphi})$,
which corresponds to updating the logical variable $t$ with the expression $e$,
as long as the logical variable $t$ does not appear \emph{syntactically} in $P$, $Q$, and $e$
(and thus does not influence their validity).
For example, to connect the postcondition $\always(x = 0 \lor x = 1)$
to the precondition $\mono{t}{x} = \left( \forall \inSet{\varphi_1}, \inSet{\varphi_2} \ldotp \varphi_1(t) = 1 \land \varphi_2(t) = 2 \Rightarrow \varphi_1(x) \geq \varphi_2(x) \right)$
described in \secref{subsec:k-safety},
one can use this rule to assign $1$ to $t$ if $x = 1$, and $2$ otherwise.
\depsOnFibonacci{\appref{app:fibonacci} shows a detailed example.}

\subsection{Examples}
\label{subsect:comp-examples}

We now illustrate our compositionality rules on two examples:
Composing minimality and monotonicity, and composing non-interference with generalized non-interference.

\subsubsection{Composing Minimality and Monotonicity}\label{subsec:composing_mini_mono}

Consider a command $C_1$ that computes a function that has a minimum for $x$,
and a deterministic command $C_2$ that is monotonic from $x$ to $y$.
We want to prove \emph{compositionally} that $C_1 \cseq C_2$ has a minimum for $y$.

More precisely, we assume that 
$C_1$ satisfies the specification
$\simpleHoare{P}{C_1}{\hasMin{x}}$, where $\hasMin{x} \triangleq ( \exists \inSet{\varphi} \ldotp \forall \inSet{\varphi'} \ldotp \pproj{\varphi}(x) \leq \pproj{\varphi'}(x) )$,
and $C_2$ satisfies the two specifications $\simpleHoare{\mono{i}{x}}{C_2}{\mono{i}{y}}$ (monotonicity) and $\simpleHoare{\singleton{}}{C_2}{\singleton{}}$ (determinism\footnote{This triple ensures that $C_2$ does not map the initial state with the minimum value for $x$ to potentially  different states with incomparable values for $y$ (the order $\leq$ on values might be partial). Moreover, it ensures that $C_2$ does not drop any initial states because of an assume command or a non-terminating loop.}),
where $\mono{i}{x} \triangleq ( \forall \inSet{\varphi_1}, \inSet{\varphi_2} \ldotp \lproj{\varphi_1}(i) = 1 \land \lproj{\varphi_2}(i) = 2 \Rightarrow \pproj{\varphi_1}(x) \leq \pproj{\varphi_2}(x) )$,
and $\singleton \triangleq (\exists \inSet{\varphi} \ldotp \forall \inSet{\varphi'} \ldotp \varphi = \varphi')$.
With the core rules alone, we cannot compose the two triples to prove that $C_1 \cseq C_2$ has a minimum for $y$
since the postcondition of $C_1$ does not imply the precondition of $C_2$.

\begin{figure*}[t!]
\begin{center}
\scriptsize
\begin{equation}
    \Inf[\mathit{FrameSafe}]{
        \Inf[\mathit{Specialize}]{
            \shoare{\singleton{}}{C_2}{\singleton{}}
        }
        {
            \shoare{
                \transformAssume{i = 1}{\singleton{}}
             }{C_2}{
                \transformAssume{i = 1}{\singleton{}}
            }
        }
    }{
        \shoare{
            \underbrace{
                 \transformAssume{i = 1}{\singleton{}} \land (\forall \inSet{\varphi} \ldotp \lproj{\varphi}(i) \in \{1,2\})
            }_{P'}
        }{C_2}{
            \underbrace{
                \transformAssume{i = 1}{\singleton{}} \land (\forall \inSet{\varphi} \ldotp \lproj{\varphi}(i) \in \{1,2\})
            }_{Q'}
        }
    }
\end{equation}
\vspace{-4mm}
\begin{equation*}
        \Inf[\mathit{Seq}]{
            \shoare{P}{C_1}{\hasMin{x}}
        }
        {
            \Inf[\mathit{LUpdate}]{
                \hasMin{x} \logEntails{\{i\}}
                    \mono{i}{x} \land P'
            }{
        \Inf[\mathit{Cons}]
        {
            \Inf[\mathit{And}]{
                    \shoare{ \mono{i}{x} }{C_2}{ \mono{i}{y} }
            }{
                \Inf{(1)}{
                    \shoare{P'}{C_2}{Q'}
                }
            }{
                \shoare{ \mono{i}{x} \land P'}{C_2}{ \mono{i}{y} \land Q' }
            }
        }{
            \shoare{ \mono{i}{x} \land P'}{C_2}{ \hasMin{y} }
        }
            }{
                \invariant{\{i\}}{ \hasMin{y} }
            }{
                \shoare{\hasMin{x}}{C_2}{\hasMin{y}}
            }
        }
        {
            \shoare{P}{C_1 \cseq C_2}{\hasMin{y}}
        }
\end{equation*}
\end{center}
\vspace{-4mm}
\caption{A compositional proof that the sequential composition of a command that has a minimum and a monotonic, deterministic command in turn has a minimum. 
Recall that 
$\singleton \triangleq (\exists \inSet{\varphi} \ldotp \forall \inSet{\varphi'} \ldotp \varphi = \varphi')$,
and thus
$\transformAssume{i = 1}{\singleton{}}
= (\exists \inSet{\varphi} \ldotp \varphi(i) = 1 \land (\forall \inSet{\varphi'} \ldotp
                \varphi'(i) = 1 \Rightarrow \varphi = \varphi'))$
}
\label{fig:composing_mini_mono}
\end{figure*}

\figref{fig:composing_mini_mono} shows a valid derivation in \logic{} of
$\shoare{P}{C_1 \cseq C_2}{\hasMin{y}}$ (which we have proved in \isabelle{}).
The key idea is to use the rule \namerule{LUpdate} to mark the minimal state with $i = 1$, and all the other states with $i = 2$, in order to match $C_1$'s postcondition with $C_2$'s precondition.
Note that we \emph{had to} use the consequence rule to turn $C_2$'s postcondition
$\mono{i}{y} \land Q'$ into $\hasMin{y}$ \emph{before} applying the rule \namerule{LUpdate},
because the latter hyper-assertion is invariant \wrt\ logical updates in $\{i\}$ (as required by the rule \namerule{LUpdate}), whereas the former is not.

The upper part of \figref{fig:composing_mini_mono} shows the derivation of $\shoare{P'}{C_2}{Q'}$, which uses \namerule{Specialize} to restrict the triple $\simpleHoare{\singleton{}}{C_2}{\singleton{}}$ to the subset of states where $i = 1$, ensuring the existence of a unique state (the minimum) where $i = 1$ after executing $C_2$.
We also use the rule \namerule{FrameSafe} to ensure that our set only contains states with $i = 1$ or $i = 2$.

\subsubsection{Composing Non-Interference with Generalized Non-Interference}
\label{subsec:composing_gni_sni}

To illustrate additional compositionality rules, we re-visit the example introduced at the beginning of this section. Consider a command $C_1$ that satisfies GNI (for a public variable $l$ and a secret variable $h$) and a command $C_2$ that satisfies 
NI (for the public variable $l$).
We want to prove that $C_1 \cseq C_2$ satisfies GNI (for $l$ and $h$).

More precisely, we assume that $C_1$ satisfies the hyper-triple
$\shoare{\low{l}}{C_1}{\gni{l}{h}}$, where $\gni{l}{h} \triangleq (\forall \inSet{\varphi_1}, \inSet{\varphi_2} \ldotp \exists \inSet{\varphi} \ldotp \lproj{\varphi_1}(h) = \lproj{\varphi}(h) \land \pproj{\varphi}(l) = \pproj{\varphi_2}(l))$.
Moreover, we assume that $C_2$ satisfies the triples
$\shoare{\low{l}}{C_2}{\low{l}}$, where $\low{l} \triangleq (\forall \inSet{\varphi_1}, \inSet{\varphi_2} \ldotp \pproj{\varphi_1}(l) = \pproj{\varphi_2}(l))$,
and $\shoare{\lnot \emp{}}{C_2}{\lnot \emp{}}$.
The second triple is needed to ensure that $C_2$ does not drop executions depending on some values for $h$ (\eg because of secret-dependent non-termination), which might cause $C_1 \cseq C_2$ to violate GNI\@.

\begin{figure*}
    \footnotesize
    \begin{equation}
        \Inf[\mathit{Cons}]{
            \Inf[\mathit{BigUnion}]{
                \Inf[\mathit{And}]{
                    \shoare{\low{l}}{C_2}{\low{l}}
                }{
                    \Inf[\mathit{Cons}]{
                        \Inf[\mathit{Specialize}]{ \shoare{\lnot \emp{}}{C_2}{\lnot \emp{}} }{
                            \shoare{ 
                                \transformAssume{h = \lproj{\varphi_1}(h)}{\lnot \emp{}}
                            }{C_2}{
                                \transformAssume{h = \lproj{\varphi_1}(h)}{\lnot \emp{}}
                            }
                        }
                    }{
                        \shoare{\exists \inSet{\varphi} \ldotp \lproj{\varphi}(h) = \lproj{\varphi_1}(h)}{C_2}{
                        \exists \inSet{\varphi} \ldotp \lproj{\varphi}(h) = \lproj{\varphi_1}(h)}
                    }
                }{
                    \shoare{\low{l} \land (\exists \inSet{\varphi} \ldotp \lproj{\varphi}(h) = \lproj{\varphi_1}(h) )}{C_2}{
                    \low{l} \land (\exists \inSet{\varphi} \ldotp \lproj{\varphi}(h) = \lproj{\varphi_1}(h) )}
                }
            }
            {
                \shoare{
                    \bigotimes \left( \low{l} \land (\exists \inSet{\varphi} \ldotp \lproj{\varphi}(h) = \lproj{\varphi_1}(h) ) \right)
                }
                {C_2}{
                    \bigotimes \left( \low{l} \land (\exists \inSet{\varphi} \ldotp \lproj{\varphi}(h) = \lproj{\varphi_1}(h) ) \right)
                }
            }}{
                \shoare{
                    \underbrace{\forall \inSet{\varphi_2} \ldotp \exists \inSet{\varphi} \ldotp \lproj{\varphi_1}(h) = \lproj{\varphi}(h) \land \pproj{\varphi_2}(l) = \pproj{\varphi}(l)}_{P'_{\varphi_1}}
                }{C_2}{
                    \underbrace{\forall \inSet{\varphi_2} \ldotp \exists \inSet{\varphi} \ldotp \lproj{\varphi_1}(h) = \lproj{\varphi}(h) \land \pproj{\varphi_2}(l) = \pproj{\varphi}(l)}_{Q'_{\varphi_1}}
                }
            }
            \label{eq:proving_linking}
    \end{equation}
\vspace{-2mm}    
    \begin{equation*}
        \Inf[\mathit{Seq}]{
            \shoare{ \low{l} }{C_1}{ \gni{l}{h} }
        }{
            \Inf[\mathit{Linking}]{
                \Inf{\text{using } (\ref{eq:proving_linking})\text{ and }
                \lproj{\varphi_1} = \lproj{\varphi_1'} \Longrightarrow Q'_{\varphi_1} = Q'_{\varphi_1'}
                }{
                    \forall \varphi_1, \varphi_1' \ldotp ( \lproj{\varphi_1} = \lproj{\varphi_1'} \land \shoare{ \inSet{\varphi_1} }{C}{ \inSet{ \varphi_1' } }
                    \Longrightarrow
                    (\shoare{
                        P'_{\varphi_1}
                    }{C_2}{
                        Q'_{\varphi_1'}
                    })
                }
            }{
                \shoare{ \gni{l}{h} }{C_2}{ \gni{l}{h} }
            }
        }{
            \shoare{ \low{l} }{C_1 \cseq C_2}{ \gni{l}{h} }
        }
    \end{equation*}
\vspace{-4mm}
\caption{A compositional proof that the sequential composition of a command that satisfies GNI and a command that satisfies NI in turn satisfies GNI.}
    \label{fig:composing_gni_sni}
\end{figure*}

\figref{fig:composing_gni_sni} shows a valid derivation of the triple $\shoare{\low{l}}{C_1 \cseq C_2}{\gni{l}{h}}$
(which we have proved in \isabelle{}).
The first key idea of this derivation is to use the rule \namerule{Linking} to eliminate the $\forall \inSet{\varphi_1}$ in the pre- and postcondition of the triple $\simpleHoare{\gni{l}{h}}{C_2}{\gni{l}{h}}$,
while assuming that they have the same value for the logical variable $h$ (implied by the assumption $\lproj{\varphi_1} = \lproj{\varphi_1'}$).
The second key idea is to decompose any set of states $S$ that satisfies $P'_{\varphi_1}$
(defined as $\forall \inSet{\varphi_2} \ldotp \exists \inSet{\varphi} \ldotp \lproj{\varphi_1}(h) = \lproj{\varphi}(h) \land \pproj{\varphi_2}(l) = \pproj{\varphi}(l)$)
into a union of smaller sets that all satisfy $\low{l} \land (\exists \inSet{\varphi} \ldotp \lproj{\varphi_1}(h) = \lproj{\varphi}(h))$.
More precisely, we rewrite $S$ as the union of all sets $\{ \varphi, \varphi_2 \}$
for all $\varphi, \varphi_2 \in S$ such that
$\lproj{\varphi_1}(h) = \lproj{\varphi}(h) \land \pproj{\varphi_2}(l) = \pproj{\varphi}(l)$, using the rule \namerule{Cons}.
Unlike $S$, these smaller sets all satisfy the precondition $\low{l}$ of $C_2$,
which allows us to leverage the triple $\shoare{\low{l}}{C_2}{\low{l}}$.
Finally, we use the rule \namerule{Specialize} to prove that,
after executing $C_2$ in each of the smaller sets $\{ \varphi, \varphi_2 \}$,
there will exist at least one state $\varphi'$ with $\lproj{\varphi'}(h) = \lproj{\varphi_1}(h)$.

\clearpage

\section{Termination-Based Reasoning}
\label{app:termination}
\subsection{Termination-Based Rules}

In \appref{app:compositionality}, we have introduced the rule \namerule{FrameSafe} (\figref{fig:compositional_rules}),
which is sound only for hyper-assertions that do not contain any $\exists \inSet{\_}$,
because the program $C$ around which we want to frame some hyper-assertion might not terminate.
Moreover, in \secref{subsec:sync}, we have introduced the synchronized while rule \namerule{WhileSync} (\figref{fig:loop-rules}).
This rule contains an $\emp{}$ disjunct in the postcondition of the conclusion,
which prevents it from being useful to prove hyperproperties of the form $\exists^+ \forall^*$,
\ie with a top-level existential quantifier over states.
This $\emp{}$ disjunct corresponds to the case where the loop does not terminate.

In this section, we show that we can overcome those two limitations by introducing \thibault{\emph{terminating}} hyper-triples,
which are stronger than normal hyper-triples, in that they also ensure
the existence of \thibault{(at least)} one terminating execution from any initial state:

\begin{definition}\textbf{\thibault{Terminating} hyper-triples.}
    \label{def:total-triples}
    $$\hoareTot{P}{C}{Q} \triangleq \left( 
        \forall S \ldotp P(S)
        \Rightarrow
        (Q(\sem(C, S))
        \land (\forall \varphi \in S \ldotp
        \exists \sigma' \ldotp
        \bigstep{C}{\pproj{\varphi}}{\sigma'})
        )
    \right)$$
\end{definition}

\begin{figure}
$$\Inf[\mathit{Frame}]{\mathit{wr}(C) \cap \mathit{fv}(F) = \varnothing}{\shoareTot{P}{C}{Q}}{F \text{ is a syntactic hyper-assertion}}{
        \shoareTot{ P \land F}{C}{
        Q \land F}}
$$
$$
    \Inf[\mathit{WhileSyncTerm}]
    { \shoareTot{I \land \square (b \land e = t^L)}{C}{I \land \low{b} \land \square (e \prec t^L)}}
    {\prec \text{well-founded}}
    {t^L \notin \freevars{I}}
    {\shoareTot{I \land \low{b}}{\cwhile{b}{C}}{ I \land \square (\lnot b) }}
$$

$$
    \Inf[\mathit{WhileDesugaredTerm}]
    { \shoare{P_n}{ \cassume{b} }{ Q_n } }
    { \shoareTot{Q_n \land \square (e = t^L)}{C}{P_{n+1} \land \square (e \prec t^L)}}.
    {\shoare{ \otimesn P_n }{\cassume{\lnot b}}{R}}
    {t^L \notin \freevars{P_n} \cup \freevars{Q_n}}
    {\prec \text{well-founded}}
    {\shoareTot{P_0}{\cwhile{b}{C}}{ R }}
$$

    \caption{\thibault{Rules based on terminating hyper-triples.}}
    \label{fig:terminating-rules}
\end{figure}

\thibault{Using terminating hyper-triples, we can now express and prove sound (which we have done in \isabelle{})
the two rules \namerule{Frame} and \namerule{WhileSyncTerm} in \figref{fig:terminating-rules}, which solve the aforementioned limitations.}
The rule \namerule{Frame} can be used for \emph{any} hyper-assertion expressed in the syntax
defined in~\secref{subsec:syntactic-assertions}.
Unlike the rule \namerule{WhileSync}, the rule \thibault{\namerule{WhileSyncTerm}}
does not have the $\emp{}$ disjunct in the postcondition of its conclusion anymore,
and thus can be used to prove hyperproperties of the form $\exists^+ \forall^*$!
It achieves this by requiring that (1) the loop body $C$ terminates (in the sense of \defref{def:total-triples}),
and (2) that the loop itself terminates \thibault{(also in the sense of \defref{def:total-triples})},
by requiring that a variant $e$ decreases in all executions.
The initial value of the variant $e$ is stored in the logical variable $t^L$, such that it can be referred to
in the postcondition.

\paragraph{Rules for terminating hyper-triples.}

\thibault{For any program statement $C$ that does not contain any while loop
or $\mathbf{assume}$ statement
(apart from those used in the desugaring of $\cifonly{b}{C_1}$ and $\cif{b}{C_1}{C_2}$),
both triples are equivalent, \ie{}
$\hoareTot{P}{C}{Q} \Longleftrightarrow \hoare{P}{C}{Q}$.
We have proved the corresponding inference rules sound in our \isabelle{} mechanization.
Moreover, on top of the rule \namerule{WhileSyncTerm},
we have proved sound in \isabelle{} the rule \namerule{WhileDesugaredTerm},
which is similar to the rule \namerule{WhileDesugared} from \secref{sec:loops},
but requires proving that a loop variant stricly decreases after every iteration.}

\paragraph{Relation to total correctness}

\thibault{Terminating hyper-triples ensure a property \emph{strictly weaker} than total correctness.
The terminating hyper-triple
$\shoareTot{\top}{ \chavoc{x} \cseq \cwhile{x > 0}{ \cskip} }{ \top }$
is for example valid according to \defref{def:total-triples},
even though not all executions terminate.
This weaker property is sufficient to prove sound the two powerful inference rules \namerule{Frame} and \namerule{WhileSyncTerm}, which lift the restrictions of the rules \namerule{FrameSafe} and \namerule{WhileSync}.}

\thibault{However, we believe that the rules presented in \figref{fig:terminating-rules}
would also be sound for an alternative definition of hyper-triples that ensures \emph{total correctness}
(\ie where all executions terminate).
To formally characterize total correctness, we would need to use a small-step semantics, instead of the big-step semantics
we presented in \secref{subsec:language}.
We do not expect any significant challenge in using a small-step semantics instead of a big-step semantics.}

\subsection{(Dis-)Proving Termination}

It seems that \logic{} could also be extended to disprove termination.
To prove \emph{non}-termination of a loop $\cwhile{b}{C}$,
one can express and prove that a set of states $R$, in which all states satisfy the loop guard $b$, is a \emph{recurrent set} \cite{provingNonTermination}.
$R$ is a recurrent set iff executing $C$ in any state from $R$ leads to at least another state in $R$,
which can easily be expressed as a hyper-triple:
$$
\simpleHoare{\exists \inSet{\varphi} \ldotp \varphi \in R}{\cassume{b} \cseq C}{\exists \inSet{\varphi} \ldotp \varphi \in R}
$$

Thus, if one state from $R$ reaches $\cwhile{b}{C}$, we know that there is at least one non-terminating execution.

Note that both extensions of \logic{} (to prove and disprove termination) would require modifying the underlying semantic model of the logic; in particular, the extended semantics in  \defref{def:extended_semantics} should be modified to also capture non-terminating executions.
We do not expect such a modification to pose any significant challenge.

\clearpage

\section{Example: Monotonicity of the Fibonacci Sequence}
\label{app:fibonacci}
In this section, we show the proof that the program $C_{\mathit{fib}}$ from \figref{fig:fib} is monotonic.
Precisely, we prove the triple
$$\shoare{ \forall \inSet{\varphi_1}{,}\inSet{\varphi_2} \ldotp \varphi_1(t) \seq 1 \sand \varphi_2(t) \seq 2 {\Rightarrow} \varphi_1(n) \sgeq \varphi_2(n) }{C_{\mathit{fib}}}{ \forall \inSet{\varphi_1}{,}\inSet{\varphi_2} \ldotp \varphi_1(t) \seq 1 \sand \varphi_2(t) \seq 2 {\Rightarrow} \varphi_1(a) \sgeq \varphi_2(a) }$$
using the rule \namerule{While-}$\forall^* \exists^*$ with the loop invariant
$I \triangleq (
    (\forall \inSet{\varphi_1}, \inSet{\varphi_2} \ldotp \varphi_1(t) \seq 1 \sand \varphi_2(t) \seq 2 \Rightarrow
    (\varphi_1(n) \sminus \varphi_1(i) \geq \varphi_2(n) \sminus \varphi_2(i) \land \varphi_1(a) \geq \varphi_2(a)
    \land \varphi_1(b) \geq \varphi_2(b))) \land \always( b \geq a \geq 0 ))$.

\begin{figure}[h]
    \footnotesize
    \begin{align*}
    &\outline{\forall \inSet{\varphi_1}{,}\inSet{\varphi_2} \ldotp \varphi_1(t) \seq 1 \sand \varphi_2(t) \seq 2 {\Rightarrow} \varphi_1(n) \sgeq \varphi_2(n) } \\
    &\outline{(\forall \inSet{\varphi_1}, \inSet{\varphi_2} \ldotp \varphi_1(t) \seq 1 \sand \varphi_2(t) \seq 2 \Rightarrow
    (\varphi_1(n) \sminus 0 \geq \varphi_2(n) \sminus 0 \land 0 \geq 0
    \land 1 \geq 1)) \land \always( 1 \geq a \geq 0 )} \tag{Cons} \\
    &\cassign{a}{0} \cseq \\
    &\outline{(\forall \inSet{\varphi_1}, \inSet{\varphi_2} \ldotp \varphi_1(t) \seq 1 \sand \varphi_2(t) \seq 2 \Rightarrow
    (\varphi_1(n) \sminus 0 \geq \varphi_2(n) \sminus 0 \land \varphi_1(a) \geq \varphi_2(a)
    \land 1 \geq 1)) \land \always( 1 \geq a \geq 0 )} \tag{AssignS} \\
    &\cassign{b}{1} \cseq \\
    &\outline{(\forall \inSet{\varphi_1}, \inSet{\varphi_2} \ldotp \varphi_1(t) \seq 1 \sand \varphi_2(t) \seq 2 \Rightarrow
    (\varphi_1(n) \sminus 0 \geq \varphi_2(n) \sminus 0 \land \varphi_1(a) \geq \varphi_2(a)
    \land \varphi_1(b) \geq \varphi_2(b))) \land \always( b \geq a \geq 0 )} \tag{AssignS} \\
    &\cassign{i}{0} \cseq \\
    &\outline{(\forall \inSet{\varphi_1}, \inSet{\varphi_2} \ldotp \varphi_1(t) \seq 1 \sand \varphi_2(t) \seq 2 \Rightarrow
    (\varphi_1(n) \sminus \varphi_1(i) \geq \varphi_2(n) \sminus \varphi_2(i) \land \varphi_1(a) \geq \varphi_2(a)
    \land \varphi_1(b) \geq \varphi_2(b))) \land \always( b \geq a \geq 0 )} \tag{AssignS}
    \end{align*}
    \caption{First part of the proof, which proves that the loop invariant $I$ holds before the loop.}
    \label{fig:fib-part1}
\end{figure}

\begin{figure}[h]
    \tiny
   \begin{align*}
    &\outline{\forall \inSet{\varphi_1}, \inSet{\varphi_2} \ldotp \varphi_1(t) \seq 1 \sand \varphi_2(t) \seq 2 \Rightarrow
    (\varphi_1(n) \sminus \varphi_1(i) \geq \varphi_2(n) \sminus \varphi_2(i) \land \varphi_1(a) \geq \varphi_2(a) %
    \land \varphi_1(b) \geq \varphi_2(b) %
    )
    \land \always( b \geq a \geq 0 )
    } \\
    &\outline{\forall \inSet{\varphi_1}, \inSet{\varphi_2} \ldotp \varphi_1(t) \seq 1 \sand \varphi_2(t) \seq 2 \Rightarrow
    (\varphi_1(n) \sminus \varphi_1(i) \geq \varphi_2(n) \sminus \varphi_2(i) \land \varphi_1(a) \geq \varphi_2(a) %
    \land \varphi_1(b) \geq \varphi_2(b) %
    )
    \land \always( b \geq a \geq 0 )
    \land \always( v_a = a \land v_b = b \land v_i = i )
    }
    \tag{LUpdateS}
     \\
    &\outlineStart{ (\forall \inSet{\varphi} \ldotp
    \varphi(i) = \varphi(v_i)
    \land \varphi(a) = \varphi(v_a)
    \land \varphi(b) = \varphi(v_b)
    ) } \\
    &\blue{
    \land
        \underbrace{(\forall \inSet{\varphi_1}, \inSet{\varphi_2} \ldotp \varphi_1(t) = 1 \land \varphi_2(t) = 2 \Rightarrow
            \varphi_1(v_a) \geq \varphi_2(v_a) \geq 0
            \land
            \varphi_1(v_b) \geq \varphi_2(v_b) \geq 0
            \land
            \varphi_1(n) - \varphi_1(v_i) \geq \varphi_2(n) - \varphi_2(v_i))}_{F}
    \} } \tag{Cons} \\
    &\outline{ \forall \inSet{\varphi} \ldotp
    \varphi(i) = \varphi(v_i) \land \varphi(a) = \varphi(v_a) \land \varphi(b) = \varphi(v_b)} \\
    &\quad \mathbf{if} \; (*) \; \{ \\
        &\quad \quad \outline{ \forall \inSet{\varphi} \ldotp
        \varphi(i) = \varphi(v_i) \land \varphi(a) = \varphi(v_a) \land \varphi(b) = \varphi(v_b)} \\
        &\quad \quad \outline{
            \forall \inSet{\varphi} \ldotp \varphi(i) < \varphi(n) \Rightarrow \varphi(v_i) < \varphi(n) \land \varphi(i) + 1 = \varphi(v_i) + 1
            \land \varphi(b) = \varphi(v_b) \land \varphi(a) + \varphi(b) = \varphi(v_a) + \varphi(v_b)
        } \tag{Cons} \\
        &\quad \quad \cassume{i < n} \cseq \\
        &\quad \quad \outline{
            \forall \inSet{\varphi} \ldotp \varphi(v_i) < \varphi(n) \land \varphi(i) + 1 = \varphi(v_i) + 1
            \land \varphi(b) = \varphi(v_b) \land \varphi(a) + \varphi(b) = \varphi(v_a) + \varphi(v_b)
        } \tag{AssumeS} \\
        &\quad \quad \cassign{tmp}{b} \cseq \\
        &\quad \quad \cassign{b}{a+b} \cseq \\
        &\quad \quad \cassign{a}{tmp} \cseq \\
        &\quad \quad \cassign{i}{i + 1} \\
        &\quad \quad \blue{ \{
            \underbrace{\forall \inSet{\varphi} \ldotp \varphi(v_i) < \varphi(n) \land \varphi(i) = \varphi(v_i) + 1
            \land \varphi(a) = \varphi(v_b) \land \varphi(b) = \varphi(v_a) + \varphi(v_b)}_{Q_1}
        \} } \tag{AssignS} \\
    &\quad \} \\
    &\quad \mathbf{else} \; \{ \\
        &\quad \quad \outline{ \forall \inSet{\varphi} \ldotp
        \varphi(i) = \varphi(v_i) \land \varphi(a) = \varphi(v_a) \land \varphi(b) = \varphi(v_b)} \\
        &\quad \quad \outline{
            \forall \inSet{\varphi} \ldotp \varphi(i) \geq \varphi(n) \Rightarrow \varphi(v_i) \geq \varphi(n) \land \varphi(i) = \varphi(v_i)
            \land \varphi(a) = \varphi(v_a) \land \varphi(b) = \varphi(v_b)
        } \tag{Cons} \\
        &\quad \quad \cassume{ \lnot (i < n)} \\
        &\quad \quad \blue{ \{
            \underbrace{\forall \inSet{\varphi} \ldotp \varphi(v_i) \geq \varphi(n) \land \varphi(i) = \varphi(v_i)
            \land \varphi(a) = \varphi(v_a) \land \varphi(b) = \varphi(v_b)}_{Q_2}
        \} } \tag{AssumeS} \\
    &\quad \} \\
    &\quad \outline{Q_1 \otimes Q_2}
    \tag{Choice} \\
    &\outline{(Q_1 \otimes Q_2) \land F} \tag{FrameSafe} \\
    &\outline{\forall \inSet{\varphi_1}, \inSet{\varphi_2} \ldotp \varphi_1(t) \seq 1 \sand \varphi_2(t) \seq 2 \Rightarrow
    (\varphi_1(n) \sminus \varphi_1(i) \geq \varphi_2(n) \sminus \varphi_2(i) \land \varphi_1(a) \geq \varphi_2(a) %
    \land \varphi_1(b) \geq \varphi_2(b) %
    )
    \land \always( b \geq a \geq 0 )
    } \tag{Cons} \\
    \end{align*}
    \caption{Second part of the proof. This proof outline shows
    $\shoare{I}{\cifonly{i < n}{C_{body}}}{I}$, the first premise of the rule \namerule{While-}$\forall^* \exists$,
    where $C_{body}$ refers to the body of the loop.
    }
    \label{fig:fib-part2}
\end{figure}

\figref{fig:fib-part1} shows the (trivial) first part of the proof, which proves that the loop invariant $I$ holds before the loop,
and \figref{fig:fib-part2} shows the proof of $\shoare{I}{\cifonly{i < n}{C_{body}}}{I}$, the first premise of
the rule \namerule{While-}$\forall^* \exists$ (the second premise is trivial).
In \figref{fig:fib-part2}, we first record
the initial values of $a$, $b$, and $i$ in the logical variables $v_a$, $v_b$, and $v_i$, respectively,
using the rule \namerule{LUpdateS} presented in \appref{app:compositionality}.
We then split our new hyper-assertion into
a simple part, $\forall \inSet{\varphi} \ldotp
    \varphi(i) = \varphi(v_i)
    \land \varphi(a) = \varphi(v_a)
    \land \varphi(b) = \varphi(v_b)$,
and a frame $F$ which stores the relevant information from the invariant $I$ with the initial values.
The hyper-assertion $F$ is then framed around the if-statement, using the rule \namerule{FrameSafe} from \appref{app:compositionality}.
The proof of each branch is straightforward; the postconditions of the two branches are combined via the rule \namerule{Choice}.

We finally conclude with the consequence rule.
This last entailment is justified by a case distinction.
Let $\varphi_1$, $\varphi_2$ be two states such that $\varphi_1(t) = 1$, $\varphi_2(t) = 2$,
and $\inSet{\varphi_1}$ and $\inSet{\varphi_2}$ hold.
From the frame $F$, we know that
$\varphi_1(v_a) \geq \varphi_2(v_a)$, and $\varphi_1(v_b) \geq \varphi_2(v_b)$.
We conclude the proof by distinguishing the following three cases (the proof for each case is straightforward):
(1) Both $\varphi_1$ and $\varphi_2$ took the first branch of the if statement,
\ie $\varphi_1(v_i) < \varphi_1(n)$ and $\varphi_2(v_i) < \varphi_2(n)$,
and thus both are in the set characterized by $Q_1$.
(2) Both $\varphi_1$ and $\varphi_2$ took the second branch,
\ie $\varphi_1(v_i) \geq \varphi_1(n)$ and $\varphi_2(v_i) \geq \varphi_2(n)$,
and thus both are in the set characterized by $Q_2$.
(3) $\varphi_1$ took the first branch and $\varphi_2$ took the second branch,
\ie $\varphi_1(v_i) < \varphi_1(n)$ and $\varphi_2(v_i) \geq \varphi_2(n)$,
and thus $\varphi_1$ is in the set characterized by $Q_1$ and $\varphi_2$ is in the set characterized by $Q_2$.

Importantly, the fourth case is not possible, because
this would imply
$\varphi_2(n) - \varphi_2(v_i) > 0 \geq \varphi_1(n) - \varphi_1(v_i)$,
which contradicts the inequality $\varphi_1(n) - \varphi_1(v_i) \geq \varphi_2(n) - \varphi_2(v_i)$
from the frame $F$.

\clearpage

\section{Example: Existence of a Minimum}
\label{app:minimum}
This section contains the proof, using the rule \namerule{While-}$exists$, that the program $C_m$ from \figref{fig:minimum} satisfies the triple
$$\simpleHoare{\lnot \emp{} \land \always(k \geq 0)}{C_m}{ \exists \inSet{\varphi} \ldotp \forall \inSet{\alpha} \ldotp \varphi(x) \leq \alpha(x) \land \varphi(y) \leq \alpha(y) }$$

\figref{fig:min-part1}
contains the (trivial) first part of the proof, which justifies that the hyper-assertion
$\exists \inSet{\varphi} \ldotp P_\varphi$,
where $P_\varphi \triangleq (\forall \inSet{\alpha} \ldotp
0 \leq \varphi(x) \leq \alpha(x) \land 0 \leq \varphi(y) \leq \alpha(y) \land \varphi(k) \leq \alpha(k) \land \varphi(i) = \alpha(i))$,
holds before the loop,
as required by the precondition of the conclusion of the rule \namerule{While-}$\exists$.

\figref{fig:min-part2} shows the proof of the first premise of the rule \namerule{While-}$\exists$, namely
$$\forall v \ldotp \exists \inSet{\varphi} \ldotp \shoare{P_\varphi \land \varphi(i) < \varphi(k) \land v = \varphi(k) - \varphi(i)}{
    \cifonly{i < k}{C_{body}}}{ \exists \inSet{\varphi} \ldotp P_\varphi \land \varphi(k) - \varphi(i) \prec v}$$
where $C_{body}$ is the body of the loop.

Finally, \figref{fig:min-part3} shows the proof of the second premise of the rule \namerule{While-}$\exists$.
More precisely, it shows
$$\forall \varphi \ldotp \shoare{Q_\varphi}{ \cifonly{i < k}{C_{body}} }{ Q_\varphi }$$
where $Q_\varphi \triangleq \forall \inSet{\alpha} \ldotp 0 \leq \varphi(x) \leq \alpha(x) \land 0 \leq \varphi(y) \leq \alpha(y)$,
from which we easily derive the second premise of the rule \namerule{While-}$\exists$, using the consequence rule
(since $P_\varphi$ clearly entails $Q_\varphi$), and the rule \namerule{While-}$\forall^* \exists^*$ rule.

\begin{figure}[h]
    \small
\begin{align*}
&\outline{\lnot \emp{} \land \always(k \geq 0)} \\
&\outline{\exists \inSet{\varphi} \ldotp \forall \inSet{\alpha} \ldotp
0 \leq 0 \leq 0 \land 0 \leq 0 \leq 0 \land \varphi(k) \leq \alpha(k) \land 0 = 0}
\tag{Cons} \\
&\cassign{x}{0} \cseq \\
&\outline{\exists \inSet{\varphi} \ldotp \forall \inSet{\alpha} \ldotp
0 \leq \varphi(x) \leq \alpha(x) \land 0 \leq 0 \leq 0 \land \varphi(k) \leq \alpha(k) \land 0 = 0}
\tag{AssignS} \\
&\cassign{y}{0} \cseq \\
&\outline{\exists \inSet{\varphi} \ldotp \forall \inSet{\alpha} \ldotp
0 \leq \varphi(x) \leq \alpha(x) \land 0 \leq \varphi(y) \leq \alpha(y) \land \varphi(k)
\leq \alpha(k) \land 0 = 0} 
\tag{AssignS} \\
&\cassign{i}{0} \cseq \\
&\outline{\exists \inSet{\varphi} \ldotp \forall \inSet{\alpha} \ldotp
0 \leq \varphi(x) \leq \alpha(x) \land 0 \leq \varphi(y) \leq \alpha(y) \land \varphi(k) \leq \alpha(k) \land \varphi(i) = \alpha(i)}
\tag{AssignS}
\end{align*}
\caption{First part of the proof: Proving the first loop invariant $\exists \inSet{\varphi} \ldotp P_\varphi$.}
\label{fig:min-part1}
\end{figure}

\begin{figure}[h]
    \scriptsize
\begin{align*}
&\outline{\exists \inSet{\varphi} \ldotp 
(\forall \inSet{\alpha} \ldotp 0 \leq \varphi(x) \leq \alpha(x) \land 0 \leq \varphi(y) \leq \alpha(y) \land \varphi(k) \leq \alpha(k) \land \varphi(i) = \alpha(i))
\land \varphi(i) < \varphi(k) \land v = \varphi(k) {-} \varphi(i)} \\
&\mathbf{if} \; (i < k) \; \{ \\
&\quad \outline{\exists \inSet{\varphi} \ldotp 
(\forall \inSet{\alpha} \ldotp 0 \leq \varphi(x) \leq \alpha(x) \land 0 \leq \varphi(y) \leq \alpha(y) \land \varphi(k) \leq \alpha(k) \land \varphi(i) = \alpha(i))
\land \varphi(i) < \varphi(k) \land v = \varphi(k) {-} \varphi(i) \land \always(i < k)} \\
&\quad \outlineStart{\exists \inSet{\varphi} \ldotp \exists u \ldotp u \geq 2 \land
(\forall \inSet{\alpha} \ldotp \forall v \ldotp v \geq 2 \Rightarrow  0 \leq 2 * \varphi(x) + u \leq 2 * \alpha(x) + v
\land 0 \leq \varphi(y) + \varphi(x) * u \leq \alpha(y) + \alpha(x) * v} \\
&\quad \outlineEnd{
 \land \varphi(k) \leq \alpha(k) \land \varphi(i) + 1 = \alpha(i) + 1)
\land \varphi(k) {-} (\varphi(i) {+} 1) \prec v}
\tag{Cons (1)} \\
&\quad \chavoc{r} \cseq \\
&\quad \cassume{r \geq 2} \cseq \\
&\quad \outlineStart{\exists \inSet{\varphi} \ldotp 
(\forall \inSet{\alpha} \ldotp 0 \leq 2 * \varphi(x) +\varphi(r) \leq 2 * \alpha(x) + \alpha(r)
\land 0 \leq \varphi(y) + \varphi(x) * \varphi(r) \leq \alpha(y) + \alpha(x) * \alpha(r) \land \varphi(k) \leq \alpha(k) \land \varphi(i) + 1 = \alpha(i) + 1)}
\\
&\quad \outlineEnd{
\land \varphi(k) {-} (\varphi(i) {+} 1) \prec v}
\tag{HavocS, AssumeS} \\
&\quad \cassign{t}{x} \cseq \\
&\quad \cassign{x}{2 * x + r} \cseq \\
&\quad \cassign{y}{y + t * r} \cseq \\
&\quad \cassign{i}{i + 1} \\
&\quad \outline{\exists \inSet{\varphi} \ldotp 
(\forall \inSet{\alpha} \ldotp 0 \leq \varphi(x) \leq \alpha(x) \land 0 \leq \varphi(y) \leq \alpha(y) \land \varphi(k) \leq \alpha(k) \land \varphi(i) = \alpha(i))
\land \varphi(k) {-} \varphi(i) \prec v}
\tag{AssignS} \\
&\} \\
&\mathbf{else} \; \{ \\
&\quad \outline{\exists \inSet{\varphi} \ldotp 
(\forall \inSet{\alpha} \ldotp 0 \leq \varphi(x) \leq \alpha(x) \land 0 \leq \varphi(y) \leq \alpha(y) \land \varphi(k) \leq \alpha(k) \land \varphi(i) = \alpha(i))
\land \varphi(i) < \varphi(k) \land v = \varphi(k) {-} \varphi(i) \land \always(i \geq k)} \\
&\quad \outline{\exists \inSet{\varphi} \ldotp 
(\forall \inSet{\alpha} \ldotp 0 \leq \varphi(x) \leq \alpha(x) \land 0 \leq \varphi(y) \leq \alpha(y) \land \varphi(k) \leq \alpha(k) \land \varphi(i) = \alpha(i))
\land \varphi(k) {-} \varphi(i) \prec v}
\tag{Cons (2)} \\
&\quad \cskip \\
&\quad \outline{\exists \inSet{\varphi} \ldotp 
(\forall \inSet{\alpha} \ldotp 0 \leq \varphi(x) \leq \alpha(x) \land 0 \leq \varphi(y) \leq \alpha(y) \land \varphi(k) \leq \alpha(k) \land \varphi(i) = \alpha(i))
\land \varphi(k) {-} \varphi(i) \prec v}
\tag{Skip} \\
&\} \\
&\outline{\exists \inSet{\varphi} \ldotp 
(\forall \inSet{\alpha} \ldotp 0 \leq \varphi(x) \leq \alpha(x) \land 0 \leq \varphi(y) \leq \alpha(y) \land \varphi(k) \leq \alpha(k) \land \varphi(i) = \alpha(i))
\land \varphi(k) {-} \varphi(i) \prec v}
\tag{IfSync}
 \\
\end{align*}
\caption{Second part of the proof:
Proving the first premise of the rule \namerule{While-}$\exists$,\\
$\forall v \ldotp \exists \inSet{\varphi} \ldotp \shoare{P_\varphi \land \varphi(i) < \varphi(k) \land v = \varphi(k) - \varphi(i)}{
    \cifonly{i < k}{C_{body}}}{ \exists \inSet{\varphi} \ldotp P_\varphi \land \varphi(k) - \varphi(i) \prec v}$.\\
For Cons (1), we simply choose $u = 2$.
For Cons (2), we notice that $\varphi(i) < \varphi(k)$ and $\always(i \geq k)$ are inconsistent (this branch is not taken at this stage),
and thus the entailment trivially holds.
}
\label{fig:min-part2}
\end{figure}

\begin{figure}
    \footnotesize
\begin{align*}
&\outline{\forall \inSet{\alpha} \ldotp 0 \leq \varphi(x) \leq \alpha(x) \land 0 \leq \varphi(y) \leq \alpha(y)} \\
&\mathbf{if} \; (*) \; \{ \\
&\quad \outline{\forall \inSet{\alpha} \ldotp 0 \leq \varphi(x) \leq \alpha(x) \land 0 \leq \varphi(y) \leq \alpha(y)} \\
&\quad \outline{\forall \inSet{\alpha} \ldotp \alpha(i) < \alpha(k) \Rightarrow \forall v \ldotp v \geq 2 \Rightarrow 0 \leq \varphi(x) \leq 2 * \alpha(x) + v \land 0 \leq \varphi(y) \leq \alpha(y) + \alpha(x) * v} \tag{Cons} \\
&\quad \cassume{i < k} \cseq \\
&\quad \outline{\forall \inSet{\alpha} \ldotp \forall v \ldotp v \geq 2 \Rightarrow 0 \leq \varphi(x) \leq 2 * \alpha(x) + v \land 0 \leq \varphi(y) \leq \alpha(y) + \alpha(x) * v}
\tag{AssumeS} \\
&\quad \chavoc{r} \cseq \\
&\quad \outline{\forall \inSet{\alpha} \ldotp \alpha(r) \geq 2 \Rightarrow 0 \leq \varphi(x) \leq 2 * \alpha(x) + \alpha(r) \land 0 \leq \varphi(y) \leq \alpha(y) + \alpha(x) * \alpha(r)} \tag{HavocS} \\
&\quad \cassume{r \geq 2} \cseq \\
&\quad \outline{\forall \inSet{\alpha} \ldotp 0 \leq \varphi(x) \leq 2 * \alpha(x) + \alpha(r) \land 0 \leq \varphi(y) \leq \alpha(y) + \alpha(x) * \alpha(r)} 
\tag{AssumeS} \\
&\quad \cassign{t}{x} \cseq \\
&\quad \cassign{x}{2*x + r} \cseq \\
&\quad \cassign{y}{y + t * r} \cseq \\
&\quad \cassign{i}{i + 1} \\
&\quad \outline{\forall \inSet{\alpha} \ldotp 0 \leq \varphi(x) \leq \alpha(x) \land 0 \leq \varphi(y) \leq \alpha(y)} \tag{AssignS} \\
&\} \\
&\mathbf{else} \; \{ \\
&\quad \outline{\forall \inSet{\alpha} \ldotp 0 \leq \varphi(x) \leq \alpha(x) \land 0 \leq \varphi(y) \leq \alpha(y)} \\
&\quad \outline{\forall \inSet{\alpha} \ldotp \alpha(i) \geq \alpha(k) \Rightarrow 0 \leq \varphi(x) \leq \alpha(x) \land 0 \leq \varphi(y) \leq \alpha(y)} \tag{Cons} \\
&\quad \cassume{i \geq k} \cseq \\
&\quad \cskip \\
&\quad \outline{\forall \inSet{\alpha} \ldotp 0 \leq \varphi(x) \leq \alpha(x) \land 0 \leq \varphi(y) \leq \alpha(y)} \tag{AssumeS, Skip} \\
&\} \\
&\outline{(\forall \inSet{\alpha} \ldotp 0 \leq \varphi(x) \leq \alpha(x) \land 0 \leq \varphi(y) \leq \alpha(y))
\otimes
(\forall \inSet{\alpha} \ldotp 0 \leq \varphi(x) \leq \alpha(x) \land 0 \leq \varphi(y) \leq \alpha(y))
} \tag{Choice} \\
&\outline{\forall \inSet{\alpha} \ldotp 0 \leq \varphi(x) \leq \alpha(x) \land 0 \leq \varphi(y) \leq \alpha(y)} \tag{Cons}
\end{align*}
\caption{Third part of the proof.
This proof outline shows
$\forall \varphi \ldotp \shoare{Q_\varphi}{ \cifonly{i < k}{C_{body}} }{ Q_\varphi }$.
}
\label{fig:min-part3}
\end{figure}

\clearpage

\section{Synchronous Reasoning Over Different Branches}
\label{app:synchronous}
One thesis of this paper is that reasoning about how sets of states are affected by \emph{one} program command 
is powerful enough to reason about any program hyperproperty, which is supported by our completeness result (\thmref{thm:complete}).

However,
reasoning about (for example) two executions
of the same program sometimes boils down to reasoning about two executions of two \emph{different} but similar programs,
because of branching.
One \emph{a priori} appeal of relational program logics over \logic{} is thus the ability
to reason about two different branches \emph{synchronously}.

As an example, imagine that we want to reason about
$C' \triangleq \cnif{(\cassign{x}{x * 2} \cseq C)}{C}$.
Except for the assignment that happens only in one branch, the two branches are extremely similar.
In a relational program logic, we can exploit this similarity
by first reasoning about the assignment on its own,
and then reasoning about the two remaining branches $C$ and $C$ synchronously, since they are the same.

On the other hand, with the rule \namerule{If} from \figref{fig:rules},
we would have to reason about the two branches $\cassign{x}{x * 2} \cseq C$ and $C$
separately, even though they are closely related.

This is not a fundamental limitation of \logic{}.
We can indeed enable this kind of synchronous reasoning in \logic{},
by adding specialized rules, as illustrated by \propref{prop:synchronized-if} below.

Let us first define the following notation:
\begin{notation}
    \begin{align*}
    (A \otimes_{x=1,2} B)(S)
    \triangleq
    (
        A(\{ (l, \sigma) \mid (l, \sigma) \in S \land l(x) = 1 \})
        \land
        B(\{ (l, \sigma) \mid (l, \sigma) \in S \land l(x) = 2 \})
    )
    \end{align*}
\end{notation}

The assertion $A \otimes_{x=1,2} B$ holds in a set $S$
iff
the subset of all states in $S$ such that $l(x) = 1$ satisfies $A$,
and the subset of all states in $S$ such that $l(x) = 2$ must satisfy $B$.
We have proven in \isabelle{} the following synchronous rule:

\begin{proposition}\label{prop:synchronized-if}\textbf{Synchronous if rule.}
    If
    \begin{enumerate}
        \item $\hoare{P}{C_1}{P_1}$
        \item $\hoare{P}{C_2}{P_2}$
        \item $\hoare{P_1 \otimes_{x=1,2} P_2}{C}{R_1 \otimes_{x=1,2} R_2}$
        \item $\hoare{R_1}{C_1'}{Q_1}$
        \item $\hoare{R_2}{C_2'}{Q_2}$
        \item $x \notin \freevars{P_1} \cup \freevars{P_2} \cup \freevars{R_1} \cup \freevars{R_2}$
    \end{enumerate}
    Then
    $
    \hoare{P}{\cnif{(C_1 \cseq C \cseq C_1')}{(C_2 \cseq C \cseq C_2')}}{Q_1 \otimes Q_2}
    $.
\end{proposition}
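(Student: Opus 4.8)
The plan is to prove the statement semantically, straight from \defref{def:hoare_triples}, \defref{def:otimes}, and \lemref{lem:sem}, treating the logical variable $x$ as internal bookkeeping. First I would fix an arbitrary set $S$ of extended states with $\sat{S}{P}$ and, using \lemref{lem:sem}(5,6), rewrite the set of reachable states as
\[
\sem(\cnif{(C_1 \cseq C \cseq C_1')}{(C_2 \cseq C \cseq C_2')},\, S) = \sem(C_1', \sem(C, \sem(C_1, S))) \cup \sem(C_2', \sem(C, \sem(C_2, S))).
\]
Writing $S_1 \triangleq \sem(C_1, S)$ and $S_2 \triangleq \sem(C_2, S)$, hypotheses~(1) and~(2) give $\sat{S_1}{P_1}$ and $\sat{S_2}{P_2}$. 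It then suffices to prove $\sat{\sem(C, S_1)}{R_1}$ and $\sat{\sem(C, S_2)}{R_2}$: hypotheses~(4) and~(5) turn these into $\sat{\sem(C_1', \sem(C, S_1))}{Q_1}$ and $\sat{\sem(C_2', \sem(C, S_2))}{Q_2}$, and the conclusion then follows from \defref{def:otimes} with these two (possibly overlapping) sets as the witnesses for $Q_1 \otimes Q_2$.

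To reach hypothesis~(3), which expects a single set, I would merge $S_1$ and $S_2$ into one set that records its origin in $x$. Let $S_1^{1}$ be $S_1$ with the $x$-component of every logical state overwritten by the literal $1$, let $S_2^{2}$ be $S_2$ with it overwritten by $2$, and set $T \triangleq S_1^{1} \cup S_2^{2}$. Since $1 \neq 2$, the subset of $T$ on which $x$ equals $1$ is exactly $S_1^{1}$ and the subset on which $x$ equals $2$ is exactly $S_2^{2}$. Moreover $S_1$ and $S_1^{1}$ (resp.\ $S_2$ and $S_2^{2}$) agree up to logical updates of $x$, so the side condition~(6) — which says that $P_1,P_2,R_1,R_2$ do not depend on the value of the logical variable $x$ in any state — gives $\sat{S_1^{1}}{P_1}$ and $\sat{S_2^{2}}{P_2}$, i.e.\ $\sat{T}{P_1 \otimes_{x=1,2} P_2}$. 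Applying~(3) yields $\sat{\sem(C, T)}{R_1 \otimes_{x=1,2} R_2}$.

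Next I would push $\sem(C,\cdot)$ through the tagging. Because $C$ operates only on the program component and the extended semantics (\defref{def:extended_semantics}) transports logical states unchanged, executing $C$ commutes with overwriting $x$, so $\sem(C, S_i^{i})$ equals $\sem(C, S_i)$ with the $x$-component overwritten by $i$; combined with \lemref{lem:sem}(1) this gives $\sem(C, T) = \sem(C,S_1)^{1} \cup \sem(C,S_2)^{2}$, whose $x=1$- and $x=2$-subsets are $\sem(C,S_1)^{1}$ and $\sem(C,S_2)^{2}$. Hence $\sat{\sem(C,S_1)^{1}}{R_1}$ and $\sat{\sem(C,S_2)^{2}}{R_2}$, and stripping the tag via~(6) once more yields $\sat{\sem(C,S_1)}{R_1}$ and $\sat{\sem(C,S_2)}{R_2}$, which is exactly what remained.

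The main obstacle is justifying the two uses of~(6): one must show that a (syntactic) hyper-assertion in which the logical variable $x$ does not occur cannot distinguish two sets of states that differ only in the $x$-component — including the case where the rewrite collapses two states that previously differed only at $x$ — which is precisely a logical-update invariance property and is proved by a routine induction on the hyper-assertion, using that no hyper-expression mentions $x$. Everything else reduces to bookkeeping with \lemref{lem:sem} (decomposition over $\cseq$ and $\cnif$, commutation with unions) and to the straightforward observation that $\sem(C,\cdot)$ commutes with overwriting a logical variable.
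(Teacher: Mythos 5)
Your proof is correct, and since the paper only states that this proposition was verified in Isabelle without giving a textual argument, your direct semantic derivation is exactly the kind of proof the formalization carries out: decompose $\sem$ over $\cnif{\cdot}{\cdot}$ and $\cseq$ via \lemref{lem:sem}, tag the two intermediate sets with $x=1$ and $x=2$ to feed hypothesis~(3), and observe that $\sem(C,\cdot)$ commutes with overwriting a logical variable because the extended semantics transports logical states unchanged. The one point that genuinely needs care is the two uses of condition~(6): overwriting $x$ is not injective on states, so $S_i$ and $S_i^{i}$ need not be in bijection, and you correctly identify that what is required is invariance of $P_1,P_2,R_1,R_2$ under the paper's ``equivalence up to logical variables in $\{x\}$'' (the $\invariant{\{x\}}{\cdot}$ notion from the compositionality section), not merely pointwise insensitivity to $x$; reading $x \notin \freevars{\cdot}$ as guaranteeing exactly this invariance is the intended interpretation, and your proposed induction on syntactic hyper-assertions discharges it. With that in place every remaining step checks out.
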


This proposition shows how to reason synchronously
about the program command
$\cnif{(C_1 \cseq C \cseq C_1')}{(C_2 \cseq C \cseq C_2')}$.
Points 1) and 2) show that we can reason independently
about the different parts of the branches $C_1$ and $C_2$.
Point 3) then shows how we can reason synchronously about the execution of $C$ in both branches.
Finally, points 4) and 5) show how to go back to reasoning independently about each branch.

\clearpage

\clearpage
}{}

\end{document}